\documentclass[onefignum,onetabnum]{siamonline220329}

\usepackage{lipsum}
\usepackage{amsfonts,amsmath,amssymb}
\usepackage{graphicx}
\usepackage{epstopdf}
\usepackage{algorithmic}
\usepackage{nicefrac}
\usepackage{microtype}
\usepackage{booktabs}
\usepackage{tikz}
\usepackage{pgfplots}
\usepackage{tikz-3dplot}
\usepackage{siunitx}
\usepackage{subfig}
\usetikzlibrary{3d}
\ifpdf
  \DeclareGraphicsExtensions{.eps,.pdf,.png,.jpg}
\else
  \DeclareGraphicsExtensions{.eps}
\fi
 \pgfplotsset{compat=1.18}
\usepackage{enumitem}
\setlist[enumerate]{leftmargin=.5in}
\setlist[itemize]{leftmargin=.5in}

\newsiamremark{remark}{Remark}
\newsiamremark{hypothesis}{Hypothesis}
\crefname{hypothesis}{Hypothesis}{Hypotheses}
\newsiamthm{claim}{Claim}
\newsiamthm{example}{Example}

\headers{Enforcing Katz and PageRank}{S. Cipolla, F. Durastante, B. Meini}

\title{Enforcing Katz and PageRank Centrality Measures in Complex Networks\thanks{Submitted to the editors \today.
\funding{This work has been partially supported by:  Spoke 1 ``FutureHPC \& BigData''  of the
Italian Research Center on High-Performance Computing, Big Data and Quantum%
Computing (ICSC)  funded by MUR Missione 4 Componente 2 Investimento 1.4:%
Potenziamento strutture di ricerca e creazione di ``campioni nazionali di R\&S 
(M4C2-19)'' - Next Generation EU (NGEU); by the 
``INdAM – GNCS Project: Metodi di riduzione di modello ed approssimazioni di rango basso per problemi alto-dimensionali'' code CUP\_E53C23001670001; and by the PRIN project ``Low-rank Structures and Numerical Methods in Matrix and Tensor Computations and their Application'' code 20227PCCKZ. The authors are member of the INdAM GNCS group.}}}

\author{Stefano Cipolla\thanks{School of Mathematical Sciences, University of Southampton, Southampton, UK
  (\email{s.cipolla@soton.ac.uk}, \url{stefanocipolla.github.io}).}
\and Fabio Durastante\thanks{Department of Mathematics, University of Pisa, Pisa, IT 
  (\email{fabio.durastante@unipi.it}, \url{fdurastante.github.io}).}
\and Beatrice Meini\thanks{Department of Mathematics, University of Pisa, Pisa, IT (\email{beatrice.meini@unipi.it}).}}

\usepackage{amsopn}

\usepackage{listings}
\usepackage{matlab-prettifier}

\ifpdf
\hypersetup{
  pdftitle={Enforcing Katz and PageRank Centrality Measures in Complex Networks},
  pdfauthor={Stefano Cipolla, Fabio Durastante, Beatrice Meini}
}
\fi

\usepackage{multirow}
\usepackage{lscape}

\begin{document}

\maketitle

\begin{abstract}
We investigate the problem of  enforcing a desired centrality measure in complex networks, while still keeping the original pattern of the network. Specifically, by representing the network as a graph with suitable nodes and weighted edges, we focus on computing the smallest perturbation on the weights required to obtain a prescribed PageRank or Katz centrality index for the nodes. 
Our approach relies on optimization procedures that scale with the number of modified edges, enabling the exploration of different scenarios and altering network structure and dynamics.
\end{abstract}

\begin{keywords}
Complex Network, Centrality, Katz, PageRank, Optimization
\end{keywords}

\begin{MSCcodes}
05C82, 90C20, 05C81
\end{MSCcodes}

\section{Introduction}\label{sec:introduction}
A centrality measure  quantifies the relative importance of a node in a complex network. It involves computing a centrality score for each node using various definitions of importance, among the most widely used are the Katz~\cite{Katz195339}, total communicability~\cite{10.1093/comnet/cnt007}, eigenvector~\cite{landau1895relativen,Schoenberg1969}, PageRank~\cite{Page1998107}, and matrix-function based centrality indexes~\cite{MR2736969}. These scores represent the node's influence within the network, assuming that nodes with higher centrality scores are considered more central or influential. Network analysis applications of centrality measures include identifying central decision-makers, analyzing network dynamics, and studying the spread of information or epidemics.

In this work, we analyse the problem of modifying
the connections of a given complex network to produce a perturbed network having assigned target Katz or PageRank centralities of the nodes.
{For example, enhancing a specific node's Katz or PageRank centrality may be desirable to increase its influence within the network or to reduce the centrality of the most dominant nodes.}
This could be achieved by promoting connections to or from suitable nodes or by altering their attributes. 
However, it would be desirable to modify the network not  to alter its topology and to be as small as possible.  

Problems concerning the modification, and often the optimization, of several network measures have already been addressed in the literature. For example, in~\cite{MR3439773},  updating and downdating strategies are proposed to optimize the communicability of the network. The problem of minimizing the spectral radius of the adjacency matrix  of the underlying network by either low-rank updates or optimization procedures has been addressed in~\cite{doi:10.1137/1.9781611974010.64,PhysRevE.84.016101,7373366}. 
In~\cite{massei2023optimizing} the problem of attaining the maximal increase/reduction of a global robustness measure of a complex network through edge weight modifications is studied. 
{The problem of modifying the adjacency matrix of a graph or a stochastic matrix, in order to have a given  Perron vector,  is addressed in~\cite{benzi2025,berkhout-heidergott-vandooren,  Chan2016, 10.1145/3018661.3018703,gillis2024assigning, Nicosia2012}. Our work is primarily inspired by \cite{gillis2024assigning}, where the authors seek the minimal perturbation of a stochastic matrix’s entries needed to obtain a given stationary distribution while preserving its sparsity structure. Additionally, a closely related problem—determining the smallest perturbation to a graph’s adjacency matrix to achieve a specified eigenvector centrality—has been examined in~\cite{benzi2025}.}

Our objective is to determine the minimum norm perturbation of the adjacency matrix of a graph to enforce a desired  centrality, where the centrality is either the Katz or the PageRank measure. Moreover, we investigate also the case in which we are required to perturb only a subset of the existing edges to reduce the modification to the network's topology. Indeed, for the Katz centrality we show that, given a graph and a vector that represents the desired measure of all the nodes, we may find a modification of a subset of the existing edges so that the nodes of the modified network have the desired centrality.
{For the PageRank centrality, the desired measure can be achieved by modifying existing edges and, if necessary, by adding self-loops, as in the approach in~\cite{gillis2024assigning}. For both the Katz and the PageRank  measure, we seek, among all admissible perturbations, one that minimizes a convex combination of the Frobenius norm and the 1-norm. This approach contrasts with that in~\cite{gillis2024assigning}, which has a single norm as objective; here, the Frobenius norm quantifies the average magnitude of the perturbation, while the 1-norm promotes sparsity~\cite{MR1379242}.
Although the PageRank measure can be interpreted as the stationary distribution of an appropriately defined stochastic matrix, our method differs from that of~\cite{gillis2024assigning} in that we modify the underlying graph structure rather than the stochastic matrix in the PageRank model. In the context of PageRank, our methodology is more closely related to the \emph{PageRank Optimization Problem}~\cite{MR3175058,MR3006713}, where the goal is to maximize (or minimize) the PageRank of a node by strategically adding or removing links from a given subset.}

{The desired perturbation of the adjacency matrix, both for the Katz and for the PageRank measure, is expressed as the solution of a suitable Quadratic Programming (QP) problem.}
In the two cases, the specific structure of the problem is exploited in the {numerical} solution of the QP problem by the Interior Point Method. 
We use scalable optimization procedures, and our algorithm code and examples are available in a dedicated \texttt{GitHub} repository.

The paper is organized as follows. Section~\ref{sec:prel} provides basic definitions of complex networks and centrality measures. Sections~\ref{sec:katz} and~\ref{sec:pagerank} analyze problems related to Katz and PageRank centrality measures, respectively. 
Section~\ref{sec:numerics} presents numerical tests validating our strategies on various real networks from both modeling and algorithmic perspectives. Finally, Section~\ref{sec:conclusions} concludes and suggests future work on other centrality measures.

\subsection{Notation} Given any matrix $A \in \mathbb{R}^{n \times n}$ we denote by 
\[
\Omega_A = \{(i,j) \hbox{ s.t. } 1\leq i,j \leq n,\; [A]_{i,j} \neq 0 \}  \hbox{ and }  {s_{A}=|\Omega_A|},
\]
and by $\mathbb{S}(A)$ the set of matrices
\[
    \mathbb{S}(A) = \{ B \in \mathbb{R}^{n \times n} \,\text{ s.t. } [B]_{i,j} \neq 0 \,\Rightarrow\, (i,j) \in \Omega_A \}.
\]
Furthermore, we denote by $\|\cdot\|_F$ the Frobenius matrix norm, by $\|\cdot\|_1$ the $1$-norm for vectors together with the corresponding induced matrix norm, 
and by $\rho(A)$ the spectral radius of the matrix $A$. We also denote with $\mathbf{1}$ the vector of all ones, with $D_{\mathbf{x}} = \operatorname{diag}(\mathbf{x})$ the diagonal matrix having $\mathbf{x} \in \mathbb{R}^{n}$ on the main diagonal, with $\operatorname{vec}(\cdot)$ the column-wise vectorization of a matrix, while Kronecker and Hadamard products are denoted by the usual ``$\otimes$'' and ``$\circ$'' symbols. To denote the vertical concatenation of vectors we employ the notation $(\mathbf{x};\mathbf{y})$ while horizontal concatenation is denoted as $(\mathbf{x}^{\top},\mathbf{y}^{\top})$.
A real matrix $A$ is said to be nonnegative if $[A]_{i,j}\ge 0$ for any $i,j$. A nonnegative matrix is stochastic if $A\mathbf{1}=\mathbf{1}$.
If $A$ is an irreducible stochastic matrix, the unique vector $\boldsymbol{\pi}$ such that $\boldsymbol{\pi}^{\top}=\boldsymbol{\pi}^{\top}A$ and $\boldsymbol{\pi}^{\top}\mathbf{1}=1$ is called the stationary vector of $A$.

\section{Preliminaries on complex networks and centralities}\label{sec:prel} 
A directed graph is a pair $\mathcal{G} = (V,E)$, where
$V = \{v_1, \ldots, v_n\}$ is a set of $n = |V|$ nodes (or vertices), and $E \subseteq V \times V$ is a set of ordered pairs of nodes
called edges. A weighted directed graph $\mathcal{G} = (V, E, w)$ is obtained by considering a nonnegative weight function $w:V\times V \rightarrow \mathbb{R}^+$ such that $w(v_i,v_j) = w_{i,j} > 0$ if and only
if $(v_i, v_j)$ is an edge of $\mathcal{G}$. For every node $v \in V$, the degree $\deg(v)$ of $v$ is the number of edges insisting on $v$, taking into account their weights:
\begin{equation}\label{eq:definition_degree}
d_i = \deg(v_i) = \sum_{j \, : \, (v_i, v_j) \in E} w_{i,j}.
\end{equation}
The adjacency matrix $A$ is the $n \times n$ matrix with elements
\begin{equation*}
[A]_{i,j} = a_{i,j} = w(v_i,v_j).
\end{equation*}
The degree matrix $D$ is the diagonal matrix whose entries are given by the degrees of the nodes, i.e.,
\begin{equation}\label{eq:definition_degree_matrix}
D  = \operatorname{diag}(d_1, \ldots, d_n) = \operatorname{diag}(A\mathbf{1}).
\end{equation}

If the ordering of the vertices in the edges in $E$ is not relevant,
i.e., if each edge can be traversed both ways, we move from directed
graphs to undirected graphs. An undirected graph is a pair $\mathcal{G} = (V, E)$, where the  set of edges $E$ is such that if $(v_i, v_j) \in E$, then $(v_j, v_i) \in E$ for all $i, j$. A weighted undirected graph $\mathcal{G} = (V, E, w)$ is then obtained by considering a symmetric weight function $w$ with nonnegative values $w(v_i,v_j) \geq 0$ such that $w(v_i,v_j) > 0$ if and only if $(v_i, v_j)$ is an edge of $\mathcal{G}$. The adjacency matrix $A$ of an undirected graph $\mathcal{G}$ is  symmetric.

For any two nodes $u, v \in V$ in a graph $\mathcal{G} = (V, E)$, a walk from $u$ to $v$ is an ordered sequence of nodes $(v_0, v_1, \ldots, v_k)$ such that $v_0 = u$, $v_k = v$, and $(v_i, v_{i+1}) \in E$ for all $i = 0, \ldots, k-1$. The integer $k$ is the length of the walk. An undirected graph $\mathcal{G}$ is connected if for any two distinct nodes $u, v \in V$, there is a walk between $u$ and $v$. A directed graph $\mathcal{G}$ is strongly connected if for any two distinct nodes $u, v \in V$, there is a  walk from $u$ to $v$.

A centrality measure on the graph is a function that assigns to each node, possibly to each edge, an importance value and which allows the nodes, respectively the edges, to be ordered in a ranking. Of the set of available measures, we are interested here in two particular measures: the Katz centrality index~\cite{Katz195339} and PageRank~\cite{MR3376760,Page1998107}. For the sake of readability, we will briefly recall related definitions in the following sections.

\subsection{Katz Centrality}\label{sec:katz-definition}
Let $\mathcal{G} = (V, E)$ be a graph with adjacency matrix $A$.
Given $\alpha>0$ such that 
$\alpha\rho(A)<1$, the Katz centrality~\cite{Katz195339}  of node $v_i$ is the $i$th entry of the vector $\boldsymbol{\mu}=(I-\alpha A)^{-1}\mathbf{1}$. Since $A\ge 0$, then $\boldsymbol{\mu}\ge \mathbf{1}$. We will write that $\mathcal G$ has Katz centrality score $\boldsymbol{\mu}$.

\subsection{PageRank}\label{sec:pagerank-definition}
Let $\mathcal{G} = (V, E)$ be a graph with adjacency matrix $A$.
Given  $\alpha \in (0,1)$ a \emph{teleportation parameter}, and given $\mathbf{t} >\mathbf{0}$ a \emph{personalization} vector such that $\mathbf{t}^{\top}\mathbf{1} = 1$,
the PageRank~\cite{MR3376760,Page1998107} centrality of node $v_i$ is the $i$th entry of the stationary vector $\boldsymbol{\pi}$ of the stochastic matrix $G$, where
\begin{equation}\label{eq:G}
    G = \alpha D^{-1}A +(1-\alpha ) \mathbf{1} \mathbf{t}^{\top}.
\end{equation}
 In other words,
$\boldsymbol{\pi}$
 solves  the equations
\begin{equation}\label{eq:pagerankvector}
 G^\top \boldsymbol{\pi} = \boldsymbol{\pi}, \quad \boldsymbol{\pi}^{\top}\mathbf{1} = 1,
\end{equation}
or, equivalently,
 the nonsingular linear system
\begin{equation}\label{eq:pagerankvector2}
(I-\alpha(D^{-1}A)^{\top} )\boldsymbol{\pi}=(1-\alpha)\mathbf{t}.
\end{equation}
 We will write that $\mathcal G$ has PageRank centrality score $\boldsymbol{\pi}$.

\section{Enforcing Katz Centralities}
\label{sec:katz}
Now suppose we have a graph $\mathcal G$ having
 Katz centrality score $\boldsymbol{\mu}$, as defined in Section~\ref{sec:katz-definition}, for a given parameter {$\alpha>0$ such that $\alpha\rho(A)<1$}. We consider the problem of determining the smallest perturbation $\Delta$ of the adjacency matrix $A$, without modifying its null entries, that allows us to force the Katz centrality to become the vector $\widehat{\boldsymbol{\mu}}\ge \mathbf{1}$. {Given $\beta \in (0, 1]$,} such a problem can be formulated as the following optimization problem
\begin{equation} \label{eq:Katz_1_alpha_beta}
\mathcal{P}^{\text{Katz}}_{\alpha,\beta} \, : \, \begin{array}{rl}
         \displaystyle \min_{\Delta \in \mathbb{S}(A) } & \displaystyle J(\Delta) =\beta \|\Delta\|_F^2 + (1-\beta) \|\Delta \|_1,  \\
        \text{s.t.} & (I - \alpha(A + \Delta))^{-1} \mathbf{1} = \boldsymbol{\widehat{\mu}}, \\
        & A + \Delta \geq 0.
    \end{array}
\end{equation}
The objective function measures the amplitude in the norm of the perturbation. It is obtained as the convex combination of the perturbation amplitude in the Frobenius norm, which measures the amount of weight/energy we must spend to alter the network's connections {and of} the perturbation in norm $\|\cdot\|_1$, which penalizes dense solutions, i.e., tries to localize the perturbation by making it sparse~\cite{MR1379242}.
The constraints, on the other hand, impose that the new perturbed network has the desired centrality vector and that the weights of the edges remain nonnegative.

{The following result proves that when $A\mathbf{1}> \mathbf{0}$, then problem \eqref{eq:Katz_1_alpha_beta} has always a non-empty feasible set.}

\begin{proposition}\label{pro:feasibility-conditions}
Given $\boldsymbol{\widehat{\mu}}\ge \mathbf{1}$,  $A\ge 0$ such that $A\mathbf{1}> \mathbf{0}$, and $\alpha>0$ such that $\alpha\rho(A)<1$, then the set of matrices $\Delta\in\mathbb{S}(A)$ such
that  $(I - \alpha(A + \Delta))\boldsymbol{\widehat{\mu}} - \mathbf{1} = \mathbf{0}$ and
$A+\Delta\ge 0$  is non-empty. Moreover, for any such matrix $\Delta$ we have $\alpha \rho(A+\Delta)<1$.
\end{proposition}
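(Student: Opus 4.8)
The plan is to reformulate the equality constraint so that it decouples over the rows of the perturbed matrix, exhibit an explicit feasible point to settle non-emptiness, and then derive the spectral bound from a Perron--Frobenius (Collatz--Wielandt) estimate. First I would set $M = A + \Delta$ and rewrite $(I - \alpha M)\boldsymbol{\widehat{\mu}} = \mathbf{1}$ as $\alpha M \boldsymbol{\widehat{\mu}} = \boldsymbol{\widehat{\mu}} - \mathbf{1}$, equivalently $M\boldsymbol{\widehat{\mu}} = \tfrac{1}{\alpha}(\boldsymbol{\widehat{\mu}}-\mathbf{1}) =: \mathbf{b}$. Because $\boldsymbol{\widehat{\mu}}\ge\mathbf{1}$, the vector $\mathbf{b}$ is nonnegative. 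The requirements $\Delta\in\mathbb{S}(A)$ and $A+\Delta\ge 0$ are exactly the statement that $M$ is a nonnegative matrix whose support is contained in $\Omega_A$, so the feasibility question becomes: does there exist a nonnegative $M$ supported on $\Omega_A$ with $M\boldsymbol{\widehat{\mu}} = \mathbf{b}$?

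For existence I would exploit that this linear system is row-separable: row $i$ only constrains the entries $m_{i,j}$ with $(i,j)\in\Omega_A$, through $\sum_{j} m_{i,j}\widehat{\mu}_j = b_i$. The hypothesis $A\mathbf{1}>\mathbf{0}$ guarantees that for every $i$ this index set is non-empty; picking any such column $j_0$ and setting $m_{i,j_0} = b_i/\widehat{\mu}_{j_0}$ (which is $\ge 0$ since $b_i\ge 0$ and $\widehat{\mu}_{j_0}\ge 1>0$) with all remaining entries of that row equal to zero produces a valid nonnegative row with the correct dot product. Assembling these rows gives an $M$ with the desired sign and support, and $\Delta := M - A$ then lies in $\mathbb{S}(A)$ and is feasible. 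This is precisely the step where $A\mathbf{1}>\mathbf{0}$ is indispensable: without a nonzero entry in each row, a row index with $b_i>0$ could not be realized while respecting the prescribed sparsity pattern.

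For the spectral bound, given any feasible $\Delta$ I would read off from $\alpha M\boldsymbol{\widehat{\mu}} = \boldsymbol{\widehat{\mu}}-\mathbf{1}$ the componentwise identity $(\alpha M\boldsymbol{\widehat{\mu}})_i/\widehat{\mu}_i = 1 - 1/\widehat{\mu}_i$. Since each $\widehat{\mu}_i\ge 1$, every such ratio lies in $[0,1)$, so their maximum over the finitely many indices is \emph{strictly} below $1$. Applying the Collatz--Wielandt upper bound for a nonnegative matrix with the positive test vector $\boldsymbol{\widehat{\mu}}>\mathbf{0}$, namely $\rho(\alpha M)\le \max_i (\alpha M\boldsymbol{\widehat{\mu}})_i/\widehat{\mu}_i$, yields $\rho(\alpha M)<1$ and hence $\rho(A+\Delta)=\rho(M)<1/\alpha$. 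The underlying tool is the standard monotonicity estimate: with $c=\max_i (M\boldsymbol{\widehat{\mu}})_i/\widehat{\mu}_i$ one has $M\boldsymbol{\widehat{\mu}}\le c\,\boldsymbol{\widehat{\mu}}$, whence $M^k\boldsymbol{\widehat{\mu}}\le c^k\boldsymbol{\widehat{\mu}}$ for all $k$, and taking $k$-th roots gives $\rho(M)\le c$. The only genuine subtlety is the strictness of the inequality, which comes from $\mathbf{1}>\mathbf{0}$ together with the finiteness of the index set rather than from any deeper argument.
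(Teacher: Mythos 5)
Your proof is correct and follows essentially the same route as the paper's: feasibility via an explicit row-by-row construction made possible by $A\mathbf{1}>\mathbf{0}$ (you concentrate the required mass $\alpha^{-1}(\widehat{\mu}_i-1)$ on a single admissible entry per row, while the paper spreads it evenly over $\mathbb{S}_i$), and the spectral bound via the weighted row-sum estimate with test vector $\boldsymbol{\widehat{\mu}}$. Your appeal to the Collatz--Wielandt bound is the same argument the paper makes explicit through the similarity $D_{\boldsymbol{\widehat{\mu}}}^{-1}(A+\Delta)D_{\boldsymbol{\widehat{\mu}}}$ and the $\infty$-norm, so the two proofs differ only cosmetically.
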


\begin{proof} For each row $i\in\{1,\ldots,n\}$ of $A$, define $\mathbb{S}_i:=\{ j \in \{1,\ldots,n\} \hbox{ s.t. } a_{ij} \neq 0  \}  $.  Since $A\mathbf{1}> \mathbf{0}$, then $\mathbb{S}_i \neq  \emptyset $ for all $i=1,\dots,n$. By definition $a_{ij}=0$ if $j \notin \mathbb{S}_i$.
Then, by choosing the entries $\delta_{ij}$ of $\Delta$ such that
$\delta_{ij}=0$ if $j\not\in \mathbb{S}_i$, and
$a_{ij}+\delta_{ij}=\widehat \mu_j^{-1} \alpha^{-1}|\mathbb{S}_i|^{-1}(\widehat\mu_i-1)$ if $j\in\mathbb{S}_i$, we obtain $A + \Delta\ge 0$ and
\[
\sum_{j=1}^n (a_{ij}+\delta_{ij})\widehat\mu_j=
\sum_{j\in\mathbb{S}_i} (a_{ij}+\delta_{ij})\widehat\mu_j=\sum_{j\in\mathbb{S}_i} \alpha^{-1}|\mathbb{S}_i|^{-1}(\widehat\mu_i-1)   =   \alpha^{-1}(\widehat\mu_i-1),
\]
i.e., $\alpha(A + \Delta)\boldsymbol{\widehat{\mu}} = \boldsymbol{\widehat{\mu}} - \mathbf{1}$. Concerning the spectral radius of $A+\Delta$, since
$\alpha (A+\Delta){\boldsymbol{\widehat{\mu}}} = {\boldsymbol{\widehat{\mu}}} - \mathbf{1}$, we find that
\[\alpha D_{\boldsymbol{\widehat{\mu}}}^{-1} (A+\Delta) D_{\boldsymbol{\widehat{\mu}}} \mathbf{1} = \mathbf{1} - D_{\boldsymbol{\widehat{\mu}}}^{-1} \mathbf{1}.\] Since $A+\Delta\ge 0$, we have
\[
\alpha \| D_{\boldsymbol{\widehat{\mu}}}^{-1} (A+\Delta) D_{\boldsymbol{\widehat{\mu}}} \|_\infty = \| \mathbf{1} - D_{\boldsymbol{\widehat{\mu}}}^{-1} \mathbf{1}\|_\infty=1-\min_j \widehat{\mu}_j^{-1}<1,
\]
 therefore $\rho(A+\Delta)<1/\alpha$, which concludes the proof.
\end{proof}

If $A$ is nonnegative and irreducible, the condition $A \mathbf{1} >\mathbf{0}$ is automatically satisfied.

In light of Proposition~\ref{pro:feasibility-conditions}, we {have that the problem in~\eqref{eq:Katz_1_alpha_beta} is well defined and can be written, hence,} in the equivalent form
\begin{equation}\label{eq:Katz_1_alpha_1}
\mathcal{P}^{\text{Katz}}_{\alpha,\beta} \, : \, \begin{array}{rl}
         \displaystyle \min_{\Delta \in \mathbb{S}(A) } & \displaystyle J(\Delta) = \beta \|\Delta\|_F^2 + (1-\beta) \|\Delta \|_1,  \\
        \text{s.t.} & (I - \alpha(A + \Delta))\boldsymbol{\widehat{\mu}} - \mathbf{1} = \mathbf{0}, \\
        & A + \Delta \geq 0.
    \end{array}
\end{equation}

\begin{remark}\label{remark:generalpattern}
{It is important to note that the} same feasibility result of Proposition~\ref{pro:feasibility-conditions}  holds if $\Delta \in \mathbb{S}(M)$, for any $M \in \mathbb{R}^{n \times n} $ such that $M \geq 0$, $M\mathbf{1} > \mathbf{0}$. {Hence, we are able to characterize the admissible modification patterns for the network so that a centrality vector $\boldsymbol{\widehat{\mu}}$ can be obtained.}
\end{remark}

{The following proposition gives} lower bounds on the norms of the perturbation $\Delta$ as a function of the difference between $\boldsymbol{\widehat{\mu}}$ and $\boldsymbol{{\mu}}$:

\begin{proposition}\label{prop:bound_delta}
Assume that $\boldsymbol{{\mu}} = (I - \alpha A )^{-1} \mathbf{1}$ and
$\boldsymbol{\widehat{\mu}} = (I - \alpha(A + \Delta))^{-1} \mathbf{1}$.  Then
\begin{align}
& \| \Delta \|_F \ge \alpha^{-1} \frac{\| \mathbf{1} - D^{-1}_{\boldsymbol{\widehat{\mu}}} \boldsymbol{\mu} \|_F  }{\| (I - \alpha A )^{-1} \|_F \sqrt{n}} \cdot \frac{\min_i \widehat\mu_i}{\max_i \widehat\mu_i},\label{eq:norm2}\\
& \| \Delta \|_1 \ge \alpha^{-1} \frac{\| \mathbf{1} - D^{-1}_{\boldsymbol{\widehat{\mu}}} \boldsymbol{\mu} \|_1  }{\| (I - \alpha A )^{-1} \|_1 n} \cdot \frac{\min_i \widehat\mu_i}{\max_i \widehat\mu_i},\label{eq:norm1}\\
&    \| \Delta\| \ge \alpha^{-1} \frac{\|\boldsymbol{\widehat{\mu}}-\boldsymbol{{\mu}}\|}{ \|(I - \alpha A )^{-1}\| \|\boldsymbol{\widehat{\mu}}\|},\label{eq:anynorm}
\end{align}
where $\|\cdot\|$ is any consistent norm.
\end{proposition}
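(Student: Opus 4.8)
The plan is to reduce all three inequalities to a single resolvent identity linking $\boldsymbol{\widehat{\mu}}$ and $\boldsymbol{\mu}$, and then to read off each bound by taking the appropriate norm. First I would start from the two defining relations $(I-\alpha A)\boldsymbol{\mu}=\mathbf{1}$ and $(I-\alpha(A+\Delta))\boldsymbol{\widehat{\mu}}=\mathbf{1}$. Rewriting the second one as $(I-\alpha A)\boldsymbol{\widehat{\mu}}=\mathbf{1}+\alpha\Delta\boldsymbol{\widehat{\mu}}$ and applying $(I-\alpha A)^{-1}$ (which exists since $\rho(A)<1/\alpha$) yields the key identity
\begin{equation}\label{eq:key-identity-plan}
\boldsymbol{\widehat{\mu}}-\boldsymbol{\mu}=\alpha\,(I-\alpha A)^{-1}\Delta\,\boldsymbol{\widehat{\mu}},
\end{equation}
where I used $(I-\alpha A)^{-1}\mathbf{1}=\boldsymbol{\mu}$. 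Every bound then follows from estimating the right-hand side of~\eqref{eq:key-identity-plan}.

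For the generic consistent-norm bound~\eqref{eq:anynorm} I would simply take norms in~\eqref{eq:key-identity-plan} and invoke submultiplicativity, obtaining $\|\boldsymbol{\widehat{\mu}}-\boldsymbol{\mu}\|\le \alpha\,\|(I-\alpha A)^{-1}\|\,\|\Delta\|\,\|\boldsymbol{\widehat{\mu}}\|$, and then rearrange to isolate $\|\Delta\|$.

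For the Frobenius and $1$-norm bounds the extra factor $\min_i\widehat\mu_i/\max_i\widehat\mu_i$ has to be produced, so I would left-multiply~\eqref{eq:key-identity-plan} by $D_{\boldsymbol{\widehat{\mu}}}^{-1}$ and write $\Delta\boldsymbol{\widehat{\mu}}=\Delta D_{\boldsymbol{\widehat{\mu}}}\mathbf{1}$, obtaining
\begin{equation}\label{eq:scaled-identity-plan}
\mathbf{1}-D_{\boldsymbol{\widehat{\mu}}}^{-1}\boldsymbol{\mu}=\alpha\,D_{\boldsymbol{\widehat{\mu}}}^{-1}(I-\alpha A)^{-1}\Delta\,D_{\boldsymbol{\widehat{\mu}}}\mathbf{1},
\end{equation}
where $D_{\boldsymbol{\widehat{\mu}}}^{-1}\boldsymbol{\widehat{\mu}}=\mathbf{1}$. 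Taking the Frobenius (equivalently Euclidean) norm and chaining submultiplicativity through the four factors gives $\|\mathbf{1}-D_{\boldsymbol{\widehat{\mu}}}^{-1}\boldsymbol{\mu}\|_F\le \alpha\,\|D_{\boldsymbol{\widehat{\mu}}}^{-1}\|\,\|(I-\alpha A)^{-1}\|_F\,\|\Delta\|_F\,\|D_{\boldsymbol{\widehat{\mu}}}\|\,\|\mathbf{1}\|_F$. For the diagonal factors one has $\|D_{\boldsymbol{\widehat{\mu}}}^{-1}\|=1/\min_i\widehat\mu_i$ and $\|D_{\boldsymbol{\widehat{\mu}}}\|=\max_i\widehat\mu_i$, while $\|\mathbf{1}\|_F=\sqrt{n}$ and the induced/spectral norms of $(I-\alpha A)^{-1}$ and $\Delta$ are dominated by their Frobenius norms; collecting terms and solving for $\|\Delta\|_F$ reproduces~\eqref{eq:norm2}. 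The estimate~\eqref{eq:norm1} is identical in structure: the same identity~\eqref{eq:scaled-identity-plan} is bounded in the vector/matrix $1$-norm, using $\|\mathbf{1}\|_1=n$ together with the diagonal-matrix values $\|D_{\boldsymbol{\widehat{\mu}}}^{-1}\|_1=1/\min_i\widehat\mu_i$ and $\|D_{\boldsymbol{\widehat{\mu}}}\|_1=\max_i\widehat\mu_i$.

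I expect the only delicate points to be bookkeeping rather than conceptual. The main thing to get right is the placement of the diagonal scalings: rewriting $\Delta\boldsymbol{\widehat{\mu}}$ as $\Delta D_{\boldsymbol{\widehat{\mu}}}\mathbf{1}$ is precisely what isolates the constant vector $\mathbf{1}$ (responsible for the $\sqrt{n}$ and $n$ factors) and sandwiches $\Delta$ between $D_{\boldsymbol{\widehat{\mu}}}^{-1}$ and $D_{\boldsymbol{\widehat{\mu}}}$, whose norms combine into the ratio $\max_i\widehat\mu_i/\min_i\widehat\mu_i$; without this rewriting the target factor does not appear. A secondary care point is mixing the norms consistently, namely bounding the induced norms of $(I-\alpha A)^{-1}$ and $\Delta$ by their Frobenius (respectively $1$-) norms, so that the final statement is homogeneous in a single norm.
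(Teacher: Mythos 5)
Your proposal is correct and follows essentially the same route as the paper's proof: the same key identity $\boldsymbol{\widehat{\mu}}-\boldsymbol{\mu}=\alpha(I-\alpha A)^{-1}\Delta\boldsymbol{\widehat{\mu}}$ (obtained there by subtracting the two defining relations), submultiplicativity for the generic bound, and the same diagonal scaling $D_{\boldsymbol{\widehat{\mu}}}^{-1}(\cdot)D_{\boldsymbol{\widehat{\mu}}}\mathbf{1}$ trick to produce the $\min_i\widehat\mu_i/\max_i\widehat\mu_i$ factor in the Frobenius and $1$-norm bounds. Your explicit bookkeeping of which norm is applied to each factor (spectral norm on the diagonal matrices, then dominating the spectral norms of $(I-\alpha A)^{-1}$ and $\Delta$ by their Frobenius norms) is in fact slightly more careful than the paper's one-line estimate, but it is the same argument.
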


\begin{proof}
By subtracting the two equalities we find that
\begin{equation}\label{eq:deltamuhat}
\alpha(I-\alpha A)^{-1}\Delta \boldsymbol{\widehat{\mu}}=\boldsymbol{\widehat{\mu}} - \boldsymbol{{\mu}},
\end{equation}
from which we obtain
\[
\alpha^{-1} \| \boldsymbol{\widehat{\mu}} - \boldsymbol{{\mu}}  \| = \| (I-\alpha A)^{-1} \Delta \boldsymbol{\widehat{\mu}} \| \le \| (I-\alpha A)^{-1} \| \| \Delta \| \cdot \| \boldsymbol{\widehat{\mu}} \|,
\]
that leads to \eqref{eq:anynorm}. By multiplying \eqref{eq:deltamuhat} by $D^{-1}_{\boldsymbol{\widehat{\mu}}}$, we obtain
\[
D^{-1}_{\boldsymbol{\widehat{\mu}}} (I-\alpha A)^{-1} \Delta D_{\boldsymbol{\widehat{\mu}}} \mathbf{1}= \alpha^{-1}(\mathbf{1}-
D^{-1}_{\boldsymbol{\widehat{\mu}}} {\boldsymbol{{\mu}}} ).
\]
Therefore
\[
\alpha^{-1} \| \mathbf{1}-
D^{-1}_{\boldsymbol{\widehat{\mu}}} {\boldsymbol{{\mu}}} \|_F \le \| D^{-1}_{\boldsymbol{\widehat{\mu}}} (I-\alpha A)^{-1} \Delta D_{\boldsymbol{\widehat{\mu}}} \|_F \cdot \| \mathbf{1} \|_F \le  \sqrt{n}\cdot \| \Delta \|_F \cdot \| (I-\alpha A)^{-1} \|_F \frac{\max_i \widehat\mu_i}{\min_i \widehat\mu_i},
\]
from which we obtain \eqref{eq:norm2}. Similarly, we proceed for \eqref{eq:norm1}.
\end{proof}

\subsection{Formulation as a Quadratic Programming (QP) problem}\label{sec:QP-Katz} Aiming at efficient solutions of problem \eqref{eq:Katz_1_alpha_1},  it is useful to reformulate such problem in the form of a Quadratic Programming (QP) problem; to this end, we use a generic pattern matrix $M \in \{0,1\}^{n \times n}$ such that $\Omega_M \subseteq \Omega_A$, {$M\mathbf{1}>\mathbf{0}$ (see Remark \ref{remark:generalpattern}).}  Problem~\eqref{eq:Katz_1_alpha_1} can be written as
\begin{equation}\label{eq:p1katz}
        \begin{array}{rl}
         \displaystyle \min_{\Delta \in \mathbb{S}(M) } & \displaystyle J(\Delta) = \beta \|\Delta\|_F^2 + (1-\beta) \|\Delta \|_1,  \\
        \text{s.t.} &  (I - \alpha(A + \Delta))\boldsymbol{\widehat{\mu}} - \mathbf{1} = \mathbf{0}, \\
        & -(A+\Delta) \leq 0.
    \end{array}  
\end{equation}
 As noted in \cite{gillis2024assigning}, a first reformulation of problem \eqref{eq:p1katz} can be written as follows:
\begin{equation}\label{eq:p1katz_ref}
        \begin{array}{rl}
         \displaystyle \min & \displaystyle J(\Delta) = \beta\|\operatorname{vec}(\Delta)\|_2^2+(1-\beta)\|\operatorname{vec}(\Delta)\|_1,  \\
        \text{s.t.} & (\boldsymbol{\widehat{\mu}}^{\top} \otimes I)\operatorname{vec}(\Delta) = \frac{1}{\alpha}(\boldsymbol{\widehat{\mu}}-\mathbf{1})-A\boldsymbol{\widehat{\mu}}, \\
        & \operatorname{diag}(\operatorname{vec}(\mathbf{1}\mathbf{1}^{\top} - M \circ \mathbf{1}\mathbf{1}^{\top}))\operatorname{vec}(\Delta)=\boldsymbol{0},   \\
        & -\operatorname{vec}(A) \leq \operatorname{vec}(\Delta).
    \end{array}  
\end{equation}
The reformulation in \eqref{eq:p1katz_ref} has $n^2$ variables, $n^2+n$ linear equalities and $n^2$ inequalities constraints.

We are looking for $\Delta $ having the sparsity pattern of $M$.
If the pattern we are considering has  $s_M\in O(n)$ {non-zeros}, it is computationally advantageous to reformulate problem \eqref{eq:p1katz_ref}  using only the variables associated to the possibly non-zero elements of $\Delta$. To this aim, let us define $\mathbf{x} \in \mathbb{R}^{s_M}$ as $\mathbf{x} = P_M \operatorname{vec}(\Delta) $ where $P_M \in \mathbb{R}^{s_M \times n^2}$ is the projector onto the pattern of $M$. Problem \eqref{eq:p1katz_ref} can hence be written as:
\begin{equation}\label{eq:p1katz_ref2}
        \begin{array}{rl}
         \displaystyle \min_{\mathbf{x}  \in \mathbb{R}^{s_M}} & \displaystyle  J({\mathbf{x}}) =\beta \|P^{\top}_{M}\mathbf{x}\|_2^2+(1-\beta)\|P^{\top}_{M}\mathbf{x}\|_1,  \\
        \text{s.t.} & (\boldsymbol{\widehat{\mu}}^{\top} \otimes I)P_M^{\top}\mathbf{x}  = \frac{1}{\alpha}(\boldsymbol{\widehat{\mu}}-\mathbf{1})-A\boldsymbol{\widehat{\mu}}, \\
        & -P_M \operatorname{vec}(A) \leq \mathbf{x}.
    \end{array}  
\end{equation}
Observe that for $(i,j)$ s.t. $[\Delta]_{i,j}=0$ and $ [A]_{i,j}\neq 0 $, i.e, in the case that $\Omega_{M} \subset \Omega_{A}$, the inequality $[\Delta]_{i,j} \geq -[A]_{i,j} $ is automatically satisfied  as $A \geq 0$, and, for this reason, can be eliminated from the inequalities constraints in~\eqref{eq:p1katz_ref}, leading to the inequality constraints in~\eqref{eq:p1katz_ref2}.  {The problem in \eqref{eq:p1katz_ref2} has $s_M\geq n$ variables, $n$ equality constraints and $s_M$ inequality constraints.} 

\begin{remark}
When $A=A^{\top}$ and the symmetry of the matrix $\Delta$ is added to the constraints in \eqref{eq:Katz_1_alpha_1}, it is enough to consider the vectorization of the lower triangular part of $\Delta$ and the relative projection onto the lower triangular part of the pattern.
\end{remark}

Since $P_M\in\{0,1\}^{s_M \times n^2 }$ and $P_MP_M^{\top}=I$, we have that $\|\mathbf{x}\|_2=\|P^{\top}_{M}\mathbf{x}\|_2$ and $\|\mathbf{x}\|_1=\|P^{\top}_{M}\mathbf{x}\|_1$. Hence, defining $\bar {\mathbf{x}} = \mathbf{x} + P_M \operatorname{vec}(A) $, we can write the problem in  \eqref{eq:p1katz_ref2} in the form

\begin{equation}\label{eq:QP1}
         {\mathcal{P}}^{\text{Katz}}_{\alpha,\beta} \, : \, \begin{array}{rl}
         \displaystyle \min_{\bar{\mathbf{x}}  \in \mathbb{R}^{s_M}} & \displaystyle J(\bar{\mathbf{x}}) =  \|\bar{\mathbf{x}}-P_M \operatorname{vec}(A))\|_2^2+ \tau  \| \bar{\mathbf{x}}- P_M \operatorname{vec}(A)  \|_1,  \\
        \text{s.t.} & (\boldsymbol{\widehat{\mu}}^{\top} \otimes I)P_M^{\top}\bar{\mathbf{x}}  = \frac{1}{\alpha}(\boldsymbol{\widehat{\mu}}-\mathbf{1})-A\boldsymbol{\widehat{\mu}}+ (\boldsymbol{\widehat{\mu}}^{\top} \otimes I)P_M^{\top} P_M \operatorname{vec}(A) , \\
        & \bar{\mathbf{x}} \geq \mathbf{0},
    \end{array}  
\end{equation}
where $\tau = (1-\beta)/\beta$. Employing a standard technique in sparse optimization, see e.g. \cite{MR4503349}, we introduce the nonnegative variables $\boldsymbol{\ell}^+=\max(\bar{\mathbf{x}}- P_M \operatorname{vec}(A),0)$ and $\boldsymbol{\ell}^-=\max(-(\bar{\mathbf{x}}- P_M \operatorname{vec}(A)),0)$; for these new variables the relations $\boldsymbol{\ell}^+ - \boldsymbol{\ell}^- = \bar{\mathbf{x}}- P_M \operatorname{vec}(A)$ and $ \| \bar{\mathbf{x}}- P_M \operatorname{vec}(A)  \|_1 = \mathbf{1}^{\top}\boldsymbol{\ell}^+ +\mathbf{1}^{\top}\boldsymbol{\ell}^-$ hold. Defining ${\mathbf{x}} = (\bar{\mathbf{x}};\boldsymbol{\ell}^+;\boldsymbol{\ell}^-) \in \mathbb{R}^{3s_M}$, we can write problem in \eqref{eq:QP1} in the standard QP form:
\begin{equation}\label{eq:p1katz_standard_form_L1_Final}
          \begin{array}{rl}
         \displaystyle \min_{{\mathbf{x}}  \in \mathbb{R}^{3s_M}} & \displaystyle  \frac{1}{2}{\mathbf{x}}^{\top}{Q}{\mathbf{x}}  + {\mathbf{c}}^{\top}{\mathbf{x}} \\
        \text{s.t.} & {L} {\mathbf{x}}   = {\mathbf{b}}, \\
        & {\mathbf{x}} \geq \mathbf{0},
    \end{array}  
\end{equation}
where 
\begin{equation}\label{eq:Katz_QP_Formulation_Blocks}
    \begin{split}
      {Q} & =\operatorname{blkdiag}(2I,0,0) { \; \in \mathbb{R}^{3s_M \times 3s_M } } , \quad 
      {\mathbf{c}} =(-2P_M \operatorname{vec}(A);\tau \mathbf{1};\tau \mathbf{1}) { \; \in \mathbb{R}^{3s_M }}, \\
      {L} & =[(\boldsymbol{\widehat{\mu}}^{\top} \otimes I)P_M^{\top},0,0; -I, I, -I] {\in \mathbb{R}^{(n+3s_M) \times 3s_M }} , \\ 
      {\mathbf{b}}& =\left(\frac{1}{\alpha}(\boldsymbol{\widehat{\mu}}-\mathbf{1})-A\boldsymbol{\widehat{\mu}}+ (\boldsymbol{\widehat{\mu}}^{\top} \otimes I)P_M^{\top} P_M \operatorname{vec}(A);-P_M \operatorname{vec}(A) \right) {\; \in \mathbb{R}^{n+3s_M }}.
    \end{split}
\end{equation}

{It is important to note that the matrix $Q$ in \eqref{eq:Katz_QP_Formulation_Blocks} is singular, and hence, that the solution of \eqref{eq:p1katz_standard_form_L1_Final} might not be unique.}

\section{Enforcing PageRank}
\label{sec:pagerank}
Our goal here is similar to that of Section~\ref{sec:katz} above, changing the centrality measure from Katz~\cite{Katz195339} to PageRank~\cite{Page1998107,MR3376760}. 
In this case, we may approach the problem with two strategies: either consider the stochastic matrix $G$ of \eqref{eq:G} and find a perturbation $\widetilde G$ of $G$ 
having the desired stationary vector 
$\widehat{\boldsymbol{\pi}}$, or look for a perturbation $\widetilde A$ of the adjacency matrix $A$ such that the corresponding PageRank stochastic matrix
has the desired stationary vector $\widehat{\boldsymbol{\pi}}$.
The first approach 
is similar to the one used in~\cite{gillis2024assigning} and, applied in this framework, completely ignores the pattern structure of $A$.
Here we follow the second approach since, from a modeling point of view, we believe that it is more appropriate to operate on the weights of the edges of the original network, rather than on the entries of the stochastic matrix~$G$.

Therefore, given $\alpha \in (0,1)$, the prescribed stationary vector $\widehat{\boldsymbol{\pi}}>\mathbf{0}$ such that $\widehat{\boldsymbol{\pi}}^{\top}\mathbf{1}=1$, the adjacency matrix  $A$ of the original graph $\mathcal G$, $D=\operatorname{diag}(A\mathbf{1})$, and $\mathbf{t}\ge \mathbf{0}$ such that $\mathbf{t}^{\top} \mathbf{1}=1$,
from \eqref{eq:pagerankvector2}, 
we address the problem of finding a perturbation $\Delta$ of  $A$
such that
\begin{equation} \label{eq:Pagerank_Original}
 \left(I - \alpha \left((D+\operatorname{diag}(\Delta \mathbf{1}))^{-1}(A+\Delta) \right)^{\top} \right) \widehat{\boldsymbol{\pi}} = (1 - \alpha) \mathbf{t},   
\end{equation}
under the constraint that $A+\Delta$ is nonnegative. 
Unlike the Katz centrality measure treated in Section~\ref{sec:katz}, the set on constrains is not linear in $\Delta$, hence we make the simplifying assumption that $\Delta\boldsymbol{1}=\boldsymbol{0}$, or, in other terms, that the degree of each node is unchanged. 

Finally, let us note that for a stochastic matrix with a fixed pattern, the conditions for which an assigned vector is the stationary vector are significantly more complex than those discussed in Proposition~\ref{pro:feasibility-conditions}; see in this regard~\cite{breen2015}.
 Therefore, by following~\cite{gillis2024assigning}, we look for a matrix $\Delta$ having zero entries corresponding to the zero entries of $A$, and  possibly entries different from zero on the diagonal, that is $\Delta\in\mathbb{S}(A+I)$.  
 Hence we allow the modification of the weights of existing edges and, possibly, the addition of some loops. {The addition of loops might not seem like a realistic graph modification. We point out that the choice $\Delta\in\mathbb{S}(A+I)$, in place of $\Delta\in\mathbb{S}(A)$, is used to prove the existence of the desired perturbation $\Delta$. In practice, there might be perturbations with small or null diagonal entries.}

For this purpose, {given $\alpha \in (0,1)$ and $\beta \in (0,1]$}, we first solve the optimization problem
\begin{equation*} \label{eq:Pagerank_complete_initial}
\mathcal{P}_{\alpha,\beta}^{\text{Pr}}: \; \begin{array}{rl}
\displaystyle \min_{\Delta \in \mathbb{S}(A + I) } & {J(\Delta) \; = \; }\beta \| \Delta \|_F^2 + (1-\beta) \|\operatorname{off-diag}(\Delta) \|_1   \\
\text{ s.t. } & \left(I - \alpha (\operatorname{diag}(A\mathbf{1})^{-1}(A+\Delta))^{\top} \right) \widehat{\boldsymbol{\pi}} = (1 - \alpha) \mathbf{t},\\
    & \Delta \mathbf{1} =\mathbf{0} \\
& \operatorname{off-diag}(A +  \Delta)  \geq 0,\\
\end{array}
\end{equation*}
then, from {a given optimal solution $\Delta^*$}, we build  $\widehat\alpha\in(0,1)$ and a stochastic $\widehat P\in\mathbb{S}(A+I)$ 
such that $\widehat{G}^{\top}\widehat{\boldsymbol{\pi}}=
\widehat{\boldsymbol{\pi}}$, where
$\widehat{G}=\widehat \alpha {\widehat P} + (1-\widehat\alpha) \mathbf{1}\mathbf{t}^{\top}$ (see Proposition~\ref{pro:interpretability}).

{As in the Katz's case, the} objective function $J(\Delta)$ in $\mathcal{P}^{\text{Pr}}_{\alpha,\beta}$ contains a trade-off between obtaining  small edge perturbations and obtaining a localized one given by the convex combination of the Frobenius norms and $\|\cdot\|_1$, respectively. The constraints require that the stationary vector is the desired one, that the perturbation does not modify the sum of the degrees of the nodes in the graph, and that only the off-diagonal weights of the edges of the modified network remain positive.  Under these conditions, we can now prove, exploiting an idea from~\cite{gillis2024assigning}, that the set of constraints is non-empty, i.e., that the problem is feasible, by demonstrating that a perturbation of a particular shape and that preserves the pattern is contained in the set.
\begin{proposition}\label{pro:feasibility-conditions-PR}
Given   $A\ge 0$ irreducible,  $\mathbf{t}\ge 0$ such that $\mathbf{t}^{\top}\mathbf{1}={1}$, $\alpha \in (0,1)$, 
$\boldsymbol{\widehat{\pi}}>\mathbf{0}$ such that $\boldsymbol{\widehat{\pi}}^{\top}\mathbf{1}=1$, then the set of matrices $\Delta\in\mathbb{S}(A+I)$  such that:
\begin{itemize}
    \item $\Delta \mathbf{1}=\mathbf{0}$,
    \item the off-diagonal entries of $A+\Delta$ are nonnegative,
    \item $\widetilde{G}^{\top} \boldsymbol{\widehat{\pi}}= \boldsymbol{\widehat{\pi}}$, where $\widetilde{G}=\alpha D^{-1}(A+\Delta)+(1-\alpha)\mathbf{1}\mathbf{t}^{\top}$ and $D=\operatorname{diag}(A\mathbf{1})$,
\end{itemize}
   is non-empty. 
\end{proposition}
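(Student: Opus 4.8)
The plan is to linearize the stationarity constraint, reduce it to a prescribed-divergence network-flow problem on the support of $A$, and then exhibit a feasible flow using the strong connectivity that irreducibility provides. Since $A$ is irreducible we have $A\mathbf{1}>\mathbf{0}$, so $D=\operatorname{diag}(A\mathbf{1})$ is invertible and $P := D^{-1}(A+\Delta)$ is well defined. First I would set $P=D^{-1}(A+\Delta)$: the hypothesis $\Delta\mathbf{1}=0$ gives $P\mathbf{1}=\mathbf{1}$, and expanding $\tilde G^T\widehat{\boldsymbol{\pi}}=\widehat{\boldsymbol{\pi}}$ with $\mathbf{1}^T\widehat{\boldsymbol{\pi}}=1$ collapses the third constraint to the single identity
\[
P^T\widehat{\boldsymbol{\pi}} = \mathbf{q}, \qquad \mathbf{q} := \tfrac{1}{\alpha}\big(\widehat{\boldsymbol{\pi}}-(1-\alpha)\mathbf{v}\big),
\]
which satisfies $\mathbf{1}^T\mathbf{q}=1$. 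Writing $P$ as its off-diagonal part, with entries $w_{ij}\ge 0$ supported on $\Omega_A$, plus a diagonal $\operatorname{diag}(\mathbf{p})$, the row-sum condition forces $p_j=1-\sum_{k\neq j}w_{jk}$; because the diagonal of $A+\Delta$ is unconstrained in sign (this is exactly the role of the extra diagonal pattern in $\mathbb{S}(A+I)$), this substitution is admissible. Eliminating $\mathbf{p}$ in the $j$-th component of $P^T\widehat{\boldsymbol{\pi}}=\mathbf{q}$ yields, for all $j$,
\[
\sum_{i\neq j} w_{ij}\widehat{\pi}_i - \widehat{\pi}_j\sum_{k\neq j}w_{jk} = \tfrac{1-\alpha}{\alpha}\,(\widehat{\pi}_j - v_j).
\]

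Next I would read this as a conservation law. Introducing the fluxes $\phi_{ij}=w_{ij}\widehat{\pi}_i\ge 0$ on the directed edges $(i,j)\in\Omega_A$, the left-hand side is the net in-flux minus out-flux at node $j$, so the system becomes a prescribed-divergence problem $\operatorname{div}\phi=\mathbf{b}$ with $b_j=\tfrac{1-\alpha}{\alpha}(\widehat{\pi}_j-v_j)$. Since $\mathbf{v}^T\mathbf{1}=\widehat{\boldsymbol{\pi}}^T\mathbf{1}=1$, supplies and demands balance, $\mathbf{1}^T\mathbf{b}=0$. The key step is then the classical fact that on a strongly connected digraph any zero-sum divergence is realizable by a nonnegative flow: split the nodes into sources ($b_j<0$) and sinks ($b_j>0$), fix a transportation plan matching total supply to total demand, and route each source–sink pair along a directed path, which exists precisely because $A$ is irreducible, superposing the path flows. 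This produces $\phi\ge 0$ supported on $\Omega_A$ with $\operatorname{div}\phi=\mathbf{b}$. Recovering $w_{ij}=\phi_{ij}/\widehat{\pi}_i$, setting the off-diagonal of $\Delta$ to $\delta_{ij}=d_i w_{ij}-a_{ij}$ and the diagonal from $\Delta\mathbf{1}=0$, one checks directly that $\Delta\in\mathbb{S}(A+I)$, that $\operatorname{off-diag}(A+\Delta)=d_i w_{ij}\ge 0$, and that the required identity holds by construction.

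The main obstacle I anticipate is the degenerate case in which $\widehat{\pi}_i=0$ for some $i$: the recovery $w_{ij}=\phi_{ij}/\widehat{\pi}_i$ is then ill-defined, and, worse, such a node cannot emit flux while its target divergence $b_i=-\tfrac{1-\alpha}{\alpha}v_i$ is strictly negative whenever $v_i>0$, which can obstruct feasibility. I would therefore work in the relevant regime $\widehat{\boldsymbol{\pi}}>0$ (the stationary vector of an irreducible damped chain is automatically positive) or, in the general case $\widehat{\boldsymbol{\pi}}\ge 0$, route the flow so as to avoid the zero-weight nodes and treat them separately. A secondary point to verify is the feasibility of the auxiliary transportation sub-problem, but this is immediate from $\mathbf{1}^T\mathbf{b}=0$; the substantive content is entirely concentrated in invoking strong connectivity to guarantee a nonnegative routing, and this is where the irreducibility hypothesis is indispensable.
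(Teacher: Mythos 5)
Your proof is correct in the regime $\widehat{\boldsymbol{\pi}}>0$ (which is the regime the paper actually works in), but it follows a genuinely different route from the paper's. The paper makes a structured ansatz $\Delta=D_{\boldsymbol{\sigma}}(A-D)$, so that $\Delta\in\mathbb{S}(A+I)$ and $\Delta\mathbf{1}=\mathbf{0}$ hold automatically and off-diagonal nonnegativity becomes $\boldsymbol{\sigma}\ge-\mathbf{1}$; stationarity then reduces to a linear system whose coefficient matrix is the transpose of the singular irreducible M-matrix $I-D^{-1}A$, solved via the bordered matrix $I-D^{-1}A+\mathbf{1}\mathbf{w}^T$ (with $\mathbf{w}>0$ the left null vector), yielding the one-parameter family $\boldsymbol{\sigma}=\boldsymbol{\sigma}_*+\gamma D_{\widehat{\boldsymbol{\pi}}}^{-1}\mathbf{w}$, and taking $\gamma$ large enough gives $\boldsymbol{\sigma}>-\mathbf{1}$. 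You instead eliminate the free diagonal through the row-sum constraint, recast stationarity as a prescribed-divergence flow problem with zero-sum right-hand side on the digraph $\Omega_A$, and invoke strong connectivity (equivalent to irreducibility of $A$) to route nonnegative path flows from sources to sinks. Both arguments use irreducibility crucially but in different guises: the paper through the Perron theory of the singular M-matrix, you through the existence of directed paths. The paper's ansatz buys an explicit closed-form family that is reused downstream (the worked Example on the diagonal signs, and the Remark generalizing to patterns $M$ with $M\mathbf{1}=A\mathbf{1}$); your construction buys a more elementary, purely combinatorial argument whose modification is localized to the routing paths, hence potentially sparser, in the spirit of the paper's $1$-norm penalty. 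One further point in your favour: your divergence bookkeeping exposes that a node with $\widehat{\pi}_i=0$ and $v_i>0$ makes the constraint infeasible (a nonnegative sum would have to equal the negative quantity $-\tfrac{1-\alpha}{\alpha}v_i$), so the stated hypothesis $\widehat{\boldsymbol{\pi}}\ge 0$ is too weak; the paper's own proof silently assumes $\widehat{\boldsymbol{\pi}}>0$ as well, since it uses $D_{\widehat{\boldsymbol{\pi}}}^{-1}$. Your restriction to $\widehat{\boldsymbol{\pi}}>0$ is therefore not a defect relative to the paper, but your suggestion that the degenerate case could be salvaged by rerouting around zero-weight nodes should be dropped: as your own obstruction shows, that case is genuinely infeasible, not merely awkward.
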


\begin{proof}
We look for a matrix $\Delta$ of the kind $\Delta=D_{\boldsymbol{\sigma}}(A-D)$, where $D_{\boldsymbol{\sigma}}$ is the diagonal matrix with the vector $\boldsymbol{\sigma}=(\sigma_1,\ldots,\sigma_n)$ on the diagonal. The conditions $\Delta\in\mathbb{S}(A+I)$  and $\Delta \mathbf{1}=\mathbf{0}$ are obviously verified. The  off-diagonal entries of $A+\Delta$ are nonnegative if and only if $\boldsymbol{\sigma}\ge -\mathbf{1}$. We will show that there exists $\boldsymbol{\sigma}\ge -\mathbf{1}$ such that the condition
${\widetilde{G}}^\top \boldsymbol{\widehat{\pi}}= \boldsymbol{\widehat{\pi}}$ is satisfied.
The latter equation can be equivalently rewritten as 
\[
\left(
\alpha (D^{-1}A)^{\top}+\alpha \left( D^{-1}D_{\boldsymbol{\sigma}}(A-D) \right)^{\top} +(1-\alpha)\mathbf{t}\mathbf{1}^{\top} \right) \boldsymbol{\widehat{\pi}}=\boldsymbol{\widehat{\pi}},
\]
i.e.,
\[
\left( D^{-1}D_{\boldsymbol{\sigma}}(A-D) \right)^{\top}
\boldsymbol{\widehat{\pi}}
= \alpha^{-1}\left(  \left(I-\alpha D^{-1}A \right)^{\top} -(1-\alpha)\mathbf{t}\mathbf{1}^{\top} \right) \boldsymbol{\widehat{\pi}}.
\]
Since diagonal matrices commute and 
$\boldsymbol{\widehat{\pi}}^{\top} \mathbf{1}=1$, the latter equation reduces to
\begin{equation}\label{eq:sigmahatpi}
 (I-D^{-1}A)^{\top} D_{\boldsymbol{\widehat{\pi}}}
\boldsymbol{\sigma}
=\alpha^{-1}\left( (1-\alpha)\mathbf{t}-
(I-\alpha D^{-1}A)^{\top} \boldsymbol{\widehat{\pi}}\right).
\end{equation}
The matrix $I-D^{-1}A$ is a singular irreducible M-matrix~\cite[Definition~1.2, Chapter~6]{MR0544666} such that $(I-D^{-1}A)\mathbf{1}=\mathbf{0}$. Since $A$ is irreducible, we take can select $\mathbf{w}>\mathbf{0}$ as the only
$\mathbf{w}^{\top}(I-D^{-1}A)=0$ and $\mathbf{w}^{\top}\mathbf{1}=1$. The matrix $I-D^{-1}A+\mathbf{1}\mathbf{w}^{\top}$ is nonsingular and multiplying \eqref{eq:sigmahatpi} on the left by $(I-D^{-1}A+\mathbf{1}\mathbf{w}^{\top})^{-T}$ yields
\[
  (I-\mathbf{w}\mathbf{1}^{\top}) D_{\boldsymbol{\widehat{\pi}}} \boldsymbol{\sigma}=\alpha^{-1}
  (I-D^{-1}A+\mathbf{1}\mathbf{w}^{\top})^{-T}
\left(
(1-\alpha)\mathbf{t}-
(I-\alpha D^{-1}A)^{\top} \boldsymbol{\widehat{\pi}}
\right)
.
\]
From this expression, 
we find that a solution of \eqref{eq:sigmahatpi} is the vector
\[
\boldsymbol{\sigma}_*=\alpha^{-1}
D_{\boldsymbol{\widehat{\pi}}}^{-1}
(I-D^{-1}A+\mathbf{1}\mathbf{w}^{\top})^{-T}
\left(
(1-\alpha)\mathbf{t}-
(I-\alpha D^{-1}A)^{\top} \boldsymbol{\widehat{\pi}}
\right)
,
\]
while
all the solutions of \eqref{eq:sigmahatpi} are the vectors 
\[
\boldsymbol{\sigma}=
\boldsymbol{\sigma}_* +\gamma D_{\boldsymbol{\widehat{\pi}}}^{-1}\mathbf{w}
,
\]
where $\gamma$ is any real number. In particular, there is a value $\widehat\gamma$ such that, for any $\gamma>\widehat\gamma$, we have $\boldsymbol{\sigma}>-\mathbf{1}$, that concludes the proof of the existence of $\Delta$ satisfying the required properties. 
\end{proof}

\begin{remark}
    The assumptions on the pattern of $\Delta$ in Proposition~\ref{pro:feasibility-conditions-PR} can be slightly generalized, by asking that $\Delta\in\mathbb{S}(M+I)$, where $M$ is any nonnegative irreducible matrix, such that $M\mathbf{1}=A\mathbf{1}$. Indeed, the arguments of the proof still hold, by taking $\Delta=D_{\boldsymbol{\sigma}}(M-D)$.
\end{remark}

{It is important to note that any given solution of $\mathcal{P}_{\alpha,\beta}^{\text{Pr}}$} is not guaranteed to deliver a perturbed matrix $\widetilde{G}$ that is stochastic. Indeed, {despite} the off-diagonal entries of $\widetilde{G}$ {are guaranteed to be} nonnegative, we do not have information on the diagonal entries, as shown by the following example. 

\begin{example}\label{example:negative}
Consider the following case:
\[
A=\begin{bmatrix}
 0 & 1 & 0 \\
 1 & 0 & 1 \\
 0 & 1 & 0 \\
\end{bmatrix}, \;  \mathbf{t} = \nicefrac{1}{3}
\begin{bmatrix}
 1 \\
1 \\
 1 \\
\end{bmatrix}, \; \alpha=0.75.
\]
By using the notation of the proof of Proposition~\ref{pro:feasibility-conditions-PR}, we have
\[
D^{-1}A = \begin{bmatrix}
 0 & 1 & 0 \\
 \nicefrac{1}{2} & 0 & \nicefrac{1}{2} \\
 0 & 1 & 0 \\
\end{bmatrix}, \;\mathbf{w}=  \nicefrac{1}{4}
\begin{bmatrix}
   1 \\ 2 \\ 1
\end{bmatrix}, \;
(I-D^{-1}A+\mathbf{1}\mathbf{w}^{\top})^{-1}= \nicefrac{1}{8}
\begin{bmatrix}
7 & 2 & -1 \\
1 & 6 & 1\\
-1 & 2 & 7
\end{bmatrix}.
\]
By selecting $\boldsymbol{\widehat{\pi}}^{\top} =
\nicefrac{1}{3}\left[1,1,1\right]$, we obtain
$\boldsymbol{{\sigma}}_*^{\top}=
\nicefrac{1}{6}\left[5,2,5\right]$. 
By choosing $\boldsymbol\sigma={\boldsymbol{\sigma}}_*$, we find that neither
$A+\Delta$, with $\Delta=D_{\boldsymbol{\sigma}}(A-D)$, nor $\widetilde G$ are nonnegative, indeed,
\[
A+D_{\boldsymbol{\sigma}}(A-D)=\nicefrac{1}{6}
\begin{bmatrix}
    -5 & 11 & 0 \\
    8 & -4 & 8\\
    0 & 11 & -5
\end{bmatrix}, \;
\widetilde G=\nicefrac{1}{24}
\begin{bmatrix}
    -13 & 35 & 2 \\
    14 & -4 & 14 \\
    2 & 35 & -13
\end{bmatrix}.
\]
Moreover, $\widetilde G$ has eigenvalues $1, -\nicefrac{5}{8}, -\nicefrac{13}{8}$, therefore the spectral radius of $\widetilde G$ is greater than~$1$.
We may try to look for a different value of $\boldsymbol{\sigma}$ such that
the entries of $A+D_{\boldsymbol{\sigma}}(A-D)$ are nonnegative.
From the proof of Proposition~\ref{pro:feasibility-conditions-PR}, 
the set of vectors $\boldsymbol{\sigma}$ such that $\widetilde G^{\top} \widehat{\boldsymbol{\pi}}=
\widehat{\boldsymbol{\pi}}$ and the off-diagonal entries of $A+D_{\boldsymbol{\sigma}}(A-D)$ are nonnegative is given by $\boldsymbol\sigma={\boldsymbol{\sigma}}_*+\gamma D_{\boldsymbol{\widehat{\pi}}}^{-1}\mathbf{w}$, where $\gamma$ is such that $\boldsymbol\sigma\ge -\mathbf{1}$. Therefore $\gamma$ must satisfy  the condition $\gamma \ge \max_i\left\{(-1-[\boldsymbol{\sigma}_*]_i)\cdot [\widehat{\boldsymbol{\pi}}]_i/
[\mathbf{w}]_i\right\}=-\nicefrac{8}{9}$. On the other hand, if we impose the condition that the diagonal entries of 
$A+D_{\boldsymbol{\sigma}}(A-D)$ are nonnegative as well, we find that $\gamma$ must satisfy $\gamma\le -\nicefrac{10}{9}$. Therefore, for this example, all the vectors $\boldsymbol{\sigma}$ which guarantee that the off-diagonal entries of $A+D_{\boldsymbol{\sigma}}(A-D)$ are nonnegative, are such that at least one diagonal entry of this matrix is negative. Nevertheless, we will show in Proposition~\ref{pro:interpretability} that it is possible to enforce the nonnegativity also of the diagonal entries  by modifying the parameter $\alpha$ defining the PageRank problem, and by shifting-and-scaling the matrix~$\widetilde G$.
\end{example}
{Given a matrix $\Delta$ that satisfies the conditions of Proposition~\ref{pro:feasibility-conditions-PR} but for which some diagonal elements of $D^{-1}(A+\Delta)$ are negative, we can apply the following proposition to shift and scale the matrix $\widetilde G$. This transformation restores the required nonnegativity properties while preserving the desired stationary vector $\widehat{\boldsymbol{\pi}}$.}
\begin{proposition}\label{pro:interpretability}
Given a matrix $\Delta$ satisfying the conditions of Proposition~\ref{pro:feasibility-conditions-PR}, define
$\theta=\min_i ([D^{-1}(A+\Delta)]_{i,i})$.
If
$\theta \ge 0$, then $D^{-1}(A+\Delta)$ is stochastic.
Otherwise, if $\theta<0$,
by setting $\widehat r=1-\alpha \theta$, then
for any $r\ge \widehat r$ we have $\widehat{G}^{\top}\boldsymbol{\widehat{\pi}}  =\boldsymbol{\widehat{\pi}}$, where
  \begin{equation}\label{eq:hatPPR}
\widehat{G}=\widehat \alpha \widehat P + (1-\widehat\alpha) \mathbf{1}\mathbf{t}^{\top},
\end{equation}
and
\[
\widehat\alpha=1-\frac{1-\alpha}{r},~~
\widehat P=\frac{1}{r-1+\alpha}
\left(\alpha D^{-1}(A+\Delta)+(r-1) I\right),
\]
with $\widehat P$ stochastic.
\end{proposition}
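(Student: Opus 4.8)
The plan is to reduce everything to a shift-and-scale of the matrix $\widetilde G$, after establishing two structural properties of $B:=D^{-1}(A+\Delta)$ that hold in both cases. Because $\Delta\mathbf{1}=\mathbf{0}$ and $D=\operatorname{diag}(A\mathbf{1})$, we have $(A+\Delta)\mathbf{1}=A\mathbf{1}=D\mathbf{1}$, hence $B\mathbf{1}=\mathbf{1}$, so every row of $B$ sums to one. Moreover, $A$ irreducible forces the diagonal of $D$ to be strictly positive, so $D^{-1}>0$ and the off-diagonal entries of $B$ inherit the nonnegativity of the off-diagonal entries of $A+\Delta$. With these in hand the case $\theta\ge 0$ is immediate: $\theta\ge 0$ means every diagonal entry of $B$ is nonnegative, which together with the off-diagonal nonnegativity gives $B\ge 0$, and combined with $B\mathbf{1}=\mathbf{1}$ shows that $B=D^{-1}(A+\Delta)$ is stochastic.

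For the case $\theta<0$ I would first verify that $\widehat P$ is stochastic and that $\widehat\alpha$ is an admissible teleportation parameter. Since $\theta<0$ we have $\widehat r=1-\alpha\theta>1$, so for any $r\ge\widehat r$ the denominator satisfies $r-1+\alpha>\alpha>0$ and $\widehat P$ is well defined. Its off-diagonal entries equal $\alpha[B]_{ij}/(r-1+\alpha)\ge 0$, while the diagonal entries equal $(\alpha[B]_{ii}+r-1)/(r-1+\alpha)$, which is nonnegative exactly when $r\ge 1-\alpha[B]_{ii}$; the tightest such constraint occurs at the smallest diagonal entry $[B]_{ii}=\theta$, giving precisely the threshold $r\ge\widehat r$. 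A direct computation using $B\mathbf{1}=\mathbf{1}$ gives $\widehat P\mathbf{1}=\mathbf{1}$, so $\widehat P$ is stochastic; and $r>1$ forces $0<(1-\alpha)/r<1$, so $\widehat\alpha=1-(1-\alpha)/r\in(0,1)$.

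The decisive algebraic step is then to rewrite $\widehat G$ as an affine shift-and-scale of $\widetilde G$. Using $\widehat\alpha=(r-1+\alpha)/r$ one gets $\widehat\alpha\widehat P=\tfrac{1}{r}\bigl(\alpha B+(r-1)I\bigr)$ and $1-\widehat\alpha=(1-\alpha)/r$, whence
\[
\widehat G=\widehat\alpha\widehat P+(1-\widehat\alpha)\mathbf{1}\mathbf{v}^T=\frac{1}{r}\bigl(\alpha B+(1-\alpha)\mathbf{1}\mathbf{v}^T+(r-1)I\bigr)=\frac{1}{r}\bigl(\widetilde G+(r-1)I\bigr).
\]
Transposing and invoking the hypothesis $\widetilde G^T\boldsymbol{\widehat{\pi}}=\boldsymbol{\widehat{\pi}}$ from Proposition~\ref{pro:feasibility-conditions-PR} yields $\widehat G^T\boldsymbol{\widehat{\pi}}=\tfrac{1}{r}\bigl(\boldsymbol{\widehat{\pi}}+(r-1)\boldsymbol{\widehat{\pi}}\bigr)=\boldsymbol{\widehat{\pi}}$, which is the assertion. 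I expect the one genuinely delicate point to be the nonnegativity of the diagonal of $\widehat P$: it is the only place where the bound $r\ge\widehat r$ is actually used, and recognizing $\widehat r=1-\alpha\theta$ as the sharp threshold is the real content of the proposition. Everything else is bookkeeping around the elementary observation that an affine shift-and-scale of $\widetilde G$ leaves its eigenvectors fixed and maps the eigenvalue $1$ to $1$, so the stationary vector is preserved.
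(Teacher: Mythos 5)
Your proof is correct and takes essentially the same route as the paper's: both rest on the shift-and-scale identity $\widehat G = \frac{1}{r}\bigl(\widetilde G + (r-1)I\bigr)$, which fixes the stationary vector, and on recognizing $r \ge \widehat r = 1-\alpha\theta$ as precisely the condition making the diagonal of $\widehat P$ nonnegative. The only difference is the direction of presentation—the paper starts from the shifted matrix and derives the formula for $\widehat P$, while you verify the stated formulas and then recover the shift—which is the same argument read backwards.
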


\begin{proof}
If $\theta\ge 0$ then all the entries of $D^{-1}(A+\Delta)$ are nonnegative; since 
$(D^{-1}(A+\Delta))\mathbf{1}=\mathbf{1}$, the matrix is stochastic.
If $\theta<0$, let $r>0$ and define
\[
\widehat{G}=\frac{1}{r}(\widetilde{G}-I)+I.
\]
Since $(I-\widetilde{G})\mathbf{1}=\mathbf{0}$ and $(I-\widetilde{G})^{\top} \boldsymbol{\widehat{\pi}}=\mathbf{0}$, then
$\widehat{G}\mathbf{1}=\mathbf{1}$ and $\widehat{G}^{\top} \boldsymbol{\widehat{\pi}} =\boldsymbol{\widehat{\pi}}$.
We look for values of $r$ such that $\widehat{G}$ can be written as in \eqref{eq:hatPPR}, where $\widehat P$ is stochastic.
By replacing the expression of $\widetilde{G}$ in $\widehat{G}$, we find that
\[
\widehat{G}=\frac{1}{r}\left(
\alpha D^{-1}(A+\Delta)-I
\right)+I+\frac{1-\alpha}{r}\mathbf{1}\mathbf{t}^{\top}.
\]
The off-diagonal entries of $\alpha D^{-1}(A+\Delta)-I$ are nonnegative,  and 
$(\alpha D^{-1}(A+\Delta)-I)\mathbf{1}=(\alpha-1)\mathbf{1}$, thus the diagonal entries of $\alpha D^{-1}(A+\Delta)-I$ are smaller than $\alpha-1<0$.
Therefore 
if $r\ge\widehat r$, where
$\widehat r=1- \alpha \theta$,
then
the matrix 
\[
F=\frac{1}{r}\left(
\alpha D^{-1}(A+\Delta)-I
\right)+I
\]
is nonnegative, since the off-diagonal entries are nonnegative and the diagonal entries of $\frac1r \left(
\alpha D^{-1}(A+\Delta)-I
\right)$ belong to the interval $[-1,0]$ by construction.
By setting $1-\widehat\alpha=
\frac{1-\alpha}{r}$ and $\widehat P=\widehat\alpha^{-1}F$, we arrive the expression for $\widehat P$, which is stochastic if $r\ge \widehat r$.
\end{proof}

\begin{remark}\label{rema:r_hat}
Proposition \ref{pro:interpretability} shows that, {if $\Delta^*$ solving $\mathcal{P}_{\alpha,\beta}^{\text{Pr}}$ is such that any of the diagonal entries of $A+\Delta^*$ is negative,} i.e., $\theta<0$,  the perturbation resulting from the optimization procedure can be {interpreted} as a modification of the original random walk on the graph with a modified teleportation parameter $\widehat{\alpha}$ but corresponding to the same preference vector $\mathbf{t}$. Moreover, when $\widehat{r}$ is close to $1$, the value $\widehat{\alpha}$ in \eqref{eq:hatPPR} is close to the original $\alpha$. 
In the case where $a_{i,i}=0$ for $i=1,\ldots,n$, we have 
\[
|\theta|=-\min_i \frac{[{\Delta^*}]_{i,i}}{d_i} =
\max_i\left( \frac{\sum_{j\ne i}[{\Delta^*}]_{i,j}}{d_i} \right)\le \|\operatorname{off-diag}({\Delta^*}) \|_\infty \frac{1}{\min_i d_i}.
\]
Therefore, if ${\Delta^*}$ has small norm with respect to ${\min_i d_i}$, then $\theta$ is close to zero, hence $\hat r=1-\alpha\theta$ is close to 1, and $\hat\alpha$ is close to $\alpha$.
The numerical results presented in Section~\ref{sec:enforcing_pagerank} confirm that on real-world problems $\widehat{r} \approx 1$. 

\end{remark}

We can formulate a bound analogous to the one in Proposition~\ref{prop:bound_delta} for the Katz centrality measure and this perturbation.
\begin{proposition}
Assume that $\boldsymbol{\pi}>\mathbf{0}$ satisfies $ \left( \alpha (D^{-1}A)^{\top} + (1 - \alpha) \mathbf{t}\mathbf{1}^{\top} \right) \boldsymbol{\pi} =\boldsymbol{\pi}  $, and that $\widehat{\boldsymbol{\pi}}>\mathbf{0}$ satisfies $ \left(\alpha (D^{-1}(A + \Delta))^{\top} + (1 - \alpha) \mathbf{t} \mathbf{1}^{\top}\right) \widehat{\boldsymbol{\pi}} = \widehat{\boldsymbol{\pi}}$, {with $\boldsymbol{\pi}^{\top} \mathbf{1}=
\widehat{\boldsymbol{\pi}}^{\top} \mathbf{1}=1$}, then
\[
\| \Delta \|_1 \geq \alpha^{-1}\frac{\| \widehat{\boldsymbol{\pi}} - {\boldsymbol{\pi}}\|_1}{ \| (I-\alpha D^{-1}A)^{-1}\|_\infty}\frac{1}{\max_i\{\widehat{\boldsymbol{\pi}}_i/d_i\}}\geq \frac{1-\alpha}{\alpha } \frac{\| \widehat{\boldsymbol{\pi}} - {\boldsymbol{\pi}}\|_\infty}{\max_i\{\widehat{\boldsymbol{\pi}}_i/d_i\}}.
\]
\end{proposition}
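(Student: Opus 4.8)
The plan is to adapt the perturbation argument of Proposition~\ref{prop:bound_delta} (the Katz bound) to the transposed PageRank fixed-point equation. First I would put both hypotheses in resolvent form. Since $\boldsymbol{\pi}^T\mathbf{1}=\widehat{\boldsymbol{\pi}}^T\mathbf{1}=1$, the rank-one teleportation terms collapse to $(1-\alpha)\mathbf{v}$, so the two assumptions read $(I-\alpha(D^{-1}A)^T)\boldsymbol{\pi}=(1-\alpha)\mathbf{v}$ and $(I-\alpha(D^{-1}(A+\Delta))^T)\widehat{\boldsymbol{\pi}}=(1-\alpha)\mathbf{v}$, exactly as in \eqref{eq:pagerankvector2}. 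It is essential here that the constraint $\Delta\mathbf{1}=\mathbf{0}$, imposed in $\mathcal{P}_{\alpha,\beta}^{\mathrm{Pr}}$ and in Proposition~\ref{pro:feasibility-conditions-PR}, leaves the degree matrix unchanged, so that the \emph{same} $D$ appears in both equations.

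Next I subtract the two equations. Writing $D^{-1}(A+\Delta)=D^{-1}A+D^{-1}\Delta$ and cancelling the common right-hand side, the terms in $D^{-1}A$ regroup and one obtains the first-order identity
\[
(I-\alpha(D^{-1}A)^T)(\widehat{\boldsymbol{\pi}}-\boldsymbol{\pi})=\alpha\,(D^{-1}\Delta)^T\widehat{\boldsymbol{\pi}},
\]
the PageRank analogue of \eqref{eq:deltamuhat}. Since $D^{-1}A$ is stochastic we have $\rho(\alpha D^{-1}A)=\alpha<1$, so $I-\alpha(D^{-1}A)^T$ is invertible and I would solve for the difference,
\[
\widehat{\boldsymbol{\pi}}-\boldsymbol{\pi}=\alpha\left((I-\alpha D^{-1}A)^{-1}\right)^T(D^{-1}\Delta)^T\widehat{\boldsymbol{\pi}}.
\]
Taking the vector $1$-norm and applying submultiplicativity together with the transpose identity $\|M^T\|_1=\|M\|_\infty$ produces the factor $\|(I-\alpha D^{-1}A)^{-1}\|_\infty$ in the denominator, matching the statement. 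It then remains to estimate the vector $(D^{-1}\Delta)^T\widehat{\boldsymbol{\pi}}=\Delta^T D^{-1}\widehat{\boldsymbol{\pi}}$, extracting the scalar $\max_i(\widehat{\pi}_i/d_i)=\|D^{-1}\widehat{\boldsymbol{\pi}}\|_\infty$ and the column-sum norm $\|\Delta\|_1$; rearranging to isolate $\|\Delta\|_1$ yields the first inequality.

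The second inequality is then a straightforward weakening. It follows from the monotonicity $\|\widehat{\boldsymbol{\pi}}-\boldsymbol{\pi}\|_1\ge\|\widehat{\boldsymbol{\pi}}-\boldsymbol{\pi}\|_\infty$ together with the bound $\|(I-\alpha D^{-1}A)^{-1}\|_\infty\le(1-\alpha)^{-1}$. The latter holds because $D^{-1}A$ is nonnegative and row-stochastic, so that $(I-\alpha D^{-1}A)^{-1}=\sum_{k\ge0}\alpha^k(D^{-1}A)^k$ is nonnegative with all row sums equal to $(1-\alpha)^{-1}$; hence its $\infty$-norm equals $(1-\alpha)^{-1}$ exactly, and $(1-\alpha)^{-1}$ replaces the denominator.

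I expect the crux to be the vector estimate at the end of the second paragraph: lining up the transpose-induced $\infty$-norm of the resolvent with the column-based induced norm $\|\Delta\|_1$ and the $\infty$-scaling $\max_i(\widehat{\pi}_i/d_i)$, while the left-hand side is measured in the $1$-norm, is a genuine mixed-norm bookkeeping issue rather than a single consistent operator-norm chain. The placement of the transpose and the zero-row-sum structure $\Delta\mathbf{1}=\mathbf{0}$ are precisely what must be exploited to make these norms compatible, and producing the induced (column) norm of $\Delta$ rather than its entrywise or row norm is the delicate point where the argument has to be carried out with care.
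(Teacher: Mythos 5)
Your proposal follows the paper's own proof essentially verbatim: subtract the two fixed-point equations (the teleportation terms cancel because $\boldsymbol{\pi}^T\mathbf{1}=\widehat{\boldsymbol{\pi}}^T\mathbf{1}=1$), solve with the resolvent $(I-\alpha (D^{-1}A)^T)^{-1}$, take vector $1$-norms using $\|M^T\|_1=\|M\|_\infty$, and get the second inequality from $\|(I-Q)^{-1}\|_\infty\le(1-\|Q\|_\infty)^{-1}$ with $Q=\alpha D^{-1}A$ row-stochastic. The mixed-norm step you flag as delicate is precisely the one the paper dispatches with the word ``submultiplicativity,'' and it closes cleanly if $\|\Delta\|_1$ is read as the entrywise norm $\|\operatorname{vec}(\Delta)\|_1$ (the norm actually used in the objective function), via $\|\Delta^T D^{-1}\widehat{\boldsymbol{\pi}}\|_1\le \max_i(\widehat{\pi}_i/d_i)\sum_{i,j}|\Delta_{ij}|$; for the induced column-sum norm you name, that estimate can fail (e.g.\ $\Delta$ supported on a single row), so you should commit to the entrywise reading rather than try to force the induced one.
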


\begin{proof}
Subtract the two identities for $\boldsymbol{\pi}$ and $\widehat{\boldsymbol{\pi}}$ employing that $\boldsymbol{\pi}^{\top} \mathbf{1}=  \widehat{\boldsymbol{\pi}}^{\top} \mathbf{1}=1$
\begin{eqnarray*}
 (I - \alpha D^{-1}A)^{\top} ( \boldsymbol{\pi} - \widehat{\boldsymbol{\pi}} ) & = & \alpha  \Delta^{\top} D^{-1} \widehat{\boldsymbol{\pi}},\\
( \boldsymbol{\pi} - \widehat{\boldsymbol{\pi}} ) & = & \alpha  (I - \alpha D^{-1}A)^{-T}\Delta^{\top} D^{-1} \widehat{\boldsymbol{\pi}},
\end{eqnarray*}
from which we find--- employing the submultiplicativity of the norm,
\[
\| \Delta \|_1 \geq \alpha^{-1}\frac{\| \widehat{\boldsymbol{\pi}} - {\boldsymbol{\pi}}\|_1}{ \| (I-\alpha D^{-1}A)^{-1}\|_\infty}\frac{1}{\max_i\{\widehat{\boldsymbol{\pi}}_i/d_i\}}
\]
and the latter follows from $\|(I-Q)^{-1}\|_\infty \leq (1 - \|Q\|_\infty)^{-1}$ whenever $\|Q\|_\infty < 1$.
\end{proof}

\subsection{Formulation as a QP problem}\label{sec:QP-PageRank}
{In the sequel, we assume $\alpha>0$, $\alpha \rho(A)<1$ and $0<\beta\le 1$ fixed.} It is important to note that given the presence of the constraint $\Delta \mathbf{1}=\mathbf{0}$, problem $\mathcal{P}_{\alpha,\beta}^{\text{Pr}}$ in the previous section can be reformulated as
\begin{equation} \label{eq:Pagerank_complete}
\mathcal{P}_{\alpha,\beta}^{\text{Pr}}: \; \begin{array}{rl}
\displaystyle \min_{\Delta \in \mathbb{S}(A + I) } & {J(\Delta) \; =  \; } \beta \| \Delta \|_F^2 + (1-\beta) \|\operatorname{off-diag}(\Delta)  \|_1   \\
\text{ s.t. } & \left(I - \alpha (\operatorname{diag}(A\mathbf{1})^{-1}(A+\Delta))^{\top} \right) \widehat{\boldsymbol{\pi}} = (1 - \alpha) \mathbf{t},\\
    & \Delta \mathbf{1} =\mathbf{0} \\
& \hbox{off-diag}(A +  \Delta)  \geq 0.\\
\end{array}
\end{equation}
As done is Section~\ref{sec:QP-Katz} we reformulate $\mathcal{P}_{\alpha,\beta}^{\text{Pr}}$ for a generic pattern $\Omega_{M + I}$ such that $\Omega_{M + I} \subseteq \Omega_{A+I}$ where $M \in \{0,1\}^{n \times n}$.
We have then,
\[
\mathcal{P}_{\alpha,\beta}^{\text{Pr}}: \; \begin{array}{rl}
\displaystyle \min_{\Delta \in \mathbb{S}(M + I) } & J(\Delta) =  \beta \| \Delta \|_F^2 + (1-\beta)  \| \operatorname{off-diag}(\Delta)  \|_1   \\
\text{ s.t. } &  \Delta^{\top} \operatorname{diag}(A\mathbf{1})^{-1}\widehat{\boldsymbol{\pi}}= \frac{1}{\alpha}(\widehat{\boldsymbol{\pi}}-(1-\alpha)\mathbf{t})-A^{\top}\operatorname{diag}(A\mathbf{1})^{-1}\widehat{\boldsymbol{\pi}}, \\
    & \Delta \mathbf{1} =\mathbf{0} \\
& \hbox{off-diag}(A +  \Delta)  \geq 0.\\
\end{array}.
\]
Considering again the vectorization of the matrix $\Delta$, we have:
\begin{equation}\label{eq:pagerank_ref}
        \begin{array}{rl}
         \displaystyle \min & \displaystyle J(\Delta) = \beta\|\operatorname{vec}(\Delta)\|_2^2+(1-\beta)\|\operatorname{diag}(\operatorname{vec}(\mathbf{1}\mathbf{1}^{\top} -I \circ \mathbf{1}\mathbf{1}^{\top}))\operatorname{vec}(\Delta)\|_1,    \\
        \text{s.t.} & ((\operatorname{diag}(A\mathbf{1})^{-1}\widehat{\boldsymbol{\pi}})^{\top} \otimes I)\operatorname{vec}(\Delta^{\top}) = \frac{1}{\alpha}(\widehat{\boldsymbol{\pi}}-(1-\alpha)\mathbf{t})-A^{\top}\operatorname{diag}(A\mathbf{1})^{-1}\widehat{\boldsymbol{\pi}}, \\
        &  (\mathbf{1}^{\top} \otimes I)\operatorname{vec}(\Delta)=\mathbf{0},\\
        & \operatorname{diag}(\operatorname{vec}(\mathbf{1}\mathbf{1}^{\top} - (M+I) \circ \mathbf{1}\mathbf{1}^{\top}))\operatorname{vec}(\Delta)=\boldsymbol{0},   \\
        & - \operatorname{diag}(\operatorname{vec}(\mathbf{1}\mathbf{1}^{\top} - I \circ \mathbf{1}\mathbf{1}^{\top}))\operatorname{vec}(A) \leq \operatorname{diag}(\operatorname{vec}(\mathbf{1}\mathbf{1}^{\top} -I \circ \mathbf{1}\mathbf{1}^{\top}))\operatorname{vec}(\Delta).
    \end{array}  
\end{equation}
{As we have done in Katz's case, if the pattern we are considering has  $s_{M+I} \in O(n)$ non-zeros, it is convenient to define} $\mathbf{x} \in \mathbb{R}^{s_{M+I}}$ as $\mathbf{x} = P_{M+I} \operatorname{vec}(\Delta) $ where $P_{M+I} \in \mathbb{R}^{s_{M+I} \times n^2}$ is the projector onto the pattern of $\Delta$ and $K$ as the orthogonal commutation matrix  such that $K\operatorname{vec}(\Delta)=\operatorname{vec}(\Delta^{\top})$. Let us define, moreover, $k_i:=i+(i-1)n$ for $i=1,\dots,n$, i.e., $\operatorname{vec}(\Delta)_{k_i}=\Delta_{ii}$ and $\bar{k}_i$ as the indices satisfying $$\mathbf{x}_{\bar{k}_i} = (P_{M+I} \operatorname{vec}(\Delta))_{\bar{k}_i} = \Delta_{ii}.$$
Finally, let us define $\mathcal{F}:= \{ \bar{k}_1, \dots, \bar{k}_n \}$ and $\mathcal{C}:=\{1, \dots,  s_{M+I}\} \setminus \mathcal{F}$. Observing that $$\|\operatorname{diag}(\operatorname{vec}(\mathbf{1}\mathbf{1}^{\top} -I \circ \mathbf{1}\mathbf{1}^{\top}))\operatorname{vec}(\Delta)\|_1=\|(P_{M+I}\operatorname{vec}(\Delta))_{\mathcal{C}}\|_1,$$ we can then rewrite  Problem~\eqref{eq:pagerank_ref}~as:
\begin{equation}\label{eq:pagerank_vec}
        \begin{array}{rl}
         \displaystyle \min_{\mathbf{x}  \in \mathbb{R}^{s_{M+I}}} & \displaystyle \beta \| \mathbf{x}\|_2^2+(1-\beta)\|\mathbf{x}_{\mathcal{C}}\|_1,  \\
        \text{s.t.} & ((\operatorname{diag}(A\mathbf{1})^{-1}\widehat{\boldsymbol{\pi}})^{\top} \otimes I)KP_{M+I}^{\top}\mathbf{x} = \frac{1}{\alpha}(\widehat{\boldsymbol{\pi}}-(1-\alpha)\mathbf{t})-A^{\top}\operatorname{diag}(A\mathbf{1})^{-1}\widehat{\boldsymbol{\pi}}, \\
        &  (\mathbf{1}^{\top} \otimes I)P_{M+I}^{\top}\mathbf{x} =\mathbf{0},\\
       & -(P_{M+I}\operatorname{vec}(A))_i \leq \mathbf{x}_i \hbox{ if } i \in \mathcal{C}, \\
       &  \mathbf{x}_i \hbox{ free if } i \in \mathcal{F}.
    \end{array}  
\end{equation}
Also in this case, defining $\bar {\mathbf{x}} = \mathbf{x} + \mathbf{a} $ where $\mathbf{a}:=P_{M+I}\operatorname{diag}(\operatorname{vec}(\mathbf{1}\mathbf{1}^{\top} - I \circ \mathbf{1}\mathbf{1}^{\top}))\operatorname{vec}(A)$, and observing that $\mathbf{a}_i=0$ if $i \in \mathcal{F}$, we can write the problem in  \eqref{eq:pagerank_vec} in the standard form
\begingroup
\allowdisplaybreaks
\begin{equation}\label{eq:QP1_PR}
         {\mathcal{P}}^{\text{Pr}}_{\alpha,\beta}  : \hspace{-1em} \begin{array}{rl}
         \displaystyle \min_{\bar{\mathbf{x}}  \in \mathbb{R}^{s_{M+I}}} & \displaystyle J(\bar{\mathbf{x}}) =  \| \bar{\mathbf{x}}-\mathbf{a}\|_2^2 +\tau \|\bar{\mathbf{x}}_{\mathcal{C}}-\mathbf{a}_{\mathcal{C}}\|_1,  \\
        \text{s.t.}  & ((\operatorname{diag}(A\mathbf{1})^{-1}\widehat{\boldsymbol{\pi}})^{\top} \otimes I)KP_{M+I}^{\top}\bar{\mathbf{x}} = \\ &\frac{1}{\alpha}(\widehat{\boldsymbol{\pi}}-(1-\alpha)\mathbf{t})-A^{\top}\operatorname{diag}(A\mathbf{1})^{-1}\widehat{\boldsymbol{\pi}}+((\operatorname{diag}(A\mathbf{1})^{-1}\widehat{\boldsymbol{\pi}})^{\top} \otimes I)KP_{M+I}^{\top}\mathbf{a}, \\
        &  (\mathbf{1}^{\top} \otimes I)P_{M+I}^{\top}\bar{\mathbf{x}} =(\mathbf{1}^{\top} \otimes I)P_{M+I}^{\top}\mathbf{a},\\
        & \bar{\mathbf{x}}_i \geq 0 \hbox{ if } i \in \mathcal{C}, \\
        & \bar{\mathbf{x}}_i \hbox{ free if } i \in \mathcal{F}. 
    \end{array}  
\end{equation}
\endgroup
where $\tau = (1-\beta)/\beta$. As previously done in the Katz case, we introduce the nonnegative variables $\boldsymbol{\ell}^+=\max(\mathbf{x}_{\mathcal{C}}- \mathbf{a}_{\mathcal{C}},0)$ and $\boldsymbol{\ell}^-=\max(-(\mathbf{x}_{\mathcal{C}}- \mathbf{a}_{\mathcal{C}}),0)$. We have, again, $\boldsymbol{\ell}^+ - \boldsymbol{\ell}^- = \mathbf{x}_{\mathcal{C}}- \mathbf{a}_{\mathcal{C}}$ and $ \| \mathbf{x}_{\mathcal{C}}- \mathbf{a}_{\mathcal{C}}  \|_1 = \mathbf{1}^{\top}\boldsymbol{\ell}^+ +\mathbf{1}^{\top}\boldsymbol{\ell}^-$. {Defining ${\mathbf{x}} = (\bar{\mathbf{x}};\boldsymbol{\ell}^+;\boldsymbol{\ell}^-) \in  \mathbb{R}^{s_{M+I} + 2 |\mathcal{C}|}$, with $|\mathcal{C}| = s_{M+I}-n$,} we can write problem in~\eqref{eq:QP1_PR} in the standard QP form as
\begin{equation}\label{eq:p1PR_standard_form_Final}
          \begin{array}{rl}
         \displaystyle \min_{ {{\mathbf{x}}  \in\mathbb{R}^{s_{M+I} + 2 |\mathcal{C}|}}} & \displaystyle  \frac{1}{2}{\mathbf{x}}^{\top}{Q}{\mathbf{x}}  + {\mathbf{c}}^{\top}{\mathbf{x}} \\
        \text{s.t.} & {L} {\mathbf{x}}   = {\mathbf{b}}, \\
         & {\mathbf{x}}_i\geq 0, \hbox{ if } i \in \widehat{\mathcal{C}}, \\
        & {\mathbf{x}}_i \hbox{ free if } i \in \widehat{\mathcal{F}}
    \end{array}  
\end{equation}
where
\begin{equation} \label{eq:PR_QP_formulation_Blocks}
\begin{split}
& \widehat{\mathcal{F}} =\mathcal{F}, \quad
\widehat{\mathcal{C}} = {\mathcal{C} \bigcup \{s_{M+I}+1,s_{M+I}+2,\dots,  3s_{M+I}-2n\}} \\ 
&  {Q} =\operatorname{blkdiag}(2I,0,0) \in {\mathbb{R}^{(s_{M+I} + 2 |\mathcal{C}|) \times (s_{M+I} + 2 |\mathcal{C}|)}},\\
& {\mathbf{c}} =(-2\mathbf{a};\tau \mathbf{1}_{|\mathcal{C}|};\tau \mathbf{1}_{|\mathcal{C}|}) \in \mathbb{R}^{s_{M+I} + 2 |\mathcal{C}|} , \\
 & {L}  =  \begin{bmatrix}
 (\mathbf{r}^{\top} \otimes I)KP_{M+I}^{\top} & 0 & 0 \\
    (\mathbf{1}^{\top} \otimes I)P_{M+I}^{\top} & 0 & 0 \\ 
    -I_{|\mathcal{C}|\times s_{M+I} } & I_{|\mathcal{C}|} & -I_{|\mathcal{C}|}
 \end{bmatrix} {\in \mathbb{R}^{(s_{M+I} + 2|\mathcal{C}|) \times (2n+|\mathcal{C}|)}}, \\
    & \mathbf{b} =\begin{bmatrix}
    \frac{1}{\alpha}(\widehat{\boldsymbol{\pi}}-(1-\alpha)\mathbf{t})-A^{\top}\mathbf{r}+(\mathbf{r}^{\top} \otimes I)KP_{M+I}^{\top}\mathbf{a} \\
    (\mathbf{1}^{\top} \otimes I)P_{M+I}^{\top}\mathbf{a} \\
    - \mathbf{a}_\mathcal{C}
    \end{bmatrix} {\in \mathbb{R}^{ 2n+|\mathcal{C}|}} ,\\
& {\mathbf{r} =  \operatorname{diag}(A\mathbf{1})^{-1}\widehat{\boldsymbol{\pi}},}\\   
\end{split}
\end{equation}
and where  $(I_{|\mathcal{C}|\times s_{M+I}})_{ij}=1 $ if $i,j \in \mathcal{C}$ and zero otherwise. {Also in this case, it is important to note that the matrix $Q$ in \eqref{eq:PR_QP_formulation_Blocks} is singular, and hence, that the solution of \eqref{eq:p1PR_standard_form_Final} might not be unique.}

\section{Numerical examples}
\label{sec:numerics}

The code to generate the examples and algorithms discussed in the paper is available on the \texttt{GitHub} repository  \href{https://github.com/Cirdans-Home/enforce-katz-and-pagerank}{Cirdans-Home/enforce-katz-and-pagerank}. The numerical examples are executed on a \texttt{lnx2} node of the Toeplitz cluster at the Green Data Center of the University of Pisa, this is equipped with an Intel\textsuperscript{\textregistered} Xeon\textsuperscript{\textregistered} CPU E5-2650 v4 at \qty{2.20}{\giga\hertz} with 2 threads per core, 12 cores per socket and 2 socket, and \qty{256}{\giga\byte} of RAM.

\subsection{Target Distributions}\label{sec:target-distribution}
Our analysis will cover both, the algorithmic details connected to the performance of the chosen solver, and the graph-related characteristics of the obtained solutions. Aiming at showcasing the robustness of our algorithmic approach with respect to the target distribution $\widehat{\boldsymbol{\mu}}/\widehat{\boldsymbol{\pi}}$, we consider two possible scenarios 
for the choices of  $\widehat{\boldsymbol{\mu}}/\widehat{\boldsymbol{\pi}}$:
\begin{description}
    \item[\textbf{S1}:] $\widehat{\boldsymbol{\mu}}/\widehat{\boldsymbol{\pi}}$ puts the top $10 \%$ of the nodes to their averaged value in ${\boldsymbol{\mu}}/{\boldsymbol{\pi}}$;
    \item[\textbf{S2}:] $\widehat{\boldsymbol{\mu}}/\widehat{\boldsymbol{\pi}}$ reverts the rank of the top $10\%$ of nodes in ${\boldsymbol{\mu}}/{\boldsymbol{\pi}}$.
\end{description}

\subsection{Terms of comparison}
When considering the vector of scores obtained using the perturbation $\Delta$, in principle, we could measure the norm of the difference between the computed ones and the desired ones---respectively $\widehat{\boldsymbol{\mu}}$ and $\widehat{\boldsymbol{\pi}}$. However, what is really of interest about the ranking obtained following the computed perturbation is the ordering of the vector entries rather than their exact values.

To this end, we consider the Kendall correlation coefficient~\cite{Kendall} to measure the similarity between the target distributions and the computed one after introducing the perturbation $\Delta$. This is a measure of association between any two rankings of the same sets and is especially tailored for ordinal data, where the order of the items matters but not necessarily the exact values. Given two rankings $\boldsymbol{\mu}$ and $\widehat{\boldsymbol{\mu}}$ of length $n$, with ties allowed, Kendall's $\kappa_\tau$
\begin{equation}\label{eq:kendall-tau}
    \kappa_\tau = \frac{{\text{``number of concordant pairs''} - \text{``number of discordant pairs''}}}{{\binom{n}{2}}} \,\in\,[-1,1]
\end{equation}
where a pair of elements $(i, j)$ is concordant if their order is preserved in both rankings, and discordant if their order is reversed. Ties are counted as half a concordant or discordant pair. {See Section \ref{sec:kendal} for more details about the Kendall's  correlation coefficient.}

\subsection{Dataset}

In Table \ref{tab:dataset} we present the details of the dataset considered in our numerical experiments. The networks are taken from the \texttt{SuiteSparse} collection~\cite{SuiteSparseCollection} and represent graphs coming from different applications, and with different structures. To avoid reporting the network name in all figures and tables, we have numbered the test cases sequentially and will use the assigned order number to identify them.
\begin{table}[htbp]
\caption{Dataset Details. The matrices are from the \texttt{SuiteSparse} collection~\cite{SuiteSparseCollection}. The table reports the name, the graph type {(``comb. prob.'' for combinatorial problem, ``undir./dir.'' for undirected/directed, ``wtd'' for weighted, ``multigr.'' for multigraph)} , the number of nodes, and the number of non-zeroes entries. All the selected graphs have one connected/strongly connected component.}\label{tab:dataset}
\centering\small
\begingroup
\setlength{\tabcolsep}{1.2pt}
\begin{tabular}{rlllr|rlllr}
\toprule
& Name & Type & $n$ & $\operatorname{nnz}$ & & Name & Type & $n$ & $\operatorname{nnz}$ \\
\midrule
 1 &    EX5      &  comb. prob.     &   6545   & 295680    & 9 &    de2010      &  undir. wtd graph     &   24115   & 116056    \\   
 2 &    PGPgiantcompo      &  undir. multigr.     &   10680   & 48632    & 10 &    delaunay\_n16      &  undir. graph     &   65536   & 393150    \\   
 3 &    cage10      &  dir. wtd graph     &   11397   & 150645    & 11 &    fe\_4elt2      &  undir. graph     &   11143   & 65636    \\   
 4 &    cage11      &  dir. wtd graph     &   39082   & 559722   & 12 &    gre\_1107      &  dir. wtd graph     &   1107   & 5664    \\   
 5 &    cs4      &  undir. graph     &   22499   & 87716    & 13 &    nh2010      &  undir. wtd graph     &   48837   & 234550    \\   
 6 &    ct2010      &  undir. wtd graph     &   67578   & 336352    & 14 &    uk      &  undir. graph     &   4824   & 13674    \\   
 7 &    cti      &  undir. graph     &   16840   & 96464    & 15 &    vt2010      &  undir. wtd graph     &   32580   & 155598    \\   
 8 &    data      &  undir. graph     &   2851   & 30186    &  &  &  &  &  \\ 
\bottomrule
\end{tabular}
\endgroup
\end{table}

\subsection{Solution of the QPs}\label{sec:ipm-solve}
In our numerical experiments, we solve the QP problems presented in Sections~\ref{sec:QP-Katz} and~\ref{sec:QP-PageRank} using  PS-IPM (Proximal Stabilised-Interior Point Method), see \cite{MR4594481,CIPOLLA2023} for related theoretical details. This method is particularly well-suited for problems characterised by inherent ill-conditioning of the problem's data. The presence of the Proximal-Stabilization induces a Primal-Dual Regularization that is particularly helpful for the numerical back-solvers used for the IPM-related linear systems. Indeed, the computationally intensive part of IPMs for the QPs here considered is represented by the solution of a linear system of the form
\begin{equation} \label{eq:schur_IPM}
S_{\rho, \delta}={L}({Q} + \Theta^{-1} +\rho I)^{-1}{L}^{\top} +\delta I     
\end{equation}
where $\Theta^{-1}={X}^{-1}{S} $ is a diagonal IPM iteration dependent matrix responsible for the identification of the \textit{active-variables}. Such a matrix has the unpleasant feature of exhibiting some of the elements converging to zero and some of the elements converging to infinity when the optimization process approaches convergence to the optimum. When the Primal-Dual Regularization is not considered, i.e., when $\rho=\delta=0$, it is possible to prove that the matrices $L\Theta L^{\top}$ have diverging condition numbers. As a consequence, the related Newton directions could be affected by numerical instabilities responsible for preventing the successful numerical convergence of the method. Instead, as it is apparent from \eqref{eq:schur_IPM}, the use of Primal-Dual regularization, i.e., when $\rho>0$ and $\delta>0$, allows a uniform control on the conditioning of the matrices $S_{\rho,\delta}$. As a consequence, the related IPMs are able to solve successfully also problems characterized by an inherent ill-conditioning, such as those arising when enforcing the PageRank centrality vector, see, e.g., the first column of Tables~\ref{tab:S1-PR_tab1},~\ref{tab:S1-PR_tab2},~\ref{tab:S2-PR_tab1} and~\ref{tab:S2-PR_tab2}. It is important to note, moreover, that in our numerical experiments, we solve the linear systems involving matrices as in \eqref{eq:schur_IPM} using \texttt{Matlab}'s \texttt{chol} function. As the numerical results presented in the next sections will show, for sufficiently sparse graphs, the Cholesky factors do not show significant fill-in, see the column 
``\texttt{nnz Chol.} of eq. \eqref{eq:schur_IPM}'' in all the tables presented in the remainder of this work. %
{It is crucial to recognize that solving the linear system \eqref{eq:schur_IPM}—and in particular, computing its associated Cholesky factorization—dominates the overall computational cost of our approach. This complexity serves as a key indicator of our method's scalability. In general, the dimension of \eqref{eq:schur_IPM} corresponds to the number of linear constraints in the quadratic program in standard form, as reflected by the first dimension of the matrix $L$ in \eqref{eq:p1katz_standard_form_L1_Final} and \eqref{eq:p1PR_standard_form_Final}. As discussed in Sections \ref{sec:QP-Katz} and \ref{sec:QP-PageRank}, assuming that the sparsity pattern of the perturbation contains $O(n)$ non-zero elements, the number of such constraints is also $O(n)$ (with the constant factor depending on whether we consider the Katz or PageRank model and on the impact of the $1$-norm induced sparsification).}

{Taking into account the cost of the Cholesky factorization, the overall computational complexity of our approach scales roughly as $O(n^3)$. However, it is important to note that the IPM matrices \eqref{eq:schur_IPM} in this application are sparse, and the corresponding Cholesky factors remain relatively sparse when standard reordering techniques are employed—this reordering is automatically performed by Matlab's \texttt{chol} function. The efficiency of sparse Cholesky factorization is heavily influenced by the amount of fill-in introduced during the factorization process. By minimizing fill-in through effective reordering, the practical computational cost and memory requirements can be significantly reduced compared to the worst-case $O(n^3)$ scenario, thereby enhancing the method's applicability to large-scale problems.}

Finally, concerning the IPM-related hyperparameters used in our numerical experimentation we used $\rho=\delta=10^{-10}$ and  $\texttt{StopTol}=10^{-8}$. Such a high level of accuracy is necessary when computing perturbations able to match the relative order in the target centralities and motivates the use of second-order methods in this particular context.

\subsection{Enforcing Katz centrality}
In this section, we first test the solution algorithm on a small-scale road network for which we can analyze in detail the obtained perturbation (Section~\ref{sec:small-net}), then we focus in Section~\ref{sec:Katz_S1} and~\ref{sec:Katz_S2} on the two scenarios described in Section~\ref{sec:target-distribution}.

\subsubsection{Enforcing Katz centrality -- Small Networks}\label{sec:small-net}  We begin by testing the problem of determining the perturbation that allows us to fix the Katz centrality on a small-sized network to describe the characteristics of the solution obtained. 
\begin{figure}[htbp]
    \centering
    \includegraphics[width=\columnwidth]{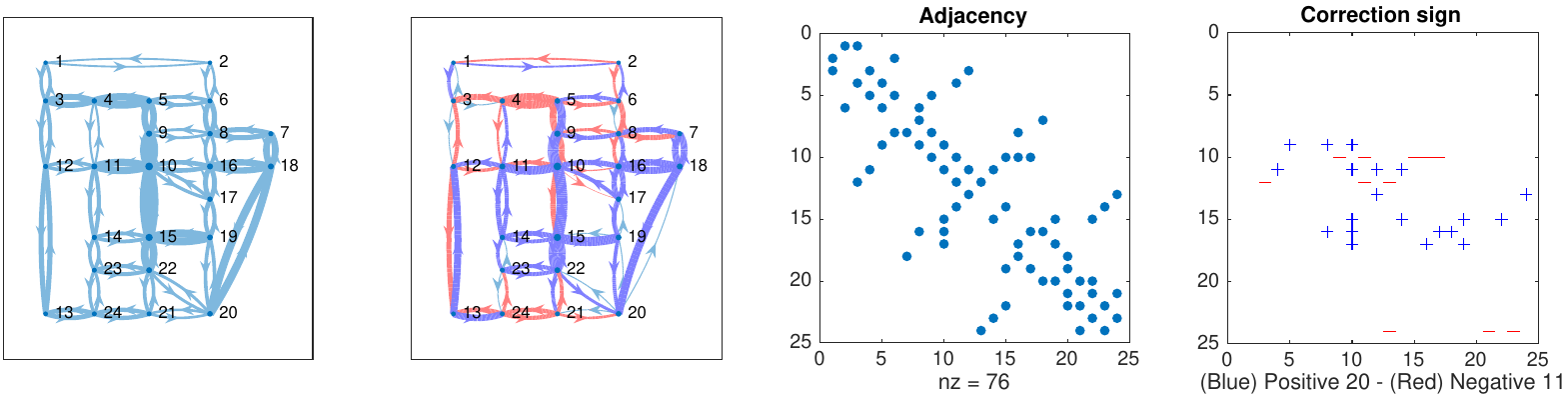}
    
    \definecolor{mycolor1}{rgb}{0.85000,0.32500,0.09800}%
\begin{tikzpicture}

\begin{axis}[%
width=0.8\columnwidth,
height=0.193\columnwidth,
at={(0\columnwidth,0\columnwidth)},
scale only axis,
xmin=1,
xmax=24,
separate axis lines,
every outer y axis line/.append style={black},
every y tick label/.append style={font=\color{black}},
every y tick/.append style={black},
ymin=1,
ymax=3,
yminorticks=true,
axis background/.style={fill=white},
yticklabel pos=left,
legend style={at={(0.5,1.03)}, anchor=south, legend cell align=left, align=left, draw=white!15!black}
]
\addplot [color=blue, dashed]
  table[row sep=crcr]{%
1	1.17726860009417\\
2	1.14366243402819\\
3	1.48809353365516\\
4	1.64210655512503\\
5	1.76428211796041\\
6	1.44741013474019\\
7	1.51922950764356\\
8	1.6871873600209\\
9	2.02208751431829\\
10	2.75454305312469\\
11	1.87230742507343\\
12	1.48574495887839\\
13	1.35205732907686\\
14	1.54709240180482\\
15	2.51471376538055\\
16	1.95774461018643\\
17	1.65169706468625\\
18	1.90120060948081\\
19	1.82146583163914\\
20	1.76264621435344\\
21	1.43019589826686\\
22	1.87231527612034\\
23	1.4348501357243\\
24	1.42178761327954\\
};
\label{Katz_qp}

\addplot [color=red, only marks, mark=x, mark options={solid, red}]
  table[row sep=crcr]{%
1	1.17726860009417\\
2	1.14366243402819\\
3	1.48809353365516\\
4	1.64210655512503\\
5	1.76428211796041\\
6	1.44741013474019\\
7	1.51922950764356\\
8	1.6871873600209\\
9	2.02208751431829\\
10	1.87230742507343\\
11	1.87230742507343\\
12	1.48574495887839\\
13	1.48574495887839\\
14	1.54709240180482\\
15	2.51471376538055\\
16	1.95774461018643\\
17	1.65169706468625\\
18	1.90120060948081\\
19	1.82146583163914\\
20	1.76264621435344\\
21	1.43019589826686\\
22	1.87231527612034\\
23	1.4348501357243\\
24	1.42178761327954\\
};
\label{Desired_Katz_qp}

\addplot [color=red, only marks, mark=o, mark options={solid, red}]
  table[row sep=crcr]{%
1	1.17726860009417\\
2	1.14366243402819\\
3	1.48809353365516\\
4	1.64210655512503\\
5	1.76428211796041\\
6	1.44741013474019\\
7	1.51922950764356\\
8	1.6871873600209\\
9	2.02208751431829\\
10	1.87230742507343\\
11	1.87230742507343\\
12	1.48574495887839\\
13	1.48574495887839\\
14	1.54709240180482\\
15	2.51471376538055\\
16	1.95774461018643\\
17	1.65169706468625\\
18	1.90120060948081\\
19	1.82146583163914\\
20	1.76264621435344\\
21	1.43019589826686\\
22	1.87231527612034\\
23	1.4348501357243\\
24	1.42178761327954\\
};
\label{Obtained_Katz_qp}
\end{axis}
\begin{axis}[%
width=0.8\columnwidth,
height=0.193\columnwidth,
at={(0\columnwidth,0\columnwidth)},
scale only axis,
xmin=1,
xmax=24,
separate axis lines,
every outer y axis line/.append style={mycolor1},
every y tick label/.append style={font=\color{mycolor1}},
every y tick/.append style={mycolor1},
ymode=log,
ymin=2.22044604925032e-16,
ymax=4.44089209850063e-16,
yminorticks=true,
axis background/.style={fill=none},
yticklabel pos=right,
legend columns=4,
legend style={at={(0.5,1.03)}, anchor=south, legend cell align=left, align=left, draw=none, font={\small}}
]
\addlegendimage{/pgfplots/refstyle=Katz_qp}\addlegendentry{Katz}
\addlegendimage{/pgfplots/refstyle=Desired_Katz_qp}\addlegendentry{Desired Katz}
\addlegendimage{/pgfplots/refstyle=Obtained_Katz_qp}\addlegendentry{Obtained Katz}
\addplot [color=mycolor1]
  table[row sep=crcr]{%
1	0\\
2	2.22044604925031e-16\\
3	0\\
4	2.22044604925031e-16\\
5	0\\
6	0\\
7	0\\
8	4.44089209850063e-16\\
9	0\\
10	2.22044604925031e-16\\
11	0\\
12	2.22044604925031e-16\\
13	0\\
14	0\\
15	4.44089209850063e-16\\
16	2.22044604925031e-16\\
17	0\\
18	0\\
19	0\\
20	2.22044604925031e-16\\
21	2.22044604925031e-16\\
22	0\\
23	2.22044604925031e-16\\
24	0\\
};
\addlegendentry{Error}

\end{axis}

\end{tikzpicture}%
    
    \caption{Sioux Falls road network (\href{https://tzin.bgu.ac.il/~bargera/tntp/}{tzin.bgu.ac.il/$\sim$bargera/tntp}). The first and third panels depict the road network with the directed edges with the corresponding weights together with the adjacency matrix. The third and fourth panels depict the modification needed to get the desired score vector for $\beta=1$. The blue ``$+$'' represents edges whose weight has been increased, and the red ``$-$'' edges whose weight has been decreased. On the second row, the graph reports the original score vector, dashed blue line, the desired Katz score, red crosses, and the one obtained through the optimization, red circles. On the right $y$-axis the error between the desired and the obtained is given.}
    \label{fig:sioux-falls}
\end{figure}
We consider a small road network, the network of the city of Sioux Falls (24 nodes, 76 links), and compute the true ranking $\boldsymbol{\mu}$ for $\alpha = \nicefrac{1}{2\rho(A)}$. The target score vector $\widehat{\boldsymbol{\mu}}$ coincides with $\boldsymbol{\mu}$ a part from $\widehat{\boldsymbol{\mu}}_{10} = \boldsymbol{\mu}_{11}$ and $\widehat{\boldsymbol{\mu}}_{13} = \boldsymbol{\mu}_{12}$. The modification pattern is that of the adjacency matrix $A$.
Figure~\ref{fig:sioux-falls} reports the results for this configuration for $\beta=1$. The algorithm modifies 31 edges (20 have an increase of their weight, while 11 are decreased) and {produces} a perturbation $\Delta$ such that $\|\Delta\|_F \approx  4.5$ {($ \nicefrac{\|\Delta\|_F}{\|A\|_F} \approx 0.20$)}. This is sufficient to have an absolute error with respect to the desired Katz vector falling under machine precision--see the bottom of Figure~\ref{fig:sioux-falls}. If we set $\beta = 0.2$ we manage to reduce the number of modified nodes from 31 to 29 decreasing by two the number of negative corrections, and moving the norm of the correction to $\|\Delta\|_F \approx 4.58$ {($ \nicefrac{\|\Delta\|_F}{\|A\|_F} \approx 0.21$)}.

\subsubsection{The S1 scenario} \label{sec:Katz_S1} 
In this section, we analyse the S1 scenario for enforcing Katz centrality. To showcase the robustness of our approach, we start analysing an extension of the S1 scenario, namely, the case where given an initial Katz ranking, we seek a perturbation matrix that makes the first $x = 10,20,30,40,50\%$ of the nodes in the ranking indistinguishable. Again, we choose to fix all nodes to a ranking value equal to their mean, i.e., some vertices will have their scores reduced, while others will have them increased. We test the procedure for different values of the $\beta$ parameter in~\eqref{eq:Katz_1_alpha_beta}, and for $\alpha = \nicefrac{1}{2 \rho(A)}$. In all cases, the modification pattern chosen is that of the matrix $A$ itself.
\begin{figure}[htbp]
    \centering
    \includegraphics[width=0.95\columnwidth]{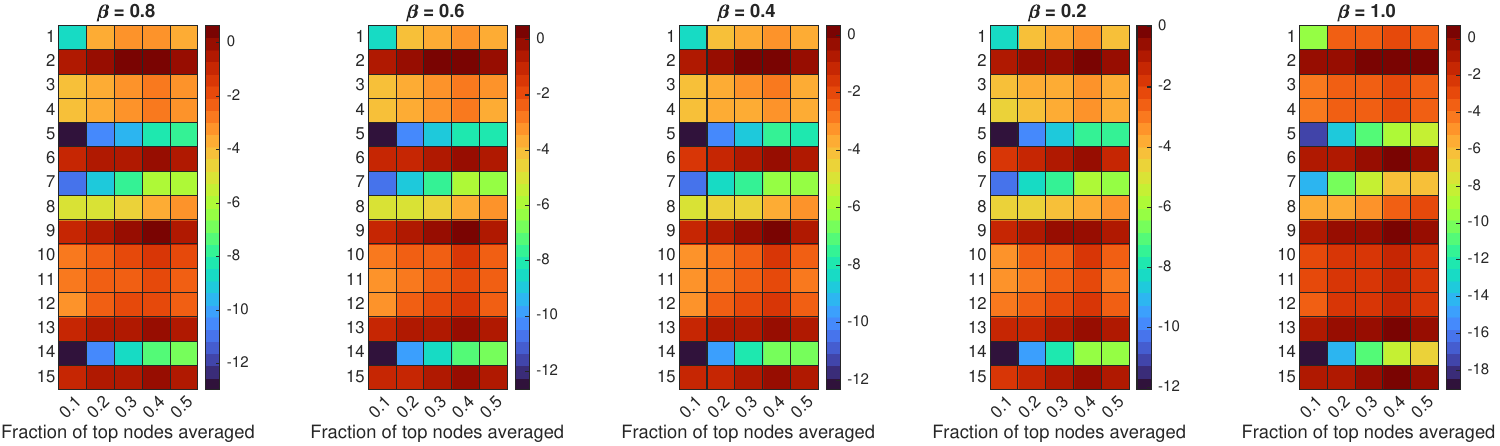}
    \caption{S1 Scenario. Value in $\log_{10}$-scale of the relative objective function $\nicefrac{J(\Delta)}{J(A)}$ in~\eqref{eq:Katz_1_alpha_beta}. On the columns, we read the fraction of equalized vertices in increasing order, on the rows, the different test cases as numbered  in~Table~\ref{tab:dataset}.}
    \label{fig:objective_function}
\end{figure}
 In Figure~\ref{fig:objective_function} we report the relative value of the objective function, from which we can observe that, generally, such value -- hence the norm of the computed perturbation $\Delta$ -- grows with the importance of the changes required on the centrality vector. 
\begin{figure}[htbp]
    \centering
    \includegraphics[width=0.95\columnwidth]{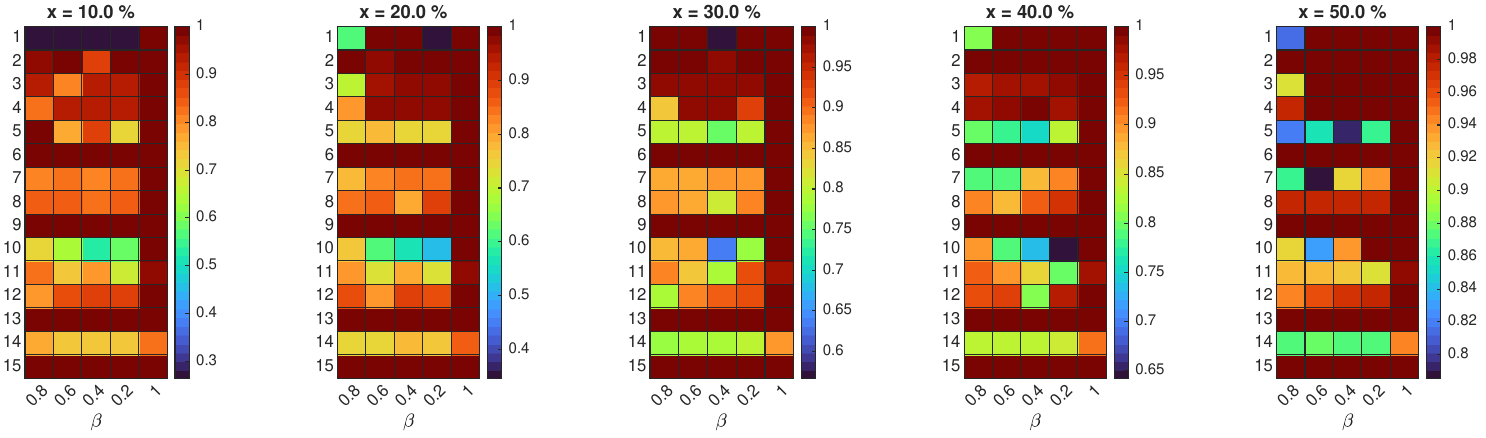}
    \caption{S1 Scenario. Number of nonzero entries obtained by solving~\eqref{eq:Katz_1_alpha_beta} scaled by the number of nonzero entries of the original adjacency matrix. On the columns we read the value of the $\beta$ parameter, on the rows, the different test cases as numbered in~Table~\ref{tab:dataset}. Each block is obtained for a different percentage of the averaged nodes.}
    \label{fig:nnz}
\end{figure}
In Figure~\ref{fig:nnz} we analyse, instead, the effect of increasing the weight of the \emph{sparsifying} term $\|\Delta\|_1$ in the objective function~\eqref{eq:Katz_1_alpha_beta} concerning the term $\|\Delta\|_F$. The presented results confirm in several cases the intuition that increasing values of $\beta$ correspond perturbations $\Delta$s with fewer numbers of modified links. In some cases, the solution that minimizes the objective function still has exactly zero elements compared to the number of elements of the required pattern. The experimental results presented above suggest that, when the choice of editing pattern is not clear, it is possible to select a larger modification pattern and exploit the $\beta$ parameter and the sparsification strategy here presented to produce a modification pattern with fewer non-zero elements with respect to the originally selected one.
Finally, it is important to note that when Kendall's $\kappa_\tau$ is considered as a measure of the goodness of the obtained ranking, for all the cases presented above, we always obtain $\kappa_\tau=1$ (we compare the first six decimal places of the values of the obtained ranking vector with the desired one). That is, modulo using reasonable precision, we can achieve, in all cases, exactly the desired relative ordering of the node scores. See also the experimental results presented in the next section.

\paragraph{Detailed results} 
In Table~\ref{tab:S2-tab1} we present the results for $\beta = 1$, and $\alpha = \nicefrac{1}{2\rho(A)}$ for the S1 scenario.  The details here reported aim at showcasing both the  overall quality of the obtained perturbation (in terms of relative Frobenius norm, sparsity and Kendall's correlation) and the performance of the IPM solver here considered (in terms of computational time and number of IPM iterations).

\begin{table}[htbp]\centering
\caption{Katz. S1 Scenario. $\beta=1$. The numbering of the test cases corresponds to the matrices in Table~\ref{tab:dataset}. The table reports:  the conditioning of the constraints matrix, the number of IPM iterations,
the elapsed time measured in seconds, the relative norm of the obtained perturbation together with the number of nonzero entries of positive ($+$), negative $(-)$ sign and the number of non-zero elements in the Cholesky factor of equation \eqref{eq:schur_IPM} in the last IPM iteration. The correlation between the obtained
centrality measure with the target one is measured in terms of Kendall's $\kappa_\tau$.}\label{tab:S2-tab1}
\begin{tabular}{rllllrrrc}
\toprule
& $\operatorname{cond}(LL^{\top})$ & Iter & T (s) & $\nicefrac{\|\Delta\|_F}{\|A\|_F}$ & $+$  & $-$ & \texttt{nnz Chol.} & $\kappa_\tau$ \\
&     &      &       &                &      &     & of \eqref{eq:schur_IPM} &  \\
\midrule
 1 ]  & 1.186e+00                                       & 5          & 1.09    & 1.120e-05        & 124858        & 129762          & 6545  &   1.00 \\   
 2 ]  & 3.692e+02                                       & 8          & 0.31    & 5.163e-01        & 17373        & 20353          & 10680  &   1.00 \\   
 3 ]  & 5.296e+00                                       & 8          & 0.70    & 6.130e-03        & 54089        & 86199          & 11397  &   1.00 \\   
 4 ]  & 1.067e+01                                       & 9          & 2.81    & 5.845e-03        & 114007        & 188242          & 39082  &   1.00 \\   
 5 ]  & 2.169e+00                                       & 5          & 0.32    & 2.369e-09        & 2564        & 9004          & 22499  &   1.00 \\   
 6 ]  & 5.569e+01                                       & 20          & 3.86    & 2.897e-01        & 128611        & 152823          & 67578  &   1.00 \\   
 7 ]  & 3.447e+00                                       & 5          & 0.33    & 9.861e-08        & 3072        & 11088          & 16840  &   1.00 \\   
 8 ]  & 1.037e+01                                       & 5          & 0.12    & 1.641e-03        & 4164        & 1781          & 2851  &   1.00 \\   
 9 ]  & 4.784e+01                                       & 19          & 1.34    & 4.129e-01        & 44445        & 55486          & 24115  &   1.00 \\   
 10 ]  & 8.285e+00                                       & 5          & 1.20    & 3.468e-02        & 118601        & 139384          & 65536  &   1.00 \\   
 11 ]  & 5.542e+00                                       & 5          & 0.26    & 3.166e-02        & 12583        & 13393          & 11143  &   1.00 \\   
 12 ]  & 6.229e+00                                       & 10          & 0.08    & 2.012e-02        & 1177        & 2518          & 1107  &   1.00 \\   
 13 ]  & 7.981e+01                                       & 23          & 2.98    & 3.437e-01        & 98462        & 115140          & 48837  &   1.00 \\   
 14 ]  & 5.676e+00                                       & 5          & 0.08    & 2.687e-10        & 386        & 1409          & 4824  &   1.00 \\   
 15 ]  & 8.013e+01                                       & 20          & 1.89    & 3.076e-01        & 63052        & 70549          & 32580  &   1.00 \\  
\bottomrule
\end{tabular}
\end{table}

In Table~\ref{tab:S2-tab2} we report the analogous experiment in which we vary the value of the parameter $\beta$, that is, the reciprocal weight between the term in the Frobenius norm and the one in the $1$--norm.
\begin{table}[htbp]\centering
\caption{Katz. S1 Scenario. $(1-\beta)/\beta=100$. The numbering of the test cases corresponds to the matrices in Table~\ref{tab:dataset}. The table reports:  the conditioning of the constraints matrix, the number of IPM iterations,
the elapsed time measured in seconds, the relative norm of the obtained perturbation together with the number of nonzero entries of positive ($+$), negative $(-)$ sign and the number of non-zero elements in the Cholesky factor of equation \eqref{eq:schur_IPM} in the last IPM iteration. The correlation between the obtained
centrality measure with the target one is measured in terms of Kendall's $\kappa_\tau$.}\label{tab:S2-tab2}
\begin{tabular}{rllllrrrcc}
\toprule
& $\operatorname{cond}(LL^{\top})$ & Iter & T (s) & $\nicefrac{\|\Delta\|_F}{\|A\|_F}$ & $+$  & $-$ & \texttt{nnz Chol.} & $\kappa_\tau$ \\
&     &      &       &                &      &     & of \eqref{eq:schur_IPM} &  \\
\midrule
 1 ]  & 1.500e+02                                        & 13        & 9.21    & 1.802e-02         & 19229        & 23006          & 597905   &  1.00 \\  
 2 ]  & 1.282e+03                                        & 15        & 1.81    & 1.591e+00         & 10811        & 13682          & 107944   &  1.00 \\  
 3 ]  & 9.011e+01                                        & 17        & 5.73    & 1.907e-02         & 6419        & 7714          & 312687   &  1.00 \\  
 4 ]  & 1.183e+02                                        & 18        & 24.76    & 1.901e-02         & 23244        & 30247          & 1158526   &  1.00 \\  
 5 ]  & 1.522e+01                                        & 8        & 1.69    & 6.483e-02         & 2564        & 9004          & 197931   &  1.00 \\  
 6 ]  & 2.195e+02                                        & 28        & 21.42    & 3.822e-01         & 126371        & 150527          & 740282   &  1.00 \\  
 7 ]  & 2.212e+01                                        & 8        & 1.72    & 2.978e-02         & 3072        & 11088          & 209768   &  1.00 \\  
 8 ]  & 5.360e+01                                        & 13        & 1.02    & 8.773e-02         & 3716        & 1720          & 63223   &  1.00 \\  
 9 ]  & 1.880e+02                                        & 27        & 7.26    & 5.465e-01         & 43667        & 54587          & 256227   &  1.00 \\  
 10 ]  & 6.455e+01                                        & 15        & 14.35    & 1.015e-01         & 27299        & 32691          & 851836   &  1.00 \\  
 11 ]  & 4.464e+01                                        & 13        & 1.97    & 1.247e-01         & 4427        & 4300          & 142415   &  1.00 \\  
 12 ]  & 3.598e+01                                        & 11        & 0.22    & 9.505e-02         & 193        & 202          & 12435   &  1.00 \\  
 13 ]  & 3.129e+02                                        & 28        & 14.43    & 4.453e-01         & 94368        & 110668          & 517937   &  1.00 \\  
 14 ]  & 1.896e+01                                        & 7        & 0.25    & 4.968e-02         & 366        & 1389          & 32172   &  1.00 \\  
 15 ]  & 2.944e+02                                        & 26        & 8.34    & 5.337e-01         & 61809        & 69315          & 343776   &  1.00 \\  
\bottomrule
\end{tabular}
\end{table} 
Specifically we select a value of $\beta$ such that $\nicefrac{(1-\beta)}{\beta}=100$. Comparing these results with those in Table~\ref{tab:S2-tab1}, we observe as expected, a decrease in the number of non-zero entries of the computed perturbation at the price of an increased relative norm $\|\Delta\|_F/\|A\|_F$ due to the presence in the objective function also of the $1$-norm  penalty. This makes the optimization procedure more complicated, as observed by the growth in the number of iterations needed to reach convergence.

\subsubsection{The S2 scenario} \label{sec:Katz_S2} 
In Tables~\ref{tab:s3-tab1} and~\ref{tab:s3-tab2} we report the result obtained for the S2 scenario. The tables are relative to the same choices of the parameters $\beta$ and $\alpha$ ($\beta = 1$ and of $\alpha = \nicefrac{1}{2\rho(A)}$)
we employed the S1 scenario and reported the same metrics used there.
\begin{table}[htbp]\centering
\caption{Katz. S2 Scenario. $\beta=1$. The numbering of the test cases corresponds to the matrices in Table~\ref{tab:dataset}. The table reports:  the conditioning of the constraints matrix, the number of IPM iterations,
the elapsed time measured in seconds, the relative norm of the obtained perturbation together with the number of nonzero entries of positive ($+$), negative $(-)$ sign and the number of non-zero elements in the Cholesky factor of equation \eqref{eq:schur_IPM} in the last IPM iteration. The correlation between the obtained
centrality measure with the target one is measured in terms of Kendall's $\kappa_\tau$.}\label{tab:s3-tab1}
\begin{tabular}{rllllrrrcc}
\toprule
& $\operatorname{cond}(LL^{\top})$ & Iter & T (s) & $\nicefrac{\|\Delta\|_F}{\|A\|_F}$ & $+$  & $-$ & \texttt{nnz Chol.} & $\kappa_\tau$ \\
&     &      &       &                &      &     & of \eqref{eq:schur_IPM} &  \\

\midrule
 1 ]  & 1.199e+00                                           & 5          & 0.94    & 1.802e-02      & 146886        & 101416          & 6545  &   1.00  \\   
 2 ]  & 5.889e+02                                           & 15          & 0.63    & 1.591e+00      & 24314        & 7828          & 10680  &   1.00  \\   
 3 ]  & 5.673e+00                                           & 12          & 1.09    & 1.907e-02      & 9923        & 7698          & 11397  &   1.00  \\   
 4 ]  & 1.137e+01                                           & 12          & 3.78    & 1.901e-02      & 33462        & 31160          & 39082  &   1.00  \\   
 5 ]  & 2.169e+00                                           & 5          & 0.32    & 6.483e-02      & 17526        & 14068          & 22499  &   1.00  \\   
 6 ]  & 5.723e+01                                           & 24          & 4.26    & 3.822e-01      & 111141        & 116710          & 67578  &   1.00  \\   
 7 ]  & 3.447e+00                                           & 5          & 0.36    & 2.978e-02      & 8688        & 5453          & 16840  &   1.00  \\   
 8 ]  & 1.037e+01                                           & 5          & 0.12    & 8.773e-02      & 2013        & 1578          & 2851  &   1.00  \\   
 9 ]  & 4.786e+01                                           & 21          & 1.56    & 5.465e-01      & 42497        & 43350          & 24115  &   1.00  \\   
 10 ]  & 1.169e+01                                           & 6          & 1.43    & 1.015e-01      & 100320        & 85616          & 65536  &   1.00  \\   
 11 ]  & 6.158e+00                                           & 5          & 0.23    & 1.247e-01      & 7251        & 6783          & 11143  &   1.00  \\   
 12 ]  & 7.461e+00                                           & 10          & 0.07    & 9.505e-02      & 991        & 2714          & 1107  &   1.00  \\   
 13 ]  & 8.208e+01                                           & 23          & 2.95    & 4.453e-01      & 89297        & 92063          & 48837  &   1.00  \\   
 14 ]  & 5.676e+00                                           & 6          & 0.08    & 4.968e-02      & 1002        & 669          & 4824  &   1.00  \\   
 15 ]  & 8.338e+01                                           & 21          & 1.91    & 5.337e-01      & 46689        & 50026          & 32580  &   1.00  \\   
\bottomrule
\end{tabular}
\end{table} 

\begin{table}[htbp]\centering
\caption{Katz. S2 Scenario. $(1-\beta)/\beta=100$. The numbering of the test cases corresponds to the matrices in Table~\ref{tab:dataset}. The table reports:  the conditioning of the constraints matrix, the number of IPM iterations,
the elapsed time measured in seconds, the relative norm of the obtained perturbation together with the number of nonzero entries of positive ($+$), negative $(-)$ sign and the number of non-zero elements in the Cholesky factor of equation \eqref{eq:schur_IPM} in the last IPM iteration. The correlation between the obtained
centrality measure with the target one is measured in terms of Kendall's $\kappa_\tau$.}\label{tab:s3-tab2}
\begin{tabular}{rllllrrrcc}
\toprule
& $\operatorname{cond}(LL^{\top})$ & Iter & T (s) & $\nicefrac{\|\Delta\|_F}{\|A\|_F}$ & $+$  & $-$ & \texttt{nnz Chol.} & $\kappa_\tau$ \\
&     &      &       &                &      &     & of \eqref{eq:schur_IPM} &  \\

\midrule
 1 ]  & 1.512e+02                                         & 22       & 15.23    & 2.485e+01      & 26476        & 18341          & 597905  &   1.00 \\   
 2 ]  & 1.838e+03                                         & 24       & 2.84    & 3.576e+02      & 2994        & 4411          & 107944  &   1.00 \\   
 3 ]  & 9.129e+01                                         & 17       & 5.45    & 2.678e+00      & 1330        & 2041          & 312687  &   1.00 \\   
 4 ]  & 1.176e+02                                         & 20       & 28.98    & 5.288e+00      & 3961        & 7646          & 1158526  &   1.00 \\   
 5 ]  & 1.522e+01                                         & 18       & 3.72    & 2.588e+01      & 10566        & 8091          & 197931  &   1.00 \\   
 6 ]  & 2.242e+02                                         & 32       & 23.96    & 1.251e+07      & 84844        & 86183          & 740282  &   1.00 \\   
 7 ]  & 2.212e+01                                         & 17       & 3.88    & 1.288e+01      & 7671        & 4061          & 209768  &   1.00 \\   
 8 ]  & 5.358e+01                                         & 14       & 1.05    & 2.125e+01      & 743        & 637          & 63223  &   1.00 \\   
 9 ]  & 1.884e+02                                         & 31       & 8.02    & 1.244e+07      & 35519        & 36245          & 256227  &   1.00 \\   
 10 ]  & 8.551e+01                                         & 18       & 17.32    & 1.073e+02      & 25598        & 23428          & 851836  &   1.00 \\   
 11 ]  & 4.505e+01                                         & 17       & 2.79    & 4.766e+01      & 3149        & 2935          & 142415  &   1.00 \\   
 12 ]  & 4.178e+01                                         & 15       & 0.26    & 2.683e+00      & 116        & 167          & 12435  &   1.00 \\   
 13 ]  & 3.151e+02                                         & 33       & 16.85    & 1.894e+07      & 60963        & 61177          & 517937  &   1.00 \\   
 14 ]  & 1.896e+01                                         & 16       & 0.65    & 6.828e+00      & 799        & 541          & 32172  &   1.00 \\   
 15 ]  & 3.026e+02                                         & 29       & 9.28    & 2.274e+07      & 25824        & 27907          & 343776  &   1.00 \\  
\bottomrule
\end{tabular}
\end{table}

The comparison of Tables~\ref{tab:S2-tab1} and~\ref{tab:S2-tab2} (S1) with Tables~\ref{tab:s3-tab1} and~\ref{tab:s3-tab2} (S2) confirm the robustness of the approach here proposed for different target distributions, producing in both cases very high correlation coefficients. The same comparison shows, moreover, that in the S2 scenario, in general, perturbations with larger relative norms have to be produced and that times and number of IPM iterations remain limited and comparable with those from the S1 scenario.

\subsection{Enforcing PageRank centrality}\label{sec:enforcing_pagerank}
In this section we describe some experiments related to the solution of the problem $\mathcal{P}_{\alpha,\beta}^{\text{Pr}}$ as formulated in~\eqref{eq:p1PR_standard_form_Final}. Specifically, in Section~\ref{sec:tsdp_comparison} we compare our proposal with the method introduced in~\cite{gillis2024assigning} for solving the target stationary distribution problem with a sparse stochastic matrix. In Section~\ref{sec:PR_S1_S2} we report the numerical results for the solution of the Pagerank centrality assignment when $\alpha= 0.9$ and $\mathbf{t}=\frac{1}{n}\mathbf{1}$ for the dataset in Table~\ref{tab:dataset}.   

\subsubsection{Comparison with the TSDP method}\label{sec:tsdp_comparison}

The method introduced in~\cite{gillis2024assigning} addresses a problem similar to $\mathcal{P}_{\alpha,\beta}^{\text{Pr}}$ as formulated in~\eqref{eq:p1PR_standard_form_Final}, but with a significant difference: it identifies a sparse perturbation of a sparse stochastic matrix to assign a predetermined stationary vector. In contrast, our approach preserves the structure of the PageRank problem by modifying only the sparse part of the matrix and the $\alpha$ value, while keeping the preference vector unchanged.
{In particular, in the TSDP method, we look for a modification $\Delta_{\text{TDSP}}$ such that $G+\Delta_{\text{TDSP}}$ has the desired stationary distribution $\widehat{\boldsymbol{\pi}}$, where $G$ is defined in \eqref{eq:G}. The resulting modified stochastic matrix is $\widetilde{G}=G+\Delta_{\text{TDSP}}=\alpha (D^{-1}A) +(1-\alpha) \mathbf{1} \mathbf{t}^{\top}+\Delta_{\text{TDSP}}$. By construction $( \alpha (D^{-1}A) +\Delta_{\text{TDSP}})\mathbf{1}=\alpha \mathbf{1}$. However, $ \alpha (D^{-1}A) +\Delta_{\text{TDSP}}$ might have negative entries, even though $\Delta_{\text{TDSP}}$ is chosen to have the same sparsity pattern of $I+A$. To this regard, we refer to Example~\ref{example:negative}. In our approach, we circumvent this issue by acting on the teleportation parameter $\alpha$ (see Proposition~\ref{pro:interpretability}). }

To illustrate the difference between TSDP and our approach, we consider Zachary's karate club network (34 nodes, 156 edges). We aim to adjust the stationary vector so that nodes $1$ and $34$ have their weights reduced by $0.5$. Specifically, we set $\widehat{\boldsymbol{\pi}} = \boldsymbol{\pi}$, multiply the entries for nodes 1 and 34 by 0.5, and normalize $\widehat{\boldsymbol{\pi}}$ so that $\widehat{\boldsymbol{\pi}}^{\top}\mathbf{1} = 1$. We then apply our method ($\alpha = 0.85$, $\beta = 0.1$) to determine the perturbation $\Delta$. We also apply the TSDP method from~\cite{gillis2024assigning}, once using the pattern of $A+I$ as in our method, and once allowing modification of the entire matrix $G$.

As depicted in Fig.~\ref{fig:pr_stationary}, all methods generate perturbations that produce stationary vectors matching the target within machine precision. 
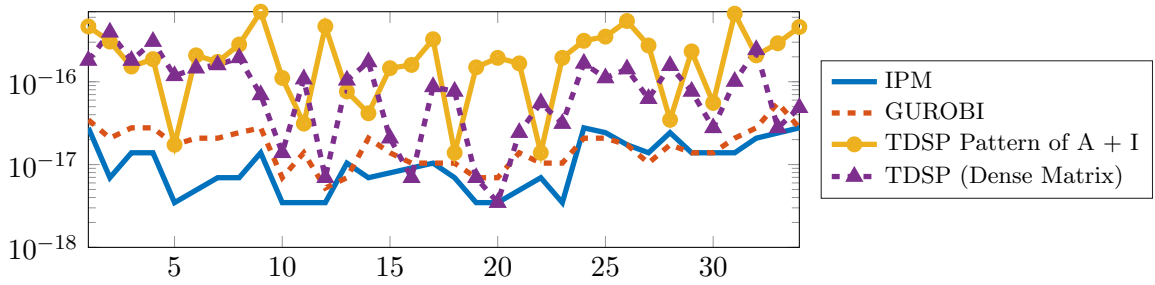
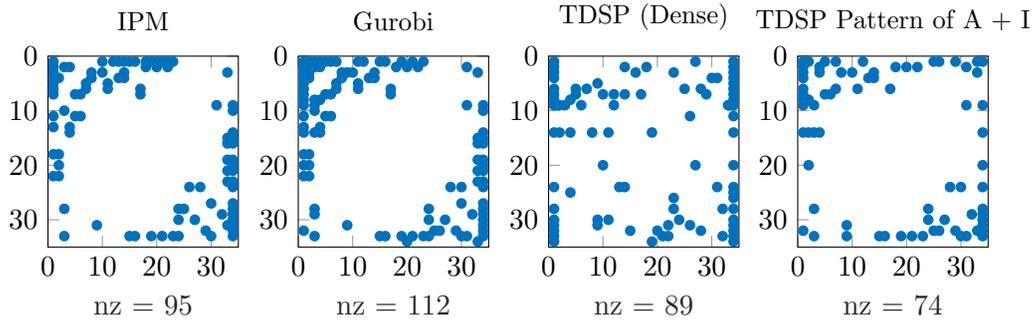
\begin{figure}[htbp]
    \centering
    \subfloat[Errors on the obtained stationary vectors with respect to the assigned stationary distribution.\label{fig:pr_stationary}]{\definecolor{mycolor1}{rgb}{0.00000,0.44700,0.74100}%
\definecolor{mycolor2}{rgb}{0.85000,0.32500,0.09800}%
\definecolor{mycolor3}{rgb}{0.92900,0.69400,0.12500}%
\definecolor{mycolor4}{rgb}{0.49400,0.18400,0.55600}%
\begin{tikzpicture}

\begin{axis}[%
width=0.605\columnwidth,
height=0.2\columnwidth,
at={(0\columnwidth,0\columnwidth)},
scale only axis,
xmin=1,
xmax=34,
ymode=log,
ymin=1e-18,
ymax=7.0082828429463e-16,
yminorticks=true,
axis background/.style={fill=white},
legend style={at={(1.03,0.5)}, anchor=west, legend cell align=left, align=left, draw=white!15!black, font=\footnotesize}
]
\addplot [color=mycolor1, line width=2.0pt]
  table[row sep=crcr]{%
1	2.77555756156289e-17\\
2	6.93889390390723e-18\\
3	1.38777878078145e-17\\
4	1.38777878078145e-17\\
5	3.46944695195361e-18\\
6	0\\
7	6.93889390390723e-18\\
8	6.93889390390723e-18\\
9	1.38777878078145e-17\\
10	3.46944695195361e-18\\
11	0\\
12	3.46944695195361e-18\\
13	1.04083408558608e-17\\
14	6.93889390390723e-18\\
15	0\\
16	0\\
17	1.04083408558608e-17\\
18	6.93889390390723e-18\\
19	3.46944695195361e-18\\
20	3.46944695195361e-18\\
21	0\\
22	6.93889390390723e-18\\
23	3.46944695195361e-18\\
24	2.77555756156289e-17\\
25	2.42861286636753e-17\\
26	1.73472347597681e-17\\
27	1.38777878078145e-17\\
28	2.42861286636753e-17\\
29	1.38777878078145e-17\\
30	1.38777878078145e-17\\
31	1.38777878078145e-17\\
32	2.08166817117217e-17\\
33	0\\
34	2.77555756156289e-17\\
};
\addlegendentry{IPM}

\addplot [color=mycolor2, dashed, line width=2.0pt]
  table[row sep=crcr]{%
1	3.46944695195361e-17\\
2	2.08166817117217e-17\\
3	2.77555756156289e-17\\
4	2.77555756156289e-17\\
5	1.73472347597681e-17\\
6	2.08166817117217e-17\\
7	2.08166817117217e-17\\
8	2.42861286636753e-17\\
9	2.77555756156289e-17\\
10	6.93889390390723e-18\\
11	1.38777878078145e-17\\
12	5.20417042793042e-18\\
13	6.93889390390723e-18\\
14	2.08166817117217e-17\\
15	1.38777878078145e-17\\
16	1.04083408558608e-17\\
17	1.04083408558608e-17\\
18	1.04083408558608e-17\\
19	6.93889390390723e-18\\
20	6.93889390390723e-18\\
21	1.38777878078145e-17\\
22	1.04083408558608e-17\\
23	1.04083408558608e-17\\
24	2.08166817117217e-17\\
25	2.08166817117217e-17\\
26	1.73472347597681e-17\\
27	1.04083408558608e-17\\
28	1.73472347597681e-17\\
29	1.38777878078145e-17\\
30	1.38777878078145e-17\\
31	2.08166817117217e-17\\
32	2.77555756156289e-17\\
33	5.55111512312578e-17\\
34	2.77555756156289e-17\\
};
\addlegendentry{GUROBI}

\addplot [color=mycolor3, line width=2.0pt, mark=o, mark options={solid, mycolor3}]
  table[row sep=crcr]{%
1	4.64905891561784e-16\\
2	3.05311331771918e-16\\
3	1.52655665885959e-16\\
4	1.87350135405495e-16\\
5	1.73472347597681e-17\\
6	2.08166817117217e-16\\
7	1.73472347597681e-16\\
8	2.81025203108243e-16\\
9	7.0082828429463e-16\\
10	1.11022302462516e-16\\
11	3.12250225675825e-17\\
12	4.66640615037761e-16\\
13	7.63278329429795e-17\\
14	4.16333634234434e-17\\
15	1.45716771982052e-16\\
16	1.59594559789866e-16\\
17	3.2612801348364e-16\\
18	1.38777878078145e-17\\
19	1.49186218934005e-16\\
20	1.94289029309402e-16\\
21	1.66533453693773e-16\\
22	1.38777878078145e-17\\
23	1.94289029309402e-16\\
24	3.12250225675825e-16\\
25	3.50414142147315e-16\\
26	5.41233724504764e-16\\
27	2.74086309204336e-16\\
28	3.46944695195361e-17\\
29	2.32452945780892e-16\\
30	5.55111512312578e-17\\
31	6.6266436782314e-16\\
32	2.08166817117217e-16\\
33	2.91433543964104e-16\\
34	4.57966997657877e-16\\
};
\addlegendentry{TDSP Pattern of A + I}

\addplot [color=mycolor4, dashed, line width=2.0pt, mark=triangle, mark options={solid, mycolor4}]
  table[row sep=crcr]{%
1	1.80411241501588e-16\\
2	3.95516952522712e-16\\
3	1.80411241501588e-16\\
4	3.05311331771918e-16\\
5	1.17961196366423e-16\\
6	1.45716771982052e-16\\
7	1.59594559789866e-16\\
8	1.94289029309402e-16\\
9	6.93889390390723e-17\\
10	1.38777878078145e-17\\
11	1.07552855510562e-16\\
12	6.93889390390723e-18\\
13	1.04083408558608e-16\\
14	1.73472347597681e-16\\
15	2.08166817117217e-17\\
16	6.93889390390723e-18\\
17	8.67361737988404e-17\\
18	7.63278329429795e-17\\
19	6.93889390390723e-18\\
20	3.46944695195361e-18\\
21	2.42861286636753e-17\\
22	5.55111512312578e-17\\
23	3.12250225675825e-17\\
24	1.66533453693773e-16\\
25	1.11022302462516e-16\\
26	1.42247325030098e-16\\
27	6.24500451351651e-17\\
28	1.56125112837913e-16\\
29	7.63278329429795e-17\\
30	2.77555756156289e-17\\
31	1.00613961606655e-16\\
32	2.42861286636753e-16\\
33	2.77555756156289e-17\\
34	4.85722573273506e-17\\
};
\addlegendentry{TDSP (Dense Matrix)}

\end{axis}

\end{tikzpicture}
    
    \subfloat[Pattern of the different perturbation.\label{fig:pr_pattern}]{\definecolor{mycolor1}{rgb}{0.00000,0.44700,0.74100}%
\begin{tikzpicture}

\begin{axis}[%
width=0.162\columnwidth,
height=0.162\columnwidth,
at={(0\columnwidth,0\columnwidth)},
scale only axis,
xmin=0,
xmax=35,
xlabel style={font=\color{white!15!black}},
xlabel={nz = 95},
y dir=reverse,
ymin=0,
ymax=35,
axis background/.style={fill=white},
title style={font=\small},
title={IPM}
]
\addplot [color=mycolor1, line width=2.0pt, only marks, mark size=1pt, mark=*, mark options={solid, mycolor1}, forget plot]
  table[row sep=crcr]{%
1	1\\
1	2\\
1	3\\
1	4\\
1	5\\
1	6\\
1	7\\
1	11\\
1	13\\
1	18\\
1	22\\
2	4\\
2	18\\
2	20\\
2	22\\
3	2\\
3	10\\
3	28\\
3	33\\
4	2\\
4	13\\
4	14\\
5	7\\
5	11\\
6	7\\
6	11\\
7	5\\
7	6\\
8	3\\
8	4\\
9	31\\
10	1\\
10	3\\
11	5\\
11	6\\
12	1\\
13	1\\
13	4\\
14	1\\
14	3\\
14	4\\
15	1\\
15	33\\
16	1\\
16	33\\
17	6\\
17	7\\
18	1\\
18	2\\
19	1\\
19	33\\
20	1\\
20	2\\
21	1\\
21	33\\
22	1\\
22	2\\
23	1\\
23	33\\
24	28\\
24	30\\
24	33\\
25	28\\
26	24\\
27	30\\
28	24\\
29	32\\
30	27\\
30	33\\
31	9\\
32	29\\
33	3\\
33	15\\
33	16\\
33	19\\
33	21\\
33	23\\
33	31\\
34	9\\
34	10\\
34	14\\
34	15\\
34	16\\
34	19\\
34	20\\
34	21\\
34	23\\
34	24\\
34	27\\
34	28\\
34	29\\
34	30\\
34	31\\
34	32\\
34	33\\
};
\end{axis}

\begin{axis}[%
width=0.162\columnwidth,
height=0.162\columnwidth,
at={(0.213\columnwidth,0\columnwidth)},
scale only axis,
xmin=0,
xmax=35,
xlabel style={font=\color{white!15!black}},
xlabel={nz = 112},
y dir=reverse,
ymin=0,
ymax=35,
axis background/.style={fill=white},
title style={font=\small},
title={Gurobi}
]
\addplot [color=mycolor1, line width=2.0pt, only marks, mark size=1pt, mark=*, mark options={solid, mycolor1}, forget plot]
  table[row sep=crcr]{%
1	1\\
1	2\\
1	3\\
1	4\\
1	5\\
1	6\\
1	7\\
1	8\\
1	9\\
1	11\\
1	13\\
1	14\\
1	18\\
1	20\\
1	22\\
1	32\\
2	1\\
2	3\\
2	4\\
2	18\\
2	20\\
2	22\\
3	1\\
3	2\\
3	8\\
3	9\\
3	10\\
3	14\\
3	28\\
3	29\\
3	33\\
4	1\\
4	2\\
4	8\\
4	13\\
4	14\\
5	1\\
5	7\\
5	11\\
6	1\\
6	7\\
6	11\\
7	1\\
7	5\\
7	6\\
8	3\\
8	4\\
9	3\\
9	31\\
10	3\\
11	1\\
11	5\\
11	6\\
12	1\\
13	4\\
14	3\\
14	4\\
15	1\\
15	33\\
16	1\\
16	33\\
17	6\\
17	7\\
18	2\\
19	1\\
19	33\\
20	2\\
20	34\\
21	1\\
21	33\\
22	2\\
23	1\\
23	33\\
24	28\\
24	30\\
24	33\\
25	32\\
26	32\\
27	30\\
28	24\\
29	32\\
30	24\\
30	27\\
30	33\\
31	2\\
31	9\\
31	33\\
32	29\\
33	3\\
33	15\\
33	16\\
33	19\\
33	21\\
33	23\\
33	34\\
34	9\\
34	10\\
34	14\\
34	15\\
34	16\\
34	19\\
34	20\\
34	21\\
34	23\\
34	24\\
34	27\\
34	28\\
34	29\\
34	30\\
34	31\\
34	32\\
34	33\\
};
\end{axis}

\begin{axis}[%
width=0.162\columnwidth,
height=0.162\columnwidth,
at={(0.426\columnwidth,0\columnwidth)},
scale only axis,
xmin=0,
xmax=35,
xlabel style={font=\color{white!15!black}},
xlabel={nz = 89},
y dir=reverse,
ymin=0,
ymax=35,
axis background/.style={fill=white},
title style={font=\small},
title={TDSP (Dense)}
]
\addplot [color=mycolor1, line width=2.0pt, only marks, mark size=1pt, mark=*, mark options={solid, mycolor1}, forget plot]
  table[row sep=crcr]{%
1	1\\
1	2\\
1	3\\
1	4\\
1	6\\
1	7\\
1	8\\
1	9\\
1	14\\
1	24\\
1	28\\
1	30\\
1	31\\
1	32\\
1	33\\
1	34\\
2	9\\
2	14\\
3	9\\
4	8\\
4	14\\
4	25\\
5	6\\
5	7\\
6	9\\
7	6\\
8	14\\
9	5\\
9	30\\
9	31\\
10	7\\
10	20\\
11	14\\
11	30\\
12	7\\
12	9\\
13	24\\
14	2\\
14	7\\
14	24\\
15	32\\
16	3\\
17	7\\
18	2\\
19	14\\
19	34\\
20	32\\
21	33\\
22	31\\
22	33\\
23	3\\
23	26\\
23	28\\
24	30\\
25	6\\
26	11\\
26	31\\
27	1\\
27	20\\
28	6\\
28	32\\
29	7\\
30	4\\
31	4\\
31	24\\
32	28\\
32	33\\
33	9\\
34	1\\
34	2\\
34	3\\
34	4\\
34	5\\
34	6\\
34	7\\
34	8\\
34	9\\
34	11\\
34	14\\
34	20\\
34	24\\
34	25\\
34	26\\
34	28\\
34	30\\
34	31\\
34	32\\
34	33\\
34	34\\
};
\end{axis}

\begin{axis}[%
width=0.162\columnwidth,
height=0.162\columnwidth,
at={(0.638\columnwidth,0\columnwidth)},
scale only axis,
xmin=0,
xmax=35,
xlabel style={font=\color{white!15!black}},
xlabel={nz = 74},
y dir=reverse,
ymin=0,
ymax=35,
axis background/.style={fill=white},
title style={font=\small},
title={TDSP Pattern of A + I}
]
\addplot [color=mycolor1, line width=2.0pt, only marks, mark size=1pt, mark=*, mark options={solid, mycolor1}, forget plot]
  table[row sep=crcr]{%
1	1\\
1	2\\
1	3\\
1	4\\
1	6\\
1	7\\
1	8\\
1	9\\
1	14\\
1	32\\
2	1\\
2	8\\
2	14\\
2	20\\
3	9\\
3	14\\
3	28\\
3	33\\
4	2\\
4	3\\
4	14\\
5	1\\
5	7\\
6	7\\
7	1\\
7	6\\
8	3\\
9	31\\
9	33\\
10	3\\
11	6\\
12	1\\
13	4\\
14	3\\
14	4\\
15	33\\
16	33\\
17	6\\
18	2\\
19	33\\
20	2\\
21	33\\
22	2\\
23	33\\
24	28\\
24	30\\
25	1\\
25	32\\
26	1\\
26	32\\
27	30\\
28	24\\
29	1\\
29	32\\
30	24\\
30	33\\
31	2\\
31	9\\
31	33\\
32	29\\
33	1\\
33	3\\
33	32\\
34	1\\
34	9\\
34	14\\
34	20\\
34	24\\
34	28\\
34	29\\
34	30\\
34	31\\
34	32\\
34	33\\
};
\end{axis}
\end{tikzpicture}
    
    \caption{Enforcing PageRank. Error on the target stationary distribution for the different methods and comparison of the perturbations patterns.}
\end{figure}
Examining the perturbation patterns in Fig.~\ref{fig:pr_pattern}, we see that the TDSP algorithm alters fewer entries in the dense probability matrix for both pattern choices. However, restricting changes to the sparse part of the matrix results in more non-zero elements. 
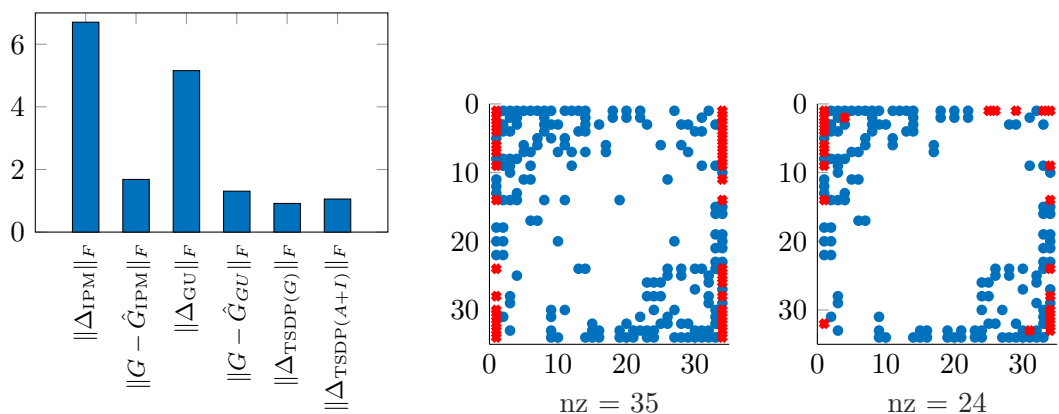
\begin{figure}[htbp]
    \centering
    \subfloat{\definecolor{mycolor1}{rgb}{0.00000,0.44700,0.74100}%
\begin{tikzpicture}

\begin{axis}[%
width=0.3\columnwidth,
height=0.186\columnwidth,
at={(0\columnwidth,0\columnwidth)},
scale only axis,
bar shift auto,
xticklabels={$\|\Delta_{\text{IPM}}\|_F$,$\|G-\hat{G}_{\text{IPM}}\|_F$,$\|\Delta_{\text{GU}}\|_F$,$\|G-\hat{G}_{GU}\|_F$,$\|\Delta_{\text{TSDP}(G)}\|_F$,$\|\Delta_{\text{TSDP}(A+I)}\|_F$},
enlarge x limits=0.2,
xtick=data,
xticklabel style={rotate=90,font=\footnotesize},
ymin=0,
ymax=7,
axis background/.style={fill=white}
]
\addplot[ybar, fill=mycolor1, draw=black, area legend] table[row sep=crcr] {%
0 6.70345808700298\\
1 1.68129422639828\\
2 5.15634513007784\\
3 1.3060423235387\\
4 0.914487708990402\\
5 1.05550236480087\\
};
\end{axis}
\end{tikzpicture}
    \subfloat{\definecolor{mycolor1}{rgb}{0.00000,0.44700,0.74100}%
\begin{tikzpicture}

\begin{axis}[%
width=0.204\columnwidth,
height=0.204\columnwidth,
at={(0\columnwidth,0\columnwidth)},
scale only axis,
xmin=0,
xmax=35,
xlabel style={font=\color{white!15!black}},
xlabel={nz = 35},
y dir=reverse,
ymin=0,
ymax=35,
axis background/.style={fill=white},
axis x line*=bottom,
axis y line*=left
]
\addplot [color=mycolor1, line width=2.0pt, only marks, mark size=1pt, mark=*, mark options={solid, mycolor1}, forget plot]
  table[row sep=crcr]{%
1	5\\
1	8\\
1	11\\
1	12\\
1	13\\
1	18\\
1	20\\
1	22\\
2	1\\
2	3\\
2	4\\
2	8\\
2	9\\
2	14\\
2	18\\
2	20\\
2	22\\
2	31\\
3	1\\
3	2\\
3	4\\
3	8\\
3	9\\
3	10\\
3	14\\
3	28\\
3	29\\
3	33\\
4	1\\
4	2\\
4	3\\
4	8\\
4	13\\
4	14\\
4	25\\
5	1\\
5	6\\
5	7\\
5	11\\
6	1\\
6	7\\
6	9\\
6	11\\
6	17\\
7	1\\
7	5\\
7	6\\
7	17\\
8	1\\
8	2\\
8	3\\
8	4\\
8	14\\
9	1\\
9	3\\
9	5\\
9	30\\
9	31\\
9	33\\
9	34\\
10	3\\
10	7\\
10	20\\
10	34\\
11	1\\
11	5\\
11	6\\
11	14\\
11	30\\
12	1\\
12	7\\
12	9\\
13	1\\
13	4\\
13	24\\
14	1\\
14	2\\
14	3\\
14	4\\
14	7\\
14	24\\
14	34\\
15	32\\
15	33\\
15	34\\
16	3\\
16	33\\
16	34\\
17	6\\
17	7\\
18	1\\
18	2\\
19	14\\
19	33\\
19	34\\
20	1\\
20	2\\
20	32\\
20	34\\
21	33\\
21	34\\
22	1\\
22	2\\
22	31\\
22	33\\
23	3\\
23	26\\
23	28\\
23	33\\
23	34\\
24	26\\
24	28\\
24	30\\
24	33\\
24	34\\
25	6\\
25	26\\
25	28\\
25	32\\
26	11\\
26	24\\
26	25\\
26	31\\
26	32\\
27	1\\
27	20\\
27	30\\
27	34\\
28	3\\
28	6\\
28	24\\
28	25\\
28	32\\
28	34\\
29	3\\
29	7\\
29	32\\
29	34\\
30	4\\
30	24\\
30	27\\
30	33\\
30	34\\
31	2\\
31	4\\
31	9\\
31	24\\
31	33\\
31	34\\
32	1\\
32	25\\
32	26\\
32	28\\
32	29\\
32	33\\
32	34\\
33	3\\
33	9\\
33	15\\
33	16\\
33	19\\
33	21\\
33	23\\
33	24\\
33	30\\
33	31\\
33	32\\
33	34\\
34	10\\
34	15\\
34	16\\
34	19\\
34	20\\
34	21\\
34	23\\
34	27\\
34	29\\
};
\addplot [color=red, line width=2.0pt, only marks, mark=x, mark options={solid, red}, forget plot]
  table[row sep=crcr]{%
1	1\\
1	2\\
1	3\\
1	4\\
1	6\\
1	7\\
1	9\\
1	14\\
1	24\\
1	28\\
1	30\\
1	31\\
1	32\\
1	33\\
1	34\\
34	1\\
34	2\\
34	3\\
34	4\\
34	5\\
34	6\\
34	7\\
34	8\\
34	9\\
34	11\\
34	14\\
34	24\\
34	25\\
34	26\\
34	28\\
34	30\\
34	31\\
34	32\\
34	33\\
34	34\\
};
\end{axis}

\begin{axis}[%
width=0.204\columnwidth,
height=0.204\columnwidth,
at={(0.279\columnwidth,0\columnwidth)},
scale only axis,
xmin=0,
xmax=35,
xlabel style={font=\color{white!15!black}},
xlabel={nz = 24},
y dir=reverse,
ymin=0,
ymax=35,
axis background/.style={fill=white},
axis x line*=bottom,
axis y line*=left
]
\addplot [color=mycolor1, line width=2.0pt, only marks, mark size=1pt, mark=*, mark options={solid, mycolor1}, forget plot]
  table[row sep=crcr]{%
1	5\\
1	8\\
1	11\\
1	12\\
1	13\\
1	18\\
1	20\\
1	22\\
2	1\\
2	3\\
2	4\\
2	8\\
2	14\\
2	18\\
2	20\\
2	22\\
2	31\\
3	1\\
3	2\\
3	4\\
3	8\\
3	9\\
3	10\\
3	14\\
3	28\\
3	29\\
3	33\\
4	1\\
4	3\\
4	8\\
4	13\\
4	14\\
5	1\\
5	7\\
5	11\\
6	1\\
6	7\\
6	11\\
6	17\\
7	1\\
7	5\\
7	6\\
7	17\\
8	1\\
8	2\\
8	3\\
8	4\\
9	1\\
9	3\\
9	31\\
9	33\\
9	34\\
10	3\\
10	34\\
11	1\\
11	5\\
11	6\\
12	1\\
13	1\\
13	4\\
14	1\\
14	2\\
14	3\\
14	4\\
14	34\\
15	33\\
15	34\\
16	33\\
16	34\\
17	6\\
17	7\\
18	1\\
18	2\\
19	33\\
19	34\\
20	1\\
20	2\\
20	34\\
21	33\\
21	34\\
22	1\\
22	2\\
23	33\\
23	34\\
24	26\\
24	28\\
24	30\\
24	33\\
24	34\\
25	26\\
25	28\\
25	32\\
26	24\\
26	25\\
26	32\\
27	30\\
27	34\\
28	3\\
28	24\\
28	25\\
28	34\\
29	3\\
29	32\\
29	34\\
30	24\\
30	27\\
30	33\\
30	34\\
31	2\\
31	9\\
31	34\\
32	1\\
32	25\\
32	26\\
32	29\\
32	33\\
32	34\\
33	3\\
33	9\\
33	15\\
33	16\\
33	19\\
33	21\\
33	23\\
33	24\\
33	30\\
33	31\\
33	32\\
33	34\\
34	10\\
34	15\\
34	16\\
34	19\\
34	20\\
34	21\\
34	23\\
34	27\\
34	29\\
};
\addplot [color=red, line width=2.0pt, only marks, mark=x, mark options={solid, red}, forget plot]
  table[row sep=crcr]{%
1	1\\
1	2\\
1	3\\
1	4\\
1	6\\
1	7\\
1	9\\
1	14\\
1	32\\
4	2\\
25	1\\
26	1\\
29	1\\
31	33\\
33	1\\
34	1\\
34	9\\
34	14\\
34	24\\
34	28\\
34	30\\
34	31\\
34	32\\
34	33\\
};
\end{axis}

\begin{axis}[%
width=0.633\columnwidth,
height=0.258\columnwidth,
at={(-0.086\columnwidth,-0.035\columnwidth)},
scale only axis,
xmin=0,
xmax=1,
ymin=0,
ymax=1,
axis line style={draw=none},
ticks=none,
axis x line*=bottom,
axis y line*=left
]
\end{axis}
\end{tikzpicture}
    
    \caption{Enforcing PageRank. Norm of the perturbation (left panel) and lack of sign preservation for the underlying sparse matrix for the TSDP perturbations (right panels).}
    \label{fig:pr_norms_and_sign_bug}
\end{figure}
Regarding the norms of the perturbations (Fig.~\ref{fig:pr_norms_and_sign_bug}), our method, which acts only on the adjacency matrix, results in a significantly larger perturbation. This perturbation is considerably reduced when applied to the entire stochastic matrix. Furthermore, since the method implemented in TSDP uses the commercial solver GUROBI for optimization, for completeness in the tests we also report our algorithm in which we replace the IPM described in Section~\ref{sec:ipm-solve} with the appropriate GUROBI routine. We observe that the latter implementation reaches a smaller perturbation {at the price of producing a higher number of nonzero entries, cfr., Fig.~\ref{fig:pr_pattern} and Fig.~\ref{fig:pr_norms_and_sign_bug}. This might be probably attributed to the fact that, due to the non-uniqueness of the solution, the two solvers converge to different solutions: one that promotes the optimization of the perturbation norm (GUROBI) and one that promotes the optimization of the $1$-norm (IPM).} Importantly, the perturbation induced on the adjacency matrix by the TSDP method, $G + \Delta_{\text{TDSP}} - (1-\alpha)\mathbf{1}\mathbf{t}^{\top}$, when applied to the dense stochastic matrix $G$, {with either the pattern of $G$ or the pattern of $I+A$,} produces unacceptable negative edge weights, as indicated by the red crosses in Fig.~\ref{fig:pr_norms_and_sign_bug}.

\subsubsection{S1 and S2 Scenarios} \label{sec:PR_S1_S2}
In Tables~\ref{tab:S1-PR_tab1} and~\ref{tab:S1-PR_tab2} we report the results for the PageRank problem for the S1 scenario whereas in Tables~\ref{tab:S2-PR_tab1} and~\ref{tab:S2-PR_tab2} we report the same results for the Scenario S2.
The presented numerical results, show, again, the effectiveness of our proposal achieving, for most of the problems Kendall's correlation coefficient $\kappa_\tau$  very close or equal to~$1$. The worst situations, with $\kappa_\tau$
in the range $[0.86,0.87]$
occur when the conditioning of $LL^{\top}$ is greater than $10^{35}$; however, these values of $\kappa_\tau$ are still  very good results in terms of ordering of the computed scores. {And indeed, to the best of our knowledge, the main reason for such mismatched orderings is mainly due to a fatal interaction between the finite precision and the inherent ill-conditioning of the problems. The numerical results presented in Section \ref{sec:augmented_precision} confirm that, when extended precision is used, the error between the computed target distribution and the actual one decreases proportionally to the tolerance used in the optimization solver.}

The effectiveness of the proposed method is also measured in terms of the  coefficient $\widehat{r}$ (see Proposition \ref{pro:interpretability}), which lies in the interval $[1,2.51]$, confirming that the obtained target centrality can be interpreted as the PageRank of a modified network having teleportation parameter $\widehat{\alpha}$ very close to the teleportation parameter $\alpha$ of the original PageRank problem, see Remark \ref{rema:r_hat} for more details. The comparison of Tables~\ref{tab:S1-PR_tab1} and~\ref{tab:S1-PR_tab2}  and Tables~\ref{tab:S2-PR_tab1} and~\ref{tab:S2-PR_tab2} show counter-intuitive occurrences for the relative norm of the produced perturbations $\Delta$: for a relatively large set of problems, see, e.g., problems \texttt{6], 9], 13], 15]} for S1 and problems \texttt{6], 13]} for S2, the norm of the obtained perturbation for the sparsified version is smaller than the norm obtained in the pure Frobenius norm minimization.

Finally, comparing the results presented in this section with those presented in Sections~\ref{sec:Katz_S1} and~\ref{sec:Katz_S2}, it is interesting to note how assigning the Pagerank centrality to a given network is a substantially more challenging computational problem when compared to Katz centrality assignment problem. This is mainly due to the conditioning of the involved constraint matrices and the consistently denser Cholesky factors of \eqref{eq:schur_IPM} arising in this case. 

\begin{table}[htbp!]\centering
\caption{Pagerank. S1 Scenario. $\beta=1$. The numbering of the test cases corresponds to the matrices in Table~\ref{tab:dataset}. The table reports:  the conditioning of the constraints matrix, the number of IPM iterations,
the elapsed time measured in seconds, the relative norm of the obtained perturbation together with the number of nonzero entries of positive ($+$), negative $(-)$ sign and the number of non-zero elements in the Cholesky factor of equation \eqref{eq:schur_IPM} in the last IPM iteration. The correlation between the obtained centrality measure with the target one is measured in terms of Kendall's $\kappa_\tau$.  In the last column, we report $\widehat{r}$ from Proposition \ref{pro:interpretability}; ``$^*$'' is reported when the matrix $A+\Delta$ has nonnegative diagonal elements and the shift in Proposition \ref{pro:interpretability} is not needed, see also Remark \ref{rema:r_hat}. }\label{tab:S1-PR_tab1}
\begin{tabular}{rllllrrrccc}
\toprule\small
& $\operatorname{cond}(LL^{\top})$ & Iter & T (s) & $\nicefrac{\|\Delta\|_F}{\|A\|_F}$ & $+$  & $-$ & \texttt{nnz Chol.} & $\kappa_\tau$ & $\widehat{r}$ \\
&     &      &       &                &      &     & of \eqref{eq:schur_IPM} & & \\

\midrule
1 ]  & 5.929e+17                                         & 7        & 11.30       & 1.739e-05      & 145810        & 156273          & 20890924     & 1.00 &     1.00  \\    
 2 ]  & 2.236e+20                                         & 19        & 1.53       & 4.640e-01      & 28199        & 31112          & 271830     & 1.00 &     1.62  \\    
 3 ]  & 1.384e+19                                         & 8        & 13.24       & 7.767e-03      & 69251        & 81393          & 22380215     & 1.00 &     $^*$  \\    
 4 ]  & 5.285e+18                                         & 9        & 153.72       & 8.516e-03      & 259045        & 300677          & 220271238     & 1.00 &     $^*$  \\    
 5 ]  & 7.658e+18                                         & 6        & 3.98       & 3.016e-03      & 54770        & 55282          & 10292515     & 1.00 &     1.01  \\    
 6 ]  & 1.486e+38                                         & 21        & 13.32       & 1.130e-07      & 211926        & 191241          & 2391378     & 0.99 &     1.00  \\    
 7 ]  & 3.154e+18                                         & 6        & 3.85       & 4.676e-03      & 56301        & 56754          & 8337719     & 1.00 &     1.01  \\    
 8 ]  & 4.229e+18                                         & 7        & 0.32       & 2.231e-02      & 16363        & 16674          & 218301     & 1.00 &     1.01  \\    
 9 ]  & 2.071e+37                                         & 22        & 3.82       & 1.312e-06      & 73942        & 66212          & 731975     & 0.99 &     1.00  \\    
 10 ]  & 2.282e+19                                         & 6        & 6.37       & 3.415e-02      & 218455        & 240231          & 7742048     & 1.00 &     1.02  \\    
 11 ]  & 3.389e+18                                         & 7        & 0.98       & 3.688e-02      & 37603        & 39158          & 951062     & 1.00 &     1.02  \\    
 12 ]  & 3.193e+18                                         & 9        & 0.13       & 5.869e-02      & 2576        & 3087          & 127390     & 1.00 &     1.11  \\    
 13 ]  & 5.606e+37                                         & 22        & 9.34       & 2.098e-07      & 151151        & 132152          & 1473042     & 0.99 &     1.00  \\    
 14 ]  & 3.741e+18                                         & 6        & 0.14       & 2.352e-02      & 9090        & 9219          & 104772     & 1.00 &     1.07  \\    
 15 ]  & 3.572e+37                                         & 22        & 5.77       & 2.713e-07      & 99620        & 88483          & 995357     & 0.99 &     1.00  \\    

\bottomrule
\end{tabular}
\end{table}

\begin{table}[htbp!]\centering
\caption{Pagerank. S1 Scenario. $(1-\beta)/\beta=10$. The numbering of the test cases corresponds to the matrices in Table~\ref{tab:dataset}. The table reports:  the conditioning of the constraints matrix, the number of IPM iterations,
the elapsed time measured in seconds, the relative norm of the obtained perturbation together with the number of nonzero entries of positive ($+$), negative $(-)$ sign and the number of non-zero elements in the Cholesky factor of equation \eqref{eq:schur_IPM} in the last IPM iteration. The correlation between the obtained
centrality measure with the target one is measured in terms of Kendall's $\kappa_\tau$.  In the last column, we report $\widehat{r}$ from Proposition \ref{pro:interpretability}.}\label{tab:S1-PR_tab2}
\begin{tabular}{rllllrrrccc}
\toprule
& $\operatorname{cond}(LL^{\top})$ & Iter & T (s) & $\nicefrac{\|\Delta\|_F}{\|A\|_F}$ & $+$  & $-$ & \texttt{nnz Chol.} & $\kappa_\tau$ & $\widehat{r}$ \\
&     &      &       &                &      &     & of \eqref{eq:schur_IPM} & & \\

\midrule
  1 ]  & 3.133e+15                                       & 16       & 45.24    & 1.408e-06      & 65371        & 65263          & 21957743  &   1.00  &    1.00 \\   
 2 ]  & 1.136e+20                                       & 27       & 6.03    & 8.334e-01      & 5180        & 9901          & 412477  &   0.99  &    1.70 \\   
 3 ]  & 2.567e+17                                       & 23       & 42.58    & 2.254e-02      & 2837        & 4014          & 23352601  &   1.00  &    1.09 \\   
 4 ]  & 5.618e+18                                       & 29       & 520.11    & 2.347e-02      & 9378        & 14552          & 222970119  &   1.00  &    1.05 \\   
 5 ]  & 7.202e+16                                       & 30       & 28.11    & 4.342e-03      & 6485        & 6998          & 10553033  &   1.00  &    1.01 \\   
 6 ]  & 7.579e+37                                       & 20       & 34.53    & 1.266e-16      & 104        & 86          & 3128206  &   0.99  &    1.00 \\   
 7 ]  & 3.814e+16                                       & 28       & 24.65    & 7.915e-03      & 5100        & 5466          & 9514456  &   1.00  &    1.01 \\   
 8 ]  & 8.953e+16                                       & 24       & 2.49    & 4.427e-02      & 1458        & 1709          & 309461  &   1.00  &    1.04 \\   
 9 ]  & 7.029e+35                                       & 19       & 7.69    & 1.911e-16      & 99        & 122          & 1049107  &   0.99  &    1.00 \\   
 10 ]  & 1.822e+17                                       & 32       & 72.17    & 6.627e-02      & 31375        & 49225          & 8833162  &   1.00  &    1.13 \\   
 11 ]  & 1.067e+17                                       & 27       & 7.53    & 7.329e-02      & 5744        & 7829          & 1164779  &   1.00  &    1.04 \\   
 12 ]  & 5.441e+17                                       & 15       & 0.40    & 9.276e-02      & 217        & 325          & 131183  &   1.00  &    1.27 \\   
 13 ]  & 1.501e+37                                       & 20       & 20.63    & 1.237e-16      & 179        & 232          & 2020942  &   0.99  &    1.00 \\   
 14 ]  & 8.706e+16                                       & 23       & 1.21    & 3.899e-02      & 2327        & 2417          & 145116  &   1.00  &    1.07 \\   
 15 ]  & 1.005e+38                                       & 20       & 9.66    & 1.278e-16      & 158        & 416          & 1382293  &   0.99  &    1.00 \\

\bottomrule
\end{tabular}
\end{table}

\begin{table}[htbp]\centering
\caption{Pagerank. S2 Scenario. $\beta=1$. The numbering of the test cases corresponds to the matrices in Table~\ref{tab:dataset}. The table reports:  the conditioning of the constraints matrix, the number of IPM iterations,
the elapsed time measured in seconds, the relative norm of the obtained perturbation together with the number of nonzero entries of positive ($+$), negative $(-)$ sign and the number of non-zero elements in the Cholesky factor of equation \eqref{eq:schur_IPM} in the last IPM iteration. The correlation between the obtained
centrality measure with the target one is measured in terms of Kendall's $\kappa_\tau$. In the last column, we report $\widehat{r}$ from Proposition \ref{pro:interpretability}; ``$^*$'' is reported when the matrix $A+\Delta$ has nonnegative diagonal elements and the shift in Proposition \ref{pro:interpretability} is not needed, see also Remark \ref{rema:r_hat}.}\label{tab:S2-PR_tab1}
\begin{tabular}{rllllrrrccc}
\toprule
& $\operatorname{cond}(LL^{\top})$ & Iter & T (s) & $\nicefrac{\|\Delta\|_F}{\|A\|_F}$ & $+$  & $-$ & \texttt{nnz Chol.} & $\kappa_\tau$ & $\widehat{r}$ \\
&     &      &       &                &      &     & of \eqref{eq:schur_IPM} & & \\

\midrule
1 ]  & 5.651e+19                                        & 7           & 11.12   & 1.459e-02      & 167598        & 134627          & 20890924   &  1.00  &  1.00 \\    
 2 ]  & 3.368e+21                                        & 18           & 1.44   & 4.954e-01      & 36909        & 22401          & 271830   &  1.00  &  2.51 \\    
 3 ]  & 1.735e+20                                        & 10           & 16.36   & 1.698e-02      & 71556        & 79079          & 22380215   &  1.00  &  $^*$ \\    
 4 ]  & 8.941e+18                                        & 11           & 183.88   & 1.834e-02      & 273699        & 286016          & 220271238   &  1.00  &  $^*$ \\    
 5 ]  & 7.418e+18                                        & 6           & 4.05   & 4.364e-02      & 55110        & 55105          & 10292515   &  1.00  &  1.06 \\    
 6 ]  & 1.856e+37                                        & 21           & 12.98   & 6.638e-07      & 204557        & 180803          & 2391523   &  0.87  &  1.00 \\    
 7 ]  & 4.691e+18                                        & 7           & 4.09   & 1.117e-02      & 57142        & 56150          & 8337719   &  1.00  &  1.01 \\    
 8 ]  & 2.801e+18                                        & 6           & 0.30   & 5.065e-02      & 16776        & 16156          & 218301   &  1.00  &  1.02 \\    
 9 ]  & 1.136e+36                                        & 22           & 3.98   & 4.170e-06      & 72758        & 67240          & 731975   &  0.87  &  1.00 \\    
 10 ]  & 1.862e+21                                        & 8           & 8.79   & 9.936e-02      & 234054        & 224630          & 7742048   &  1.00  &  1.05 \\    
 11 ]  & 8.015e+19                                        & 7           & 0.84   & 8.065e-02      & 39085        & 37694          & 951062   &  1.00  &  1.08 \\    
 12 ]  & 8.825e+18                                        & 9           & 0.14   & 1.631e-01      & 2816        & 2847          & 127390   &  1.00  &  1.22 \\    
 13 ]  & 1.515e+37                                        & 22           & 9.03   & 1.650e-06      & 146765        & 133209          & 1472809   &  0.87  &  1.00 \\    
 14 ]  & 8.591e+18                                        & 6           & 0.14   & 3.612e-02      & 8523        & 8227          & 104772   &  1.00  &  1.12 \\    
 15 ]  & 9.993e+36                                        & 21           & 5.31   & 2.581e-06      & 94505        & 91419          & 995261   &  0.86  &  1.00 \\    

\bottomrule
\end{tabular}
\end{table}

\begin{table}[htbp]\centering
\caption{Pagerank. S2 Scenario. $(1-\beta)/\beta=100$. The numbering of the test cases corresponds to the matrices in Table~\ref{tab:dataset}. The table reports:  the conditioning of the constraints matrix, the number of IPM iterations,
the elapsed time measured in seconds, the relative norm of the obtained perturbation together with the number of nonzero entries of positive ($+$), negative $(-)$ sign and the number of non-zero elements in the Cholesky factor of equation \eqref{eq:schur_IPM} in the last IPM iteration. The correlation between the obtained
centrality measure with the target one is measured in terms of Kendall's $\kappa_\tau$.  In the last column, we report $\widehat{r}$ from Proposition \ref{pro:interpretability}; ``$^*$'' is reported when the matrix $A+\Delta$ has nonnegative diagonal elements and the shift in Proposition \ref{pro:interpretability} is not needed, see also Remark \ref{rema:r_hat}.}\label{tab:S2-PR_tab2}
\begin{tabular}{rllllrrrccc}
\toprule
& $\operatorname{cond}(LL^{\top})$ & Iter & T (s) & $\nicefrac{\|\Delta\|_F}{\|A\|_F}$ & $+$  & $-$ & \texttt{nnz Chol.} & $\kappa_\tau$ & $\widehat{r}$ \\
&     &      &       &                &      &     & of \eqref{eq:schur_IPM} & & \\

\midrule
1 ]  & 3.793e+15                                        & 40        & 111.37   & 6.573e-02      & 52098        & 51119          & 21957743    &  1.00    &  1.00 \\    
 2 ]  & 3.291e+21                                        & 29        & 6.44   & 6.486e-01      & 4176        & 5469          & 412477    &  1.00    &  2.51 \\    
 3 ]  & 6.681e+17                                        & 31        & 56.13   & 3.227e-02      & 1974        & 2261          & 23352601    &  1.00    &  $^*$ \\    
 4 ]  & 5.070e+18                                        & 33        & 594.38   & 3.626e-02      & 8814        & 10254          & 222970119    &  1.00    &  $^*$ \\    
 5 ]  & 1.036e+17                                        & 30        & 28.99   & 6.616e-02      & 7902        & 9589          & 10553033    &  1.00    &  1.07 \\    
 6 ]  & 1.188e+39                                        & 20        & 34.38   & 1.235e-16      & 77        & 83          & 3127868    &  0.87    &  1.00 \\    
 7 ]  & 4.017e+16                                        & 28        & 23.58   & 2.037e-02      & 4306        & 4591          & 9514456    &  1.00    &  1.01 \\    
 8 ]  & 7.890e+16                                        & 14        & 1.56   & 1.163e-01      & 189        & 708          & 309461    &  1.00    &  1.16 \\    
 9 ]  & 6.041e+35                                        & 27        & 11.46   & 1.514e-06      & 48        & 40          & 1044053    &  0.87    &  1.00 \\    
 10 ]  & 4.907e+17                                        & 37        & 88.26   & 1.613e-01      & 25733        & 34290          & 8833162    &  1.00    &  1.37 \\    
 11 ]  & 2.801e+17                                        & 44        & 12.50   & 1.429e-01      & 8067        & 7876          & 1164779    &  1.00    &  1.28 \\    
 12 ]  & 1.753e+18                                        & 14        & 0.37   & 2.572e-01      & 226        & 261          & 131183    &  1.00    &  1.98 \\    
 13 ]  & 1.823e+37                                        & 20        & 22.30   & 2.322e-16      & 201        & 188          & 2020506    &  0.87    &  1.00 \\    
 14 ]  & 8.752e+16                                        & 21        & 1.05   & 4.908e-02      & 744        & 842          & 145116    &  1.00    &  1.16 \\    
 15 ]  & 8.733e+35                                        & 37        & 21.46   & 2.086e-06      & 201        & 251          & 1378450    &  0.86    &  1.00 \\    

\bottomrule
\end{tabular}
\end{table}

\section{Conclusions}
\label{sec:conclusions}

In this paper, we addressed the problem of determining the minimum norm perturbation required to enforce desired centrality measures, specifically Katz and PageRank centralities, in complex networks. Our approach utilizes scalable optimization procedures to compute these perturbations, allowing for targeted modifications of the network's structure. Through numerical experiments on real-world networks, we demonstrated the effectiveness of our method in achieving specific centrality objectives while preserving network sparsity structure. Future work will explore extending this framework to other centrality measures and further improving the computational efficiency of our proposal {using specialised tools, e.g., specialised preconditioned iterative methods, for the solution of linear systems arising from the Interior Point Method.}

\appendix

\section*{Acknowledgments}
We thank the two anonymous referees for their valuable comments and suggestions, which have improved the quality of this paper.

\bibliographystyle{siamplain}
\bibliography{references}

\cleardoublepage

\section{Kendall's  correlation coefficient} \label{sec:kendal}

Kendall's $\kappa_\tau$ is a non-parametric statistic commonly used to compare two rankings by quantifying the degree of agreement between them. Suppose we have two rankings, $r_1$ and $r_2$, of the same set of $n$ items. For any pair of distinct items $(i,j)$, if the order is the same in both rankings (i.e., either $r_1(i) < r_1(j)$ and $r_2(i) < r_2(j)$, or $r_1(i) > r_1(j)$ and $r_2(i) > r_2(j)$), the pair is considered \emph{concordant}; if the order differs, the pair is \emph{discordant}. Kendall's $\kappa_\tau$ is then defined as
\[
\kappa_\tau = \frac{n_c - n_d}{\binom{n}{2}},
\]
where $n_c$ and $n_d$ denote the number of concordant and discordant pairs, respectively, and $\binom{n}{2}$ is the total number of pairs. The coefficient ranges from $-1$ (complete disagreement) to $1$ (complete agreement), with $0$ indicating no association between the rankings. This measure is especially useful when the focus is on the ordinal relationship between items, making it robust in contexts where only the order, rather than the magnitude, is of interest. 

\begin{example}
	To have an example we may consider two orderings:
	\begin{align*}  
		\mathbf{r}_1 = &\; \begin{bmatrix}    1 &     2 &    3 &    4  &   5  &   6 &    7 &    8  &   9 &   10 \end{bmatrix}^\top, \\
		\mathbf{r}_2 =&\;  \begin{bmatrix}   1  &   3 &    2   &  4 &    6  &   5 &    8  &   7  &   9  &  10 \end{bmatrix}^\top,
	\end{align*}
	In this case, after computing the number of concordant and discordant pairs, one obtains 
	\[
	\kappa_\tau = \text{\lstinline[style=Matlab-editor]{corr(r1,r2,'Type','Kendall')}} \approx 0.8667,
	\]
	which indicates a high degree of agreement between the two orderings. It is important to note that Kendall's tau treats every mismatch equally, regardless of its position in the list. Whether the disagreement occurs among the top-ranked elements or the bottom-ranked ones, the contribution to the overall coefficient remains the same. This means that the metric does not capture the relative importance of errors based on their positions in the list.
\end{example}

\appendixnotitle{Additional numerical experiments} 
{In this section, we present additional experiments that complement those discussed in the paper. Specifically, in Table~\ref{tab:gurobi_vs_ipm}, we report the results of experiments conducted on instances with the most ill-conditioned constraint matrix $LL^\top$ when solved using the commercial GUROBI solver\footnote{See \href{https://www.gurobi.com/}{www.gurobi.com}.} and the Interior Point Method based on~\cite{MR4594481,CIPOLLA2023}.  The comparison of these results highlights that, regardless of whether the problems are strictly or weakly convex, the optimizer implemented in GUROBI---which does not exploit proximal regularization as in PS-IPM---fails to progress in optimizing the given objective function, as evidenced by comparing the obtained objective values in columns $\nicefrac{\| \Delta \|_F}{\| A \|_F}$ and $\| \Delta \|_F$.  
}

\begin{table}[htbp]
	\centering
	\setlength{\tabcolsep}{0.35em}
	\small
	
	\begin{tabular}{llcccccccc}
		\toprule
		& & \multicolumn{7}{c}{IPM} \\
		\midrule
		Matrix & $\beta$ & Time & $\kappa_\tau$ & $\hat{r}$ & $\| \Delta \|_F$ & $\nicefrac{\| \Delta \|_F}{\| A \|_F}$ nnz & Iter. \\
		\toprule
		PGPgiantcompo & 1.0000 & 2.11e+00 & 1.00 & 1.62 & 1.02e+02 & 4.64e-01 & 59312 & 19 \\
		& 0.5000 & 6.16e+00 & 1.00 & 1.53 & 1.06e+02 & 4.79e-01 & 52980 & 24 \\
		& 0.0099 & 7.04e+00 & 0.99 & 1.70 & 1.84e+02 & 8.33e-01 & 49969 & 27 \\
		& 0.0050 & 7.33e+00 & 0.99 & 1.86 & 2.13e+02 & 9.64e-01 & 51854 & 29 \\
		\toprule
		ct2010 & 1.0000 & 1.44e+01 & 0.99 & 1.00 & 1.20e-02 & 3.66e-10 & 399807 & 21 \\
		& 0.5000 & 3.72e+01 & 0.99 & 1.00 & 1.34e-07 & 4.10e-15 & 134019 & 22 \\
		& 0.0099 & 3.27e+01 & 0.99 & 1.00 & 4.09e-09 & 1.25e-16 & 77908 & 20 \\
		& 0.0050 & 3.27e+01 & 0.99 & 1.00 & 4.13e-09 & 1.26e-16 & 68009 & 20 \\
		\toprule
		nh2010 & 1.0000 & 8.74e+00 & 0.99 & 1.00 & 1.05e-02 & 2.46e-10 & 282566 & 22 \\
		& 0.5000 & 2.26e+01 & 0.99 & 1.00 & 1.20e-05 & 2.81e-13 & 96752 & 21 \\
		& 0.0099 & 2.10e+01 & 0.99 & 1.00 & 5.27e-09 & 1.24e-16 & 49659 & 20 \\
		& 0.0050 & 1.88e+01 & 0.99 & 1.00 & 5.75e-09 & 1.35e-16 & 49899 & 19 \\
		\toprule
		vt2010 & 1.0000 & 6.24e+00 & 0.99 & 1.00 & 1.23e-02 & 2.90e-10 & 187451 & 22 \\
		& 0.5000 & 1.40e+01 & 0.99 & 1.00 & 4.83e-06 & 1.13e-13 & 40800 & 26 \\
		& 0.0099 & 1.05e+01 & 0.99 & 1.00 & 5.53e-09 & 1.30e-16 & 34743 & 20 \\
		& 0.0050 & 1.05e+01 & 0.99 & 1.00 & 5.94e-09 & 1.39e-16 & 37013 & 20 \\
		\bottomrule
	\end{tabular}
	\medskip
	
	\begin{tabular}{llcccccccc}
		\toprule
		& & \multicolumn{7}{c}{GUROBI} \\
		\midrule
		Matrix & $\beta$ & Time & $\kappa_\tau$ & $\hat{r}$   & $\| \Delta \|_F$ & $\nicefrac{\| \Delta \|_F}{\| A \|_F}$  & nnz & Iter. \\
		\midrule
		PGPgiantcompo & 1.0000 & 4.49e-01 & 1.00 & 1.62 & 1.02e+02 & 4.64e-01 & 59312 & 17 \\
		& 0.5000 & 1.04e+00 & 1.00 & 1.47 & 1.06e+02 & 4.79e-01 & 59312 & 21 \\
		& 0.0099 & 1.17e+00 & 1.00 & 1.61 & 1.24e+02 & 5.63e-01 & 59312 & 26 \\
		& 0.0050 & 1.26e+00 & 1.00 & 1.70 & 1.29e+02 & 5.83e-01 & 59311 & 27 \\
		\midrule
		ct2010 & 1.0000 & 2.93e+00 & 1.00 & 3.59 & 6.49e+06 & 1.98e-01 & 403930 & 20 \\
		& 0.5000 & 6.64e+00 & 1.00 & 4.95 & 1.23e+07 & 3.75e-01 & 403930 & 17 \\
		& 0.0099 & 6.14e+00 & 1.00 & 4.95 & 1.23e+07 & 3.75e-01 & 403930 & 16 \\
		& 0.0050 & 6.30e+00 & 1.00 & 4.94 & 1.23e+07 & 3.75e-01 & 403927 & 17 \\
		\midrule
		nh2010 & 1.0000 & 2.22e+00 & 1.00 & 6.05 & 1.40e+07 & 3.29e-01 & 283387 & 21 \\
		& 0.5000 & 4.92e+00 & 1.00 & 7.45 & 1.95e+07 & 4.59e-01 & 283387 & 20 \\
		& 0.0099 & 4.57e+00 & 1.00 & 7.45 & 1.95e+07 & 4.59e-01 & 283387 & 19 \\
		& 0.0050 & 4.45e+00 & 1.00 & 7.45 & 1.95e+07 & 4.59e-01 & 283386 & 19 \\
		\midrule
		vt2010 & 1.0000 & 1.20e+00 & 1.00 & 5.20 & 9.77e+06 & 2.29e-01 & 188178 & 21 \\
		& 0.5000 & 2.92e+00 & 1.00 & 7.33 & 1.69e+07 & 3.96e-01 & 188178 & 18 \\
		& 0.0099 & 2.80e+00 & 1.00 & 7.33 & 1.69e+07 & 3.96e-01 & 188178 & 17 \\
		& 0.0050 & 2.63e+00 & 1.00 & 7.32 & 1.68e+07 & 3.96e-01 & 188176 & 17 \\
		\bottomrule
	\end{tabular}
	
	\caption{Comparison of the solution obtained for solving the PageRank modification problem from Section~\ref{sec:pagerank}.}
	\label{tab:gurobi_vs_ipm}
\end{table}

\section{Experiments with augmented precision} \label{sec:augmented_precision}
To complete this overview of the optimizer, we also consider the case where the precision is pushed beyond double precision. For this, we utilize the ADVANPIX Multiprecision Computing Toolbox for MATLAB\footnote{See \href{https://www.advanpix.com/}{www.advanpix.com}.}, \texttt{v.4.8.0}, which extends MATLAB's capabilities to support arbitrary precision arithmetic.

Specifically, the number of digits used in computations can be set via \lstinline[style=Matlab-editor]{mp.Digits(71);}, which corresponds to the octuple-precision floating-point format (1 bit for the sign, 19 bits for the exponent, and 236 explicitly stored bits for the significand).

Once this precision is set, various MATLAB functions, such as eigenvalue computation, solution of linear systems, and arithmetic operations, are automatically overloaded to use the new arithmetic for variables of class \lstinline[style=Matlab-editor]{mp}. 

With this structure we can then try to push the optimization algorithm to obtain a higher precision on the obtained ranking vector for both the Katz (Section~\ref{sec:katz}) and PageRank (Section~\ref{sec:pagerank}) problems. In Figure~\ref{fig:katz} we start from an example with the Katz ranking optimization problem for $\beta=1$---i.e., no $\|\cdot\|_1$-penalty term---and $\beta = 0.5$---i.e., convex combination of the $\|\cdot\|_2$ and $\|\cdot\|_1$-penalties. We use the \texttt{Newman/karate} network. This is a small network with 34 nodes and 156 edges. We select a $\widehat{\boldsymbol{\mu}}$ target vector equal to the Katz centrality vector $\boldsymbol{\mu}$ obtained for $\alpha = \nicefrac{\rho(A)}{2}$ in all the entries except for $\widehat{\boldsymbol{\mu}}_{33} = \nicefrac{\boldsymbol{\mu}_{34}}{2}$ and $\widehat{\boldsymbol{\mu}}_{5} = 1.5\times\boldsymbol{\mu}_{5}$.  
\begin{figure}[htbp]
	\centering
	\definecolor{mycolor1}{rgb}{0.00000,0.44700,0.74100}%
\definecolor{mycolor2}{rgb}{0.85000,0.32500,0.09800}%
\definecolor{mycolor3}{rgb}{0.92900,0.69400,0.12500}%
\definecolor{mycolor4}{rgb}{0.49400,0.18400,0.55600}%
\begin{tikzpicture}

\begin{axis}[%
width=0.387\columnwidth,
height=0.303\columnwidth,
at={(0\columnwidth,0\columnwidth)},
scale only axis,
xmin=1,
xmax=13,
xtick={1,2,3,4,5,6,7,8,9,10,11,12,13},
xticklabels={{1.0e-09},{1.0e-12},{1.0e-15},{1.0e-18},{1.0e-21},{1.0e-24},{1.0e-27},{1.0e-30},{1.0e-33},{1.0e-36},{1.0e-39},{1.0e-42},{1.0e-45},{}},
xticklabel style={rotate=90},
ymode=log,
ymin=1e-45,
ymax=1e-09,
yminorticks=true,
ylabel style={font=\color{white!15!black}},
axis background/.style={fill=white},
title style={font=\bfseries},
title={$\beta = 1.00$},
legend style={legend cell align=left, align=left, draw=none,font=\footnotesize,fill=none}
]
\addplot [color=mycolor1, line width=2.0pt]
  table[row sep=crcr]{%
1	5.1073789093672e-19\\
2	5.10751246708677e-19\\
3	2.03380163937255e-16\\
4	2.03380163937255e-16\\
5	5.08450587190156e-21\\
6	1.271126911365e-25\\
7	1.271126911365e-25\\
8	3.17781838694165e-30\\
9	7.94454873881489e-35\\
10	7.94454873881489e-35\\
11	1.98613787760339e-39\\
12	4.9653464263438e-44\\
13	4.9653464263438e-44\\
};
\addlegendentry{$\|\boldsymbol{\mu} - \widehat{\boldsymbol{\mu}}\|_2$ ep}

\addplot [color=mycolor2, line width=2.0pt]
  table[row sep=crcr]{%
1	4.95668339223654e-20\\
2	4.95681300927539e-20\\
3	1.97379340516374e-17\\
4	1.97379340516374e-17\\
5	4.93448523405251e-22\\
6	1.23362173882032e-26\\
7	1.23362173882032e-26\\
8	3.08405542287231e-31\\
9	7.71014124686834e-36\\
10	7.71014124686834e-36\\
11	1.92753598417243e-40\\
12	4.81884164165308e-45\\
13	4.81884164165308e-45\\
};
\addlegendentry{$\nicefrac{\|\boldsymbol{\mu} - \widehat{\boldsymbol{\mu}}\|_2}{\|\widehat{\boldsymbol{\mu}}\|_2}$ ep}

\addplot [color=mycolor3, line width=2.0pt]
  table[row sep=crcr]{%
1	2.69214776093699e-15\\
2	1.23629203826026e-15\\
3	1.50597978157424e-15\\
};
\addlegendentry{$\|\boldsymbol{\mu} - \widehat{\boldsymbol{\mu}}\|_2$ dp}

\addplot [color=mycolor4, dashed, line width=2.0pt]
  table[row sep=crcr]{%
1	1e-09\\
2	1e-12\\
3	1e-15\\
4	1e-18\\
5	1e-21\\
6	1e-24\\
7	1e-27\\
8	1e-30\\
9	1e-33\\
10	1e-36\\
11	1e-39\\
12	1e-42\\
13	1e-45\\
};
\addlegendentry{Tol.}

\end{axis}

\begin{axis}[%
width=0.387\columnwidth,
height=0.303\columnwidth,
at={(0.478\columnwidth,0\columnwidth)},
scale only axis,
xmin=1,
xmax=13,
xtick={1,2,3,4,5,6,7,8,9,10,11,12,13},
xticklabels={{1.0e-09},{1.0e-12},{1.0e-15},{1.0e-18},{1.0e-21},{1.0e-24},{1.0e-27},{1.0e-30},{1.0e-33},{1.0e-36},{1.0e-39},{1.0e-42},{1.0e-45},{}},
xticklabel style={rotate=90},
ymode=log,
ymin=1.38726056759133e-48,
ymax=1e-09,
yminorticks=true,
ylabel style={font=\color{white!15!black}},
axis background/.style={fill=white},
title style={font=\bfseries},
title={$\beta = 0.50$}
]
\addplot [color=mycolor1, line width=2.0pt, forget plot]
  table[row sep=crcr]{%
1	1.36580537305322e-13\\
2	6.62807747510596e-16\\
3	2.14292275385615e-20\\
4	2.14292275385616e-20\\
5	6.76585834269513e-25\\
6	2.06746544334149e-29\\
7	2.06746544334149e-29\\
8	6.14103958315671e-34\\
9	1.78315123995472e-38\\
10	1.78315123995472e-38\\
11	5.08524087693168e-43\\
12	1.42943674308715e-47\\
13	1.42943674308715e-47\\
};
\addplot [color=mycolor2, line width=2.0pt, forget plot]
  table[row sep=crcr]{%
1	1.32550666981533e-14\\
2	6.43251306126126e-17\\
3	2.07969485197263e-21\\
4	2.07969485197264e-21\\
5	6.5662286422405e-26\\
6	2.00646394342091e-30\\
7	2.00646394342091e-30\\
8	5.95984544187097e-35\\
9	1.7305385587741e-39\\
10	1.7305385587741e-39\\
11	4.93519855242787e-44\\
12	1.38726056759133e-48\\
13	1.38726056759133e-48\\
};
\addplot [color=mycolor3, line width=2.0pt, forget plot]
  table[row sep=crcr]{%
1	1.69816949879821e-14\\
2	9.42055475210265e-16\\
3	1.15377761183014e-15\\
};
\addplot [color=mycolor4, dashed, line width=2.0pt, forget plot]
  table[row sep=crcr]{%
1	1e-09\\
2	1e-12\\
3	1e-15\\
4	1e-18\\
5	1e-21\\
6	1e-24\\
7	1e-27\\
8	1e-30\\
9	1e-33\\
10	1e-36\\
11	1e-39\\
12	1e-42\\
13	1e-45\\
};
\end{axis}
\end{tikzpicture}%
	\caption{Solution of the optimization problem for the Katz centrality with Karate. The left panel contains the solution of the problem without $\|\cdot\|_1$-constraints ($\beta=1.0$), the right panel contains the case with the added sparsity constraints ($\beta = 0.5$). The curve are denoted with \texttt{ep} for the extended precision and \texttt{dp} for the double precision.}
	\label{fig:katz}
\end{figure}
We then run the optimization in the extended precision with a request on tolerances ranging from $10^{-9}$ to $10^{-45}$. From the two panels of Figure~\ref{fig:katz} we observe that for both objective functions the IPM optimization is able of getting the requested tolerance.

We repeat an analogous experiment for the PageRank problem, again on the same test matrix \texttt{Newman/karate}. For this case we consider a target PageRank vector $\hat{\boldsymbol{\pi}}$ obtained from the original PageRank vector $\boldsymbol{\pi}$ ($\alpha = 0.8$) by swapping the first ten and the last entries.
\begin{figure}[htbp]
	\centering
	\definecolor{mycolor1}{rgb}{0.00000,0.44700,0.74100}%
\definecolor{mycolor2}{rgb}{0.85000,0.32500,0.09800}%
\definecolor{mycolor3}{rgb}{0.92900,0.69400,0.12500}%
\definecolor{mycolor4}{rgb}{0.49400,0.18400,0.55600}%
\begin{tikzpicture}

\begin{axis}[%
width=0.387\columnwidth,
height=0.303\columnwidth,
at={(0\columnwidth,0\columnwidth)},
scale only axis,
xmin=1,
xmax=13,
xtick={1,2,3,4,5,6,7,8,9,10,11,12,13},
xticklabels={{1.0e-09},{1.0e-12},{1.0e-15},{1.0e-18},{1.0e-21},{1.0e-24},{1.0e-27},{1.0e-30},{1.0e-33},{1.0e-36},{1.0e-39},{1.0e-42},{1.0e-45},{}},
xticklabel style={rotate=90},
ymode=log,
ymin=4.09175135520712e-50,
ymax=1e-09,
yminorticks=true,
ylabel style={font=\color{white!15!black}},
axis background/.style={fill=white},
title style={font=\bfseries},
title={$\beta = 1.00$},
legend style={legend cell align=left, align=left, draw=none, fill=none, font=\footnotesize}
]
\addplot [color=mycolor1, line width=2.0pt]
  table[row sep=crcr]{%
1	9.73656969569328e-15\\
2	9.73656996438544e-15\\
3	1.93310438351333e-18\\
4	4.86542877124973e-23\\
5	4.86542877124973e-23\\
6	1.22511754755895e-27\\
7	3.10860157088444e-32\\
8	3.10860157088444e-32\\
9	8.1959206713837e-37\\
10	2.44932419244125e-41\\
11	2.44932419244125e-41\\
12	9.15424769243729e-46\\
13	4.09175135520712e-50\\
};
\addlegendentry{$\|\boldsymbol{\pi} - \widehat{\boldsymbol{\pi}}\|_2$ - ep}

\addplot [color=mycolor2, line width=2.0pt]
  table[row sep=crcr]{%
1	4.62743914836529e-14\\
2	4.62743927606495e-14\\
3	9.18734542217971e-18\\
4	2.3123621843556e-22\\
5	2.3123621843556e-22\\
6	5.82254025607311e-27\\
7	1.47740580670246e-31\\
8	1.47740580670246e-31\\
9	3.89522443293688e-36\\
10	1.16407513214374e-40\\
11	1.16407513214374e-40\\
12	4.35068257813161e-45\\
13	1.9446613127863e-49\\
};
\addlegendentry{$\nicefrac{\|\boldsymbol{\pi} - \widehat{\boldsymbol{\pi}}\|_2}{\|\widehat{\boldsymbol{\pi}}\|_2}$ - ep} 

\addplot [color=mycolor3, line width=2.0pt]
  table[row sep=crcr]{%
1	8.16740789256382e-15\\
2	8.87333339467788e-15\\
};
\addlegendentry{$\|\boldsymbol{\pi} - \widehat{\boldsymbol{\pi}}\|_2$ - dp}

\addplot [color=mycolor4, dashed, line width=2.0pt]
  table[row sep=crcr]{%
1	1e-09\\
2	1e-12\\
3	1e-15\\
4	1e-18\\
5	1e-21\\
6	1e-24\\
7	1e-27\\
8	1e-30\\
9	1e-33\\
10	1e-36\\
11	1e-39\\
12	1e-42\\
13	1e-45\\
};
\addlegendentry{Tol.}

\end{axis}

\begin{axis}[%
width=0.387\columnwidth,
height=0.303\columnwidth,
at={(0.478\columnwidth,0\columnwidth)},
scale only axis,
xmin=1,
xmax=13,
xtick={1,2,3,4,5,6,7,8,9,10,11,12,13},
xticklabels={{1.0e-09},{1.0e-12},{1.0e-15},{1.0e-18},{1.0e-21},{1.0e-24},{1.0e-27},{1.0e-30},{1.0e-33},{1.0e-36},{1.0e-39},{1.0e-42},{1.0e-45},{}},
xticklabel style={rotate=90},
ymode=log,
ymin=7.80143385788937e-49,
ymax=1e-09,
yminorticks=true,
ylabel style={font=\color{white!15!black}},
axis background/.style={fill=white},
title style={font=\bfseries},
title={$\beta = 0.50$}
]
\addplot [color=mycolor1, line width=2.0pt, forget plot]
  table[row sep=crcr]{%
1	3.15642260186732e-14\\
2	1.19563673346427e-17\\
3	1.19563673344153e-17\\
4	5.66120966667656e-22\\
5	2.09608459525354e-26\\
6	2.09608459525354e-26\\
7	6.99401779829477e-31\\
8	2.21729566487439e-35\\
9	2.21729566487439e-35\\
10	6.92268715582234e-40\\
11	2.22394797127796e-44\\
12	2.22394797127796e-44\\
13	7.80143385788937e-49\\
};
\addplot [color=mycolor2, line width=2.0pt, forget plot]
  table[row sep=crcr]{%
1	1.50013341178326e-13\\
2	5.68242861765107e-17\\
3	5.682428617543e-17\\
4	2.6905680396116e-21\\
5	9.96193137574104e-26\\
6	9.96193137574104e-26\\
7	3.32400350181938e-30\\
8	1.05380037157017e-34\\
9	1.05380037157017e-34\\
10	3.29010262935903e-39\\
11	1.05696197201477e-43\\
12	1.05696197201477e-43\\
13	3.70773912945418e-48\\
};
\addplot [color=mycolor3, line width=2.0pt, forget plot]
  table[row sep=crcr]{%
1	1.16634795666023e-14\\
2	3.51126201096079e-16\\
};
\addplot [color=mycolor4, dashed, line width=2.0pt, forget plot]
  table[row sep=crcr]{%
1	1e-09\\
2	1e-12\\
3	1e-15\\
4	1e-18\\
5	1e-21\\
6	1e-24\\
7	1e-27\\
8	1e-30\\
9	1e-33\\
10	1e-36\\
11	1e-39\\
12	1e-42\\
13	1e-45\\
};
\end{axis}

\end{tikzpicture}%
	
	\caption{Solution of the optimization problem for the PageRank with Karate. The left panel contains the solution of the problem without $\|\cdot\|_1$-constraints ($\beta=1.0$), the right panel contains the case with the added sparsity constraints ($\beta = 0.5$). The curve are denoted with \texttt{ep} for the extended precision and \texttt{dp} for the double precision.}
	\label{fig:pagerank}
\end{figure}
The behavior in Figure~\ref{fig:pagerank} is analogous to the results in Figure~\ref{fig:katz}; in all cases there is convergence of the algorithm to the relevant tolerance.

\end{document}